\pgfplotsset{width=9cm,compat=1.5.1}
\newtheorem{theorem}{Theorem}
\newtheorem{definition}[theorem]{Definition}
\newtheorem{lemma}[theorem]{Lemma}
\newtheorem{corollary}[theorem]{Corollary}
\newtheorem{remark}[theorem]{Remark}
\newtheorem{proposition}[theorem]{Proposition}
\newtheorem{example}[theorem]{Example}
\begin{document}
\title{Sedentariness in quantum walks}

\author{
	Hermie Monterde,\textsuperscript{\!\!1}
}

\maketitle










\begin{abstract}

We formalize the notion of a sedentary vertex and present a relaxation of the concept of a sedentary family of graphs introduced by Godsil [\textit{Linear Algebra Appl}. 614:356-375, 2021]. We provide sufficient conditions for a given vertex in a graph to exhibit sedentariness. We also show that a vertex with at least two twins (vertices that share the same neighbours) is sedentary. We prove that there are infinitely many graphs containing strongly cospectral vertices that are sedentary, which reveals that, even though strong cospectrality is a necessary condition for pretty good state transfer, there are strongly cospectral vertices which resist high probability state transfer to other vertices. Moreover, we derive results about sedentariness in products of graphs which allow us to construct new sedentary families, such as Cartesian powers of complete graphs and stars.

\end{abstract}

\noindent \textbf{Keywords:} quantum walks, sedentary walks, twin vertices, strongly cospectral vertices, adjacency matrix, Laplacian matrix\\
	
\noindent \textbf{MSC2010 Classification:} 
05C50; 
15A18;  
05C22; 
81P45 


\addtocounter{footnote}{1}
\footnotetext{Department of Mathematics, University of Manitoba, Winnipeg, MB, Canada R3T 2N2}

\tableofcontents

\section{Introduction}\label{secINTRO}

A \textit{(continuous-time) quantum walk} (CTQW) on $X$ describes the propagation of quantum states across a quantum spin network modelled by the graph $X$, where the qubits in the spin network and the interactions between them are represented by the vertices and edges of $X$, respectively. If $H$ is a real symmetric matrix that encodes the adjacencies in $X$, i.e., $H_{u,v}=0$ if and only if $[u,v]$ is not an edge in $X$, then the CTQW on $X$ with respect to $H$ is determined by the complex symmetric unitary matrix
\begin{equation}
\label{tM}
U_H(t)=e^{itH}.
\end{equation}
We call $U_H(t)$ and $H$ resp.\ the \textit{transition matrix} and the \textit{Hamiltonian} of the quantum walk. If $H$ in (\ref{tM}) is clear from the context, then we write $U_H(t)$ as $U(t)$. As $U(t)$ is unitary, $\sum_{j\in V(X)}\lvert U(t)_{u,j}\rvert^2=1$ for all $t$, and so $\lvert U(t)_{u,v}\rvert^2$ is interpreted as the probability of quantum state transfer between $u$ and $v$ at time $t$.

The concept of a sedentary family of graphs was introduced by Godsil \cite{SedQW}, which was mainly motivated by the behaviour of quantum walks on complete graphs. Godsil defined a family $\mathscr{F}$ of graphs to be \textit{sedentary} if there is a constant $a>0$ such that for each $X\in \mathscr{F}$ and each vertex $u$ of $X$, we have $\lvert U(t)_{u,u}\rvert\geq 1-\frac{a}{\lvert V(X)\rvert}$ for all $t$. Consequently, $\lvert U(t)_{u,u}\rvert$ tends to 1 for each $u\in V(X)$ as $\lvert V(X)\rvert$ increases. Godsil showed that large classes of strongly regular graphs are sedentary at any vertex. While the main focus in \cite{SedQW} was to study sedentary families of graphs, Godsil also investigated sedentariness of a single vertex in a graph by showing that cones over $k$-regular graphs exhibit varying degrees of sedentariness at their apexes with respect to the adjacency matrix depending on the value $k$. Frigerio and Paris showed that cones are also sedentary at the apex with respect to the Laplacian matrix \cite{chiral}. To the best of our knowledge, no other families of finite graphs are known to exhibit sedentariness.

In order to better understand the notion of single vertex sedentariness and obtain more sedentary families of graphs, we formalize the definition of a sedentary vertex and propose a relaxation of the notion of a sedentary family of graphs. Let $0<C\leq 1$ be a constant. We say that a vertex $u$ of $X$ is \textit{$C$-sedentary} if $\lvert U(t)_{u,u}\rvert\geq C$ for all $t$. We say that a family $\mathscr{F}$ of graphs is \textit{$C$-sedentary} if there exists a real-valued function $f$ satisfying $0< f(s)\leq 1$ for all $s>0$ such that (i) for each $X\in \mathscr{F}$, some vertex $u$ of $X$ is $f(\lvert V(X)\rvert)$-sedentary and (ii) $f(s)\rightarrow C$ as $s$ increases. If $f(s)=1-\frac{a}{s}$ for some $a>0$, then $C=1$, and if we add that each vertex of each $X\in \mathscr{F}$ is $f(\lvert V(X)\rvert)$-sedentary, then the concept of a $C$-sedentary family coincides with Godsil's notion of a sedentary family. If $C=0$ and $\inf_{t>0}\lvert U(t)_{u,u}\rvert=f(\lvert V(X)\rvert)$ for each $X\in\mathscr{F}$, then the family is said to be \textit{quasi-sedentary}, a concept first introduced in this paper. We emphasize that these properties depend on the matrix $H$, which we later choose to be a generalized adjacency matrix or a generalized normalized adjacency matrix of $X$.

The main goal of this paper is to provide sufficient conditions for $C$-sedentariness of a vertex and construct families of graphs that are $C$-sedentary. We prove our main result, which states that by an appropriate choice of a subset $S$ of the eigenvalue support of a vertex $u$, one may be able to show that $u$ is sedentary. We then use this result to establish that for any vertex $u$ in a set of twins $T$, $\lvert U(t)_{u,u}\rvert\geq 1-\frac{2}{\lvert T\rvert}$ for all $t$. Consequently, vertices with at least two twins are sedentary, which allows us to construct new families of graphs that are $C$-sedentary. This includes graphs built from joins, graphs with tails and blow-ups of graphs. We also show that sedentariness is preserved under Cartesian products, which provides another way to construct $C$-sedentary families. Another result, which is rather unexpected, is that there are infinitely many graphs containing strongly cospectral vertices that are sedentary. This reveals that some strongly cospectral vertices resist high probability transfer to other vertices. We also discuss the connection of sedentariness to other types of quantum state transfer. Even though sedentary vertices do not exhibit pretty good state transfer, we show that there are $C$-sedentary and quasi-sedentary families whose each member graph exhibits proper fractional revival at the sedentary vertices. For local uniform mixing, we show that this is only possible for quasi-sedentary families.

Throughout this paper, we assume that $X$ is a connected weighted undirected graph with possible loops but no multiple edges. We denote the vertex and edge sets of $X$ resp.\ by $V(X)$ and $E(X)$, and we allow the edges of $X$ to have nonzero real weights (i.e., an edge can have either positive or negative weight). We denote an edge between vertices $u$ and $v$ by $[u,v]$. We say that $X$ is \textit{simple} if $X$ has no loops, and $X$ is \textit{unweighted} if all edges of $X$ have weight one. For $u\in V(X)$, we denote the set of neighbours of $u$ in $X$ as $N_X(u)$, and the characteristic vector of $u$ as $\textbf{e}_u$, which is a vector with a $1$ on the entry indexed by $u$ and $0$'s elsewhere. The all-ones vector of order $n$, the zero vector of order $n$, the $m\times n$ all-ones matrix, and the $n\times n$ identity matrix are denoted resp. by $\textbf{1}_n$, $\textbf{0}_n$, $\textbf{J}_{m,n}$ and $I_n$. If $m=n$, then we write $\textbf{J}_{m,n}$ as $\textbf{J}_n$, and if the context is clear, then we simply write these matrices resp. as $\textbf{1}$, $\textbf{0}$, $\textbf{J}$ and $I$. We also represent the transpose of $M$ by $M^T$. We denote the simple unweighted empty, cycle, complete, and path graphs on $n$ vertices resp. as $O_n$, $C_n$, $K_n$, and $P_n$. We also denote the simple unweighted complete bipartite graph with partite sets of sizes $n_1,\ldots,n_k$ as $K_{n_1,\ldots,n_k}$.

For two graphs $X$ and $Y$, the \textit{join} $X\vee Y$ is the resulting graph after adding all edges $[u,v]$ of weight one, where $u\in V(X)$ and $v\in V(Y)$, while the \textit{union} $X\cup Y$ is the resulting graph with $V(X\cup Y)=V(X)\cup V(X)$ and $E(X\cup Y)=E(X)\cup E(Y)$. The \textit{Cartesian product} $X\square Y$ is a graph with vertex set $V(X)\times V(Y)$ where $(u,x)$ and $(v,y)$ are adjacent in $X\square Y$ if either $u=v$ and $[x,y]$ is an edge in $Y$ or $x=y$ and $[u,v]$ is an edge in $X$. The weight of the edge between $(u,x)$ and $(v,y)$ is equal to the weight of $[u,v]$ if $x=y$ and $[x,y]$ if $u=v$. The \textit{direct product} $X\times Y$ is the graph with vertex set $V(X)\times V(Y)$ where $(u,x)$ and $(v,y)$ are adjacent in $X\times Y$ if $[u,v]$ and $[x,y]$ are edges resp. in $X$ and $Y$. The weight of the edge between $(u,x)$ and $(v,y)$ is equal to the product of the weights of the edges $[u,v]$ and $[x,y]$.

We define the adjacency matrix $A(X)$ of $X$ entrywise as
\begin{equation}
A(X)_{u,v}=
\begin{cases}
 \omega_{u,v}, &\text{if $u$ and $v$ are adjacent}\\
 0, &\text{otherwise},
\end{cases}
\end{equation}
where $0\neq \omega_{u,v}\in\mathbb{R}$ is the weight of $[u,v]$. The degree matrix $D(X)$ of $X$ is the diagonal matrix of vertex degrees of $X$, where $\operatorname{deg}(u)=2\omega_{u,u}+\sum_{j\neq u}\omega_{u,j}$ for each $u\in V(X)$. We say that $X$ is weighted-regular if $D(X)$ is a scalar multiple of the identity matrix. As $X$ is weighted, it is possible that $\operatorname{deg}(u)=0$ without $u$ being isolated. Assuming that $\operatorname{deg}(u)\geq 0$ for all $u\in V(X)$, we define $D(X)^{-\frac{1}{2}}$ as the diagonal matrix whose $(u,u)$ entry is $1/\sqrt{\operatorname{deg}(u)}$ if $\operatorname{deg}(u)>0$ and $0$ otherwise.

Let $a,b,c\in\mathbb{R}$ with $a\neq 0$. A matrix of the form $\mathbf{A}(X)= c I+b D(X)+a A(X)$ is called a \textit{generalized adjacency matrix} $\mathbf{A}(X)$ of $X$, and a matrix of the form $ \mathcal{A}(X)=b I+a D(X)^{-\frac{1}{2}}A(X)D(X)^{-\frac{1}{2}}$ is called a \textit{generalized normalized adjacency matrix} $\mathcal{A}(X)$ of $X$. These two matrices were first studied in \cite{MonterdeELA} in the context of quantum state transfer (in particular, in relation to the concept of strong cospectrality). We consider these two types of matrices, which are generalizations of well-known matrices associated to graphs. Indeed, if $c=0$, $b=1$ and $a=- 1$, then $\mathbf{A}$ becomes the \textit{Laplacian matrix} $L(X)$ of $X$, while $\mathcal{A}(X)$ becomes the \textit{normalized Laplacian} matrix $\mathcal{L}(X)$ of $X$. Since the quantum walks determined by $b I+aH$ and $H$ are equivalent, we simplify the discussion by considering the matrices
\begin{equation}
\label{gen}
\mathbf{A}(X)=\alpha D(X)+A(X)\quad \text{and}\quad \mathcal{A}(X)=D(X)^{-\frac{1}{2}}A(X)D(X)^{-\frac{1}{2}},
\end{equation}
and note that the quantum walks determined by $\textbf{A}$ are equivalent for all $\alpha\in\mathbb{R}$ whenever the graph is weighted-regular. We use $M(X)$ to denote $\textbf{A}(X)$ or $\mathcal{A}(X)$, and use $H=M(X)$ in (\ref{tM}). If the context is clear, then we write $M(X)$, $A(X)$, $L(X)$, $\mathcal{L}(X)$ and $D(X)$ resp. as $M$, $A$, $L$, $\mathcal{L}$ and $D$. Finally, if $U_{X\square Y}(t)$ is the transition matrix of $X\square Y$ with respect to $\mathbf{A}$, then it is known that
\begin{equation}
\label{Ucart}
\begin{split}
U_{X\square Y}(t)=U_X(t)\otimes U_{Y}(t),
\end{split}
\end{equation}
while if $U_{X\times Y}(t)$ is the transition matrix of $X\times Y$ with respect to $\mathcal{A}$, then $U_{X\times Y}(t)=U_X(t)\otimes U_{Y}(t)$ whenever $X$ and $Y$ are simple. Here, $A\otimes B$ denotes the Kronecker product of matrices $A$ and $B$.

\section{Sedentariness}\label{secSed}

We begin with the definition of a sedentary vertex.

\begin{definition}
\label{def}
We say that vertex $u$ of $X$ is \textit{$C$-sedentary} if for some constant $0<C\leq 1$,
\begin{equation}
\label{inf}
\inf_{t>0}\lvert U_M(t)_{u,u}\rvert\geq C.
\end{equation}
If equality holds in (\ref{inf}), then we say that $u$ is \textit{sharply} $C$-sedentary, while if the infimum in (\ref{inf}) is attained for some $t>0$, then we say that $u$ is \textit{tightly} $C$-sedentary.
\end{definition}
We also say that $u$ is \textit{not sedentary} if $\inf_{t>0}\lvert U_M(t)_{u,u}\rvert=0$. Note that for a sharply $C$-sedentary vertex, $C$ is the best lower bound one can get for $\lvert U_M(t)_{u,u}\rvert$ for all $t$. It is also clear that a tightly sedentary vertex is sharply sedentary, but the converse is not true. If $C$ is not important, then we resp. say sedentary, sharply sedentary, and tightly sedentary. Sedentariness of $X$ at $u$ implies that $\lvert U_M(t)_{u,u}\rvert$ is bounded away from $0$, and as a result, the quantum state initially at vertex $u$ tends to stay at $u$.

As $M$ is real symmetric, it admits a spectral decomposition $M=\sum_{j}\lambda_jE_j$, and so we can write (\ref{tM}) as
\begin{equation*}
U_M(t)=\sum_{j}e^{it\lambda_j}E_j,
\end{equation*}
where the $\lambda_j$'s are the distinct eigenvalues of $M$ and $E_j$ is the orthogonal projection matrix associated with $\lambda_j$. The \textit{eigenvalue support} of vertex $u$ with respect to $M$ is the set
$\sigma_u(M)=\{\lambda_j:E_j\textbf{e}_u\neq \textbf{0}\}$. We say that two vertices $u$ and $v$ are \textit{cospectral} if $(E_j)_{u,u}=(E_j)_{v,v}$ for each $j$. It is immediate that if $X$ has an automorphism mapping $u$ to $v$, then they are cospectral.

Let $S_1$ and $S_2$ be two non-empty disjoint proper subsets of $V(X)$. Order the vertices of $X$ in a way that $S_1$ comes first, followed by $S_2$ and then $V(X)\backslash (S_1\cup S_2)$. We say that there is \textit{pretty good state transfer} (PGST) from $S_1$ and $S_2$ if for each $\epsilon>0$, there exists a time $t_{\epsilon}$ such that $U_M(t_{\epsilon})$ has the block form
\begin{equation*}
U_M(t_{\epsilon})=\left[\begin{array}{cccc} *&U_{\epsilon}&*\\ U^T_{\epsilon}&*&* \\ *&*&*\end{array} \right],
\end{equation*}
where $U_{\epsilon}$ is an $\lvert S_1\rvert$-by-$\lvert S_2\rvert$ matrix satisfying $\|U_{\epsilon}\|>1-\epsilon$. Clearly, if $\lvert S_1\rvert=\lvert S_2\rvert=1$, then we get PGST between two vertices. If $\|U_{\epsilon}\|=1$, then we say that \textit{perfect state transfer} (PST) occurs from $S_1$ and $S_2$, a notion that is equivalent to \textit{group state transfer} (GST) introduced by Brown et al.\ \cite{Brown2021}. As $U_M(t_1)$ is non-singular, if PST occurs from $S_1$ and $S_2$, then $\lvert S_1\rvert\leq \lvert S_2\rvert$, and the case $\lvert S_1\rvert=\lvert S_2\rvert=1$ yields PST between two vertices. PST, PGST and GST (and later on, uniform mixing and fractional revival) fall under the general notion of quantum state transfer, which is an important physical concept.

The following basic properties of $C$-sedentary vertices are immediate from the fact that $U(t)$ is unitary.

\begin{proposition}
\label{prop}
Let $X$ be a graph with vertex $u$.
\begin{enumerate}
\item If $X$ is (sharply or tightly) $C_1$-sedentary at $u$, where $0< C_1\leq 1$, then the following hold.
\begin{enumerate}
\item $X$ is also $C_2$-sedentary at $u$ whenever $0<C_2\leq C_1$.
\item If $u$ and $v$ are cospectral vertices, then $X$ is also (sharply or tightly) $C_1$-sedentary at $v$.
\item Any subset $S$ of $V(X)$ containing $u$ cannot be involved in pretty good state transfer in $X$.
\item For any vertex $v\neq u$, $\sup_{t>0}\lvert U_M(t)_{u,v}\rvert\leq \sqrt{1-C_1^2}$.
\end{enumerate}
\item If for each vertex $v\neq u$ of $X$, there is a constant $C_v<1$ such that $\sup_{t>0}\lvert U_M(t)_{u,v}\rvert=C_v$, then $u$ is sharply $C$-sedentary if and only if $1-\sum_{v\neq u}C_v^2>0$, in which case $C=1-\sqrt{\sum_{v\neq u}C_v^2}$.
\end{enumerate}
\end{proposition}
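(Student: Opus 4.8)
My plan is to derive everything from two facts: that each row and column of the unitary $U_M(t)$ is a unit vector, so $\sum_{j}\lvert U_M(t)_{u,j}\rvert^2=1$ for every $t$, and that $U_M(t)=\sum_j e^{it\lambda_j}E_j$ is complex symmetric. Item (a) is immediate, since $\inf_{t>0}\lvert U_M(t)_{u,u}\rvert\ge C_1\ge C_2$ whenever $0<C_2\le C_1$, and being $C_2$-sedentary only requires the infimum to dominate $C_2$. For (b), cospectrality gives $(E_j)_{u,u}=(E_j)_{v,v}$ for all $j$, hence $U_M(t)_{u,u}=\sum_j e^{it\lambda_j}(E_j)_{u,u}=U_M(t)_{v,v}$ as complex numbers for every $t$; therefore $\lvert U_M(t)_{u,u}\rvert=\lvert U_M(t)_{v,v}\rvert$ identically in $t$, so the infimum, together with whether it is attained, transfers verbatim from $u$ to $v$, giving both the sharp and the tight versions.

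For (d) I would fix $v\ne u$ and use the unit norm of row $u$: for every $t$, $\lvert U_M(t)_{u,v}\rvert^2\le\sum_{j\ne u}\lvert U_M(t)_{u,j}\rvert^2=1-\lvert U_M(t)_{u,u}\rvert^2\le 1-C_1^2$, whence $\sup_{t>0}\lvert U_M(t)_{u,v}\rvert\le\sqrt{1-C_1^2}$. Item (c) then builds on this. Suppose a set $S\ni u$ took part in PGST, say with $U_\epsilon$ the block indexed by $S_1\times S_2$ and $\|U_\epsilon\|>1-\epsilon$, where, consistently with the PST/GST discussion, $\|U_\epsilon\|$ denotes the least singular value. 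Reading off the row of $U_\epsilon$ indexed by $u$ when $u\in S_1$, and using that the least singular value of the wide (or square) matrix $U_\epsilon$ is at most the Euclidean norm of any of its rows, that row would have norm exceeding $1-\epsilon$; but its entries are off-diagonal entries in row $u$ of $U_M(t_\epsilon)$, so by (d) their squared norm is at most $1-C_1^2$, and letting $\epsilon\to0$ forces $\inf_{t>0}\lvert U_M(t)_{u,u}\rvert=0$, a contradiction. The delicate point is the case $u\in S_2$ with $\lvert S_2\rvert>\lvert S_1\rvert$: there the relevant block $U_\epsilon^T$ is tall, the row-norm bound no longer controls the least singular value, and I cannot simply transpose. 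I would settle this using the symmetry $U_M(-t)=\overline{U_M(t)}$, which reverses the roles of $S_1$ and $S_2$, together with the fact that a larger source cannot be transferred isometrically into a smaller target; I expect matching $\|U_\epsilon\|$ to the $u$-th row/column norm on both sides to be the main obstacle in (c).

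For the final statement I would again start from the exact identity $\lvert U_M(t)_{u,u}\rvert^2=1-\sum_{v\ne u}\lvert U_M(t)_{u,v}\rvert^2$. Writing row $u$ as $U_M(t)_{u,u}\mathbf{e}_u+w(t)$ with $w(t)$ its off-diagonal part, the reverse triangle inequality gives the bound $\lvert U_M(t)_{u,u}\rvert\ge 1-\|w(t)\|\ge 1-\sqrt{\sum_{v\ne u}C_v^2}$, while the identity itself gives the stronger $\lvert U_M(t)_{u,u}\rvert\ge\sqrt{1-\sum_{v\ne u}C_v^2}$; both are positive exactly when $1-\sum_{v\ne u}C_v^2>0$, which already delivers the sufficient direction and a sedentarity constant. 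The substance of the statement is the converse together with the sharpness of $C$, and these require the individual suprema $C_v$ to be attained, at least in the limit, along a common sequence of times, i.e.\ that $\sup_{t>0}\sum_{v\ne u}\lvert U_M(t)_{u,v}\rvert^2$ actually equals $\sum_{v\ne u}C_v^2$.

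I expect this simultaneous-attainment step to be the genuine obstacle. Because $U_M(t)$ is almost periodic, each $\lvert U_M(t)_{u,v}\rvert$ is a function of the eigenvalue phases $(e^{it\lambda_j})_j$, whose orbit closure is a subtorus, and one must argue that a single phase configuration in this closure simultaneously maximizes every off-diagonal modulus; otherwise the sharp constant is only $\sqrt{1-\sup_{t>0}\sum_{v\ne u}\lvert U_M(t)_{u,v}\rvert^2}$, which can strictly exceed $\sqrt{1-\sum_{v\ne u}C_v^2}$. I would establish the required joint attainment, or isolate the precise hypotheses on $\sigma_u(M)$ under which it holds, using Kronecker/Weyl density of the phases, and only then read off the sharp value of $C$ and the `only if' direction.
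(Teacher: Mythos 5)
The paper never writes out a proof of this proposition (it is declared ``immediate from the fact that $U(t)$ is unitary''), and for parts (1a), (1b), (1d) your arguments are exactly the intended ones and are complete: monotonicity of the infimum for (1a), the identity $U_M(t)_{u,u}=U_M(t)_{v,v}$ forced by cospectrality for (1b), and the unit row norm bound $\lvert U_M(t)_{u,v}\rvert^2\le 1-\lvert U_M(t)_{u,u}\rvert^2\le 1-C_1^2$ for (1d). Your handling of the case $u\in S_1$ in (1c) is also correct, since $\lambda_{\min}(U_\epsilon U_\epsilon^*)$ is at most any diagonal entry of $U_\epsilon U_\epsilon^*$, i.e.\ at most any squared row norm.

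For $u\in S_2$ in (1c), however, the repair you propose does not work. Time reversal $U_M(-t)=\overline{U_M(t)}$ merely conjugates and transposes the block, preserving its singular values, so it cannot turn PGST from $S_1$ to $S_2$ into PGST from $S_2$ to $S_1$ when $\lvert S_2\rvert>\lvert S_1\rvert$ (as you yourself observe, a larger space cannot map near-isometrically into a smaller one). What closes this case is the column version of your row argument: columns of a unitary are also unit vectors, so the column of $U_\epsilon$ indexed by $u\in S_2$ has squared norm $\sum_{w\in S_1}\lvert U_M(t)_{w,u}\rvert^2\le 1-\lvert U_M(t)_{u,u}\rvert^2\le 1-C_1^2$, and $\lambda_{\min}(U_\epsilon^*U_\epsilon)$ is at most this quantity. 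This settles (1c) provided the PGST definition is read two-sidedly, with both $U_\epsilon U_\epsilon^*$ and $U_\epsilon^*U_\epsilon$ required to be close to the identity (forcing $\lvert S_1\rvert=\lvert S_2\rvert$). Under the literal one-sided reading your worry is justified and the statement is in fact false: in $K_{2,6}$ with the Laplacian there is PST between the two degree-six apexes $w,v$, each degree-two vertex $u$ is $\frac{2}{3}$-sedentary by the twin theorem, and $S_1=\{w\}$, $S_2=\{v,u\}$ formally satisfies the PST (hence PGST) definition, because enlarging the target of a $1\times n$ block can only increase its norm.

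For part (2), your hesitation is a finding rather than a gap in your proof. The identity $\lvert U_M(t)_{u,u}\rvert^2=1-\sum_{v\ne u}\lvert U_M(t)_{u,v}\rvert^2$ gives $\inf_{t>0}\lvert U_M(t)_{u,u}\rvert\ge\sqrt{1-\sum_{v\ne u}C_v^2}$, and since $\sqrt{1-x}>1-\sqrt{x}$ strictly for $0<x<1$ (and $\sum_{v\ne u}C_v^2>0$ because $u$ has a neighbour), the infimum can never equal the value $1-\sqrt{\sum_{v\ne u}C_v^2}$ claimed in the proposition; sharp sedentariness with that constant is impossible, so no argument can rescue the constant as printed. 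The correct sharp value is $\sqrt{1-\sup_{t>0}\sum_{v\ne u}\lvert U_M(t)_{u,v}\rvert^2}$, which equals $\sqrt{1-\sum_{v\ne u}C_v^2}$ precisely when the suprema $C_v$ are approached simultaneously; and the ``only if'' direction likewise needs this joint attainment, since the $C_v$ could individually be large, with $\sum_{v\ne u}C_v^2\ge 1$, at disjoint sets of times while $u$ remains sedentary. Your instinct to invoke Kronecker--Weyl density is the right diagnosis of the hypothesis the statement silently assumes.
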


By Proposition \ref{prop}(1a), it is desirable to find the least $C<1$ such that a vertex is $C$-sedentary, i.e., the $C$ such that $u$ is sharply $C$-sedentary. Proposition \ref{prop}(1c) implies that PGST and sedentariness are mutually exclusive. Thus, our investigation of sedentariness is motivated in the same way as the study of PGST, in a sense that identifying sedentary vertices rules out the existence of PGST. Proposition \ref{prop}(1d) tells us that a necessary condition for sedentariness of $u$ is that $\lvert U(t)_{u,v}\rvert$ is bounded away from 1 for any $v\neq u$, while Proposition \ref{prop}(2) provides a sufficient condition for sedentariness. But since not much is known about pairs of vertices such that $\lvert U(t)_{u,v}\rvert$ bounded away from 1, Proposition \ref{prop}(2) will not be very useful to us. Instead, we present a sufficient condition for sedentariness in Section \ref{secSuff} that only depends on the diagonal entries of the $E_j$'s. Next, we define what it means for a family of graphs to be $C$-sedentary.

\begin{definition}
\label{defs}
Let $0\leq C\leq 1$ and $\mathscr{F}$ be a (countable) family of graphs. We call $\mathscr{F}$ is \textit{$C$-sedentary} if there is a function $f:\mathbb{R}^+\rightarrow (0,1]$ such that (i) for each $X\in\mathscr{F}$ and some $u\in V(X)$, the graph $X$ is $f(\lvert V(X)\rvert)$-sedentary at $u$ and (ii) $f(s)\rightarrow C$ as $s\rightarrow\infty$. Further, if $C=1$, then we call $\mathscr{F}$ is \textit{sedentary}; if each $X\in\mathscr{F}$ is sharply (resp., tightly) $f(\lvert V(X)\rvert)$-sedentary at $u$, then call $\mathscr{F}$ is \textit{sharply} (resp., tightly) $C$-sedentary; and if $C=0$ and $\mathscr{F}$ is sharply $C$-sedentary, then call $\mathscr{F}$ is \textit{quasi-sedentary}.
\end{definition}

Note that if $C=1$, then the above notion coincides with Godsil's definition of sedentary quantum walks. For example, if $\mathscr{K}$ is the family of complete graphs on $n\geq 3$ vertices, then for all $t$,
\begin{equation}
\label{K}
\lvert U(t)_{u,u}\rvert=\frac{\lvert n-1+e^{itn}\rvert}{n}\geq 1-\frac{2}{n}
\end{equation}
with equality if and only if $t=\frac{j\pi}{n}$ for odd $j$. Thus, $\mathscr{K}$ is a sedentary family. The case $0\leq C<1$ is a more relaxed version of $C$-sedentariness than the case $C=1$. In \cite{SedQW}, Godsil showed that cones on $d$-regular graphs on $n$ vertices are $C$-sedentary at the apex with respect to $A$, where $C=\frac{d^2}{d^2+4n}$. Thus, the concept of $C$-sedentariness for $0<C\leq 1$ is not entirely new, although this concept is first formalized in this paper. Quasi-sedentariness, on the other hand, is a new concept introduced in this paper, and may be regarded as the weakest form of sedentariness for families of graphs.

\section{Products}\label{secProd}

Consider $\textbf{A}$ and $\mathcal{A}$ in (\ref{gen}). In this section, we derive results about sedentariness in products of graphs.

\begin{theorem}
\label{Cart1}
Let $X_1,\ldots,X_n$ be weighted graphs with possible loops, and $Z=\square_{j=1}^n X_j$.
\begin{enumerate}
\item If each $X_j$ is $C_j$-sedentary at $u_j$, then $Z$ is $\prod_{j=1}^n C_j$-sedentary at $(u_1,\ldots,u_n)$. In particular, if each $X_j$ is sharply $C_j$-sedentary at $u_j$, then $Z$ is sharply $C'$-sedentary at $(u_1,\ldots,u_n)$ with $C'\geq \prod_{j=1}^n C_j$.
\item If $Z$ is $C$-sedentary at $(u_1,\ldots,u_n)$, then each $X_j$ is sharply $C_j$-sedentary at $u_j$ for some $0< C_j\leq 1$.
\item If each $X_j$ is tightly $C_j$-sedentary at $u_j$ and there exists a time $t_1$ such that $\lvert U_{X_j}(t_1)_{u_j,u_j}\rvert=C_j$ for each $j$, then $Z$ is tightly $C$-sedentary at $(u_1,\ldots,u_n)$ with $\lvert U_{Z}(t_1)_{(u_1,\ldots,u_n),(u_1,\ldots,u_n)}\rvert=\prod_{j=1}^n C_j$.
\end{enumerate}
\end{theorem}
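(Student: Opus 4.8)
The key tool is the Cartesian product identity
\begin{equation*}
U_{Z}(t)=\bigotimes_{j=1}^n U_{X_j}(t),
\end{equation*}
which holds for $\mathbf{A}$ by (\ref{Ucart}) and extends to $n$ factors by induction. Taking the diagonal entry indexed by $(u_1,\ldots,u_n)$ and using the fact that the diagonal entry of a Kronecker product is the product of the corresponding diagonal entries, I get the single governing formula
\begin{equation}
\label{prodkey}
U_{Z}(t)_{(u_1,\ldots,u_n),(u_1,\ldots,u_n)}=\prod_{j=1}^n U_{X_j}(t)_{u_j,u_j},
\end{equation}
and hence $\lvert U_{Z}(t)_{(u_1,\ldots,u_n),(u_1,\ldots,u_n)}\rvert=\prod_{j=1}^n \lvert U_{X_j}(t)_{u_j,u_j}\rvert$. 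All three parts will follow from (\ref{prodkey}) by analyzing infima over $t>0$; it suffices to treat $n=2$ and then induct, but the formula already covers general $n$ directly.

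For part (1), each factor gives $\lvert U_{X_j}(t)_{u_j,u_j}\rvert\geq C_j$ for all $t$, so the product is bounded below by $\prod_j C_j>0$ for all $t$, yielding $\prod_j C_j$-sedentariness. For the sharpness refinement, I would take the infimum of both sides of the modulus version of (\ref{prodkey}): since each factor is bounded below by $C_j$ and the $\inf$ of a product of nonnegative functions is at least the product of the individual infima, and each individual infimum equals $C_j$ by sharpness, I obtain $\inf_{t>0}\lvert U_Z(t)_{\cdot,\cdot}\rvert\geq\prod_j C_j$. The inequality $C'\geq\prod_j C_j$ is exactly this; equality need not hold because the factors may attain their infima at different times, which is the conceptual point distinguishing sharp from tight and the reason part (3) requires a common time $t_1$.

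For part (2), I argue contrapositively on each factor. If some $X_j$ were not sharply $C_j$-sedentary with $C_j>0$, i.e.\ $\inf_{t>0}\lvert U_{X_j}(t)_{u_j,u_j}\rvert=0$, then along a sequence $t_k$ realizing this infimum the corresponding factor in (\ref{prodkey}) tends to $0$, and since every other factor is bounded above by $1$ (because each $U_{X_i}(t)$ is unitary, so its entries have modulus at most $1$), the whole product tends to $0$, contradicting $C$-sedentariness of $Z$ with $C>0$. Thus every factor has strictly positive infimum, and defining $C_j:=\inf_{t>0}\lvert U_{X_j}(t)_{u_j,u_j}\rvert$ makes each $X_j$ sharply $C_j$-sedentary. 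For part (3), tightness of each factor means the infimum $C_j$ is attained at some time, and the common-time hypothesis supplies a single $t_1$ at which all factors simultaneously equal their respective $C_j$; plugging $t_1$ into the modulus version of (\ref{prodkey}) gives $\lvert U_Z(t_1)_{\cdot,\cdot}\rvert=\prod_j C_j$, while part (1) guarantees this product is a valid lower bound, so it is attained and $Z$ is tightly $\prod_j C_j$-sedentary.

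The only genuinely delicate point is the strictness and attainment bookkeeping around the infima: I must be careful that the infimum over $t>0$ of a product of moduli can strictly exceed the product of the infima, so part (1) can only assert the lower bound $C'\geq\prod_j C_j$ rather than equality, and this is precisely why the synchronized-time hypothesis is needed in part (3) to upgrade a sharp bound to a tight one. Everything else is a routine consequence of unitarity (entrywise modulus bound by $1$) and the multiplicativity in (\ref{prodkey}).
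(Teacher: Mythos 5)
Your proposal is correct and follows essentially the same route as the paper: both rest on the multiplicativity $\lvert U_{Z}(t)_{(u_1,\ldots,u_n),(u_1,\ldots,u_n)}\rvert=\prod_{j}\lvert U_{X_j}(t)_{u_j,u_j}\rvert$ coming from (\ref{Ucart}), combined with the inequality $\inf_{t>0}f(t)g(t)\geq \inf_{t>0}f(t)\inf_{t>0}g(t)$ and unitarity. The paper compresses all three parts into two lines, whereas you spell out the details (notably the unitarity bound $\lvert U_{X_i}(t)_{u_i,u_i}\rvert\leq 1$ needed for part (2), which the paper's stated inequality alone does not give); this is a matter of exposition, not of method.
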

\begin{proof}
From (\ref{Ucart}), we have $\lvert U_{X_1\square X_2}(t)_{(u_1,u_2),(u_1,u_2)}\rvert=\lvert U_{X_1}(t)_{u_1,u_1}\rvert\cdot \lvert U_{X_2}(t)_{u_2,u_2}\rvert$. Using the fact that $\inf _{t>0}f(t)g(t)\geq \inf_{t>0} f(t)\inf_{t>0} g(t)$ for all nonnegative functions $f$ and $g$ yields (1-3).
\end{proof}

By Theorem~\ref{Cart1}, Cartesian products of graphs with sedentary vertices also contain sedentary vertices. Consequently, Cartesian products of sedentary families also yield a sedentary family.

\begin{corollary}
\label{Cart3}
Let $\mathscr{F}_1,\ldots,\mathscr{F}_n$ be families of weighted graphs with possible loops. If each $\mathscr{F}_j$ is $C_j$-sedentary, then $\mathscr{F}=\left\{\square_{j=1}^n X_j:X_j\in \mathscr{F}_j\right\}$ is $\prod_{j=1}^n C_j$-sedentary.
\end{corollary}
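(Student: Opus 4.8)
The plan is to lift the single-graph statement of Theorem~\ref{Cart1}(1) to the level of families. Since each $\mathscr{F}_j$ is $C_j$-sedentary, I would first fix witnessing functions $f_j:\mathbb{R}^+\to(0,1]$ with $f_j(s)\to C_j$ as $s\to\infty$, such that every $X_j\in\mathscr{F}_j$ is $f_j(\lvert V(X_j)\rvert)$-sedentary at some vertex $u_j$. For an arbitrary $Z=\square_{j=1}^n X_j\in\mathscr{F}$, Theorem~\ref{Cart1}(1) then immediately gives that $Z$ is $\prod_{j=1}^n f_j(\lvert V(X_j)\rvert)$-sedentary at $(u_1,\dots,u_n)$. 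So every member of $\mathscr{F}$ is already sedentary with an explicit constant, and the task reduces to repackaging these constants as a single function of the order $\lvert V(Z)\rvert=\prod_{j=1}^n \lvert V(X_j)\rvert$ and checking the two clauses of Definition~\ref{defs}.

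To build the witnessing function $f$ for $\mathscr{F}$, I would set, for each order $s$ realized by some product in $\mathscr{F}$,
\[
f(s)=\inf\Bigl\{\textstyle\prod_{j=1}^n f_j(n_j):\ n_j=\lvert V(X_j)\rvert\text{ for some }X_j\in\mathscr{F}_j,\ \textstyle\prod_{j=1}^n n_j=s\Bigr\},
\]
and extend $f$ to the remaining positive reals by any values in $(0,1]$ consistent with the limit below. Clause~(i) of Definition~\ref{defs} is then immediate: given $Z$ of order $s$, its own factorization is admissible in the infimum, so $f(s)\le\prod_{j=1}^n f_j(\lvert V(X_j)\rvert)$; by Theorem~\ref{Cart1}(1) the graph $Z$ is $\prod_{j=1}^n f_j(\lvert V(X_j)\rvert)$-sedentary at $(u_1,\dots,u_n)$, hence also $f(s)$-sedentary there by Proposition~\ref{prop}(1a).

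The substantive step, and the one I expect to be the main obstacle, is clause~(ii): proving $f(s)\to\prod_{j=1}^n C_j$ as $s\to\infty$. The convergence $f_j(s)\to C_j$ makes $\prod_{j=1}^n f_j(n_j)$ close to $\prod_{j=1}^n C_j$ whenever all of the factor orders $n_1,\dots,n_n$ are large, so the heart of the matter is that a large product order $s$ does not by itself force every $n_j$ to be large---one factor may tend to infinity while another remains bounded, in which case $\prod_{j=1}^n f_j(n_j)$ need not approach $\prod_{j=1}^n C_j$ and the infimum can be pulled strictly below it. I would therefore need either a uniformity hypothesis on the factorizations (for instance the case of Cartesian powers $\square^k X$, where all factor orders coincide and $s^{1/k}\to\infty$ forces the single order to grow, giving $f(s)\to C^k$ cleanly) or a uniform bound of the form $f_j(s)\ge C_j$ valid for every admissible order. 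Pinning down the precise condition on the families under which the infimum converges to exactly $\prod_{j=1}^n C_j$ is the crux; once that is in place, the limit in~(ii) follows from standard $\varepsilon$-estimates on products of convergent sequences, and the corollary is complete.
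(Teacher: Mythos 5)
Your first two steps coincide with the paper's own treatment: the paper offers no proof of this corollary beyond declaring it immediate from Theorem~\ref{Cart1}, i.e., each member $\square_{j=1}^n X_j$ is $\prod_{j=1}^n f_j(\lvert V(X_j)\rvert)$-sedentary at $(u_1,\ldots,u_n)$, and this is then read off as a statement about the family. Your construction of $f$ as an infimum over admissible factorizations, and your verification of clause~(i) of Definition~\ref{defs}, are a correct and careful way of making that repackaging explicit.

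The obstacle you identify in clause~(ii) is genuine, and it is a defect of the statement itself rather than of your argument; the paper simply glosses over it. Concretely, take $n=2$ and $\mathscr{F}_1=\mathscr{F}_2=\{K_m:m\geq 3\}$, both sedentary ($C_1=C_2=1$) by (\ref{K}). The product family contains $K_3\square K_m$ for every $m\geq 3$, and by (\ref{Ucart}) and (\ref{K}), every vertex $w=(u,v)$ of $K_3\square K_m$ satisfies
\[
\lvert U_{K_3\square K_m}(\pi/3)_{w,w}\rvert=\lvert U_{K_3}(\pi/3)_{u,u}\rvert\cdot\lvert U_{K_m}(\pi/3)_{v,v}\rvert\leq \tfrac{1}{3},
\]
since $\lvert U_{K_3}(\pi/3)_{u,u}\rvert=\lvert 2+e^{i\pi}\rvert/3=\tfrac{1}{3}$. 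Hence any witnessing function $f$ for the product family must satisfy $f(3m)\leq\tfrac{1}{3}$ for all $m\geq 3$, so $f(s)\not\rightarrow 1$: the product family is not $1$-sedentary in the sense of Definition~\ref{defs}. Thus Corollary~\ref{Cart3}, read literally, fails, and the conclusion $\prod_{j=1}^n C_j$ requires exactly the kind of hypothesis you name: either all factor orders tend to infinity with the product order --- which is precisely the condition the paper itself inserts in Corollary~\ref{rook1}(2), while Corollary~\ref{rook1}(1) claims only $(1-\frac{2}{n_j})$-sedentariness when a factor is fixed --- or a uniform bound $f_j(s)\geq C_j$ on all realized orders, in which case (for $C_j>0$) the constant function $f\equiv\prod_{j=1}^n C_j$ witnesses the claim. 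So your diagnosis is correct, and your analysis is in fact sharper than the paper's own.
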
 

If the graphs involved are simple, then Theorem~\ref{Cart1} and Corollary~\ref{Cart3} also hold for the direct product. We also note that if $X$ and $Y$ are simple and weighted-regular, then so are $X\square Y$ and $X\times Y$, and so the quantum walks determined by $\textbf{A}$ and $\mathcal{A}$ are equivalent. In this case, Theorem~\ref{Cart1} and Corollary~\ref{Cart3} apply to $\mathcal{A}$, and their analogs for the direct product also apply to $\mathbf{A}$. However, if $X$ is not weighted-regular, then it is not clear how to obtain simple expressions for $e^{it\mathcal{A}(X\square Y)}$ and $e^{it\mathbf{A}(X\times Y)}$.

Next, we examine Cartesian products of complete graphs. Since these are regular, our results apply to $\textbf{A}$ and $\mathcal{A}$. We use $\nu_2(b)$ to denote the largest power of two that divides an integer $b$.

\begin{theorem}
\label{rook}
Let $n_1,\ldots,n_m\geq 2$ and $X=\square_{j=1}^mK_{n_j}$. The following hold.
\begin{enumerate}
\item If $n_j=2$ for some $j$, then $X$ is not sedentary at any vertex.
\item If each $n_j\geq 3$, then $X$ is $C$-sedentary at any vertex, where $C=\prod_{j=1}^m(1-\frac{2}{n_j})$. In particular, if the $\nu_2(n_j)$'s are all equal, then $\lvert U_{X}(t_1)_{w,w}\rvert\geq C$ for any vertex $w$ with equality at  $t_1=\pi/2^{\nu_2(n_1)}$.
\end{enumerate}
\end{theorem}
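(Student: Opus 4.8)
The plan is to reduce everything to the single-factor behaviour of $K_n$ recorded in (\ref{K}) and then combine the factors through the tensor structure (\ref{Ucart}). Since each $K_{n_j}$ is regular, so is $X=\square_{j=1}^m K_{n_j}$, and the transition matrices with respect to $\mathbf A$ and $\mathcal A$ differ from the adjacency case only by a global phase and a rescaling of $t$, neither of which affects $\inf_{t>0}\lvert U(t)_{w,w}\rvert$; so I may work with $\mathbf A$ throughout and drop the irrelevant phase. By (\ref{Ucart}), for any vertex $w=(u_1,\ldots,u_m)$ of $X$,
\begin{equation*}
\lvert U_X(t)_{w,w}\rvert=\prod_{j=1}^m\lvert U_{K_{n_j}}(t)_{u_j,u_j}\rvert=\prod_{j=1}^m\frac{\lvert n_j-1+e^{itn_j}\rvert}{n_j},
\end{equation*}
where the last equality is (\ref{K}). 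This identity is the backbone of both parts.

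For part (1), I observe that the factor corresponding to an index $j$ with $n_j=2$ equals $\frac{\lvert 1+e^{2it}\rvert}{2}=\lvert\cos t\rvert$, which vanishes at $t=\pi/2$. Hence the whole product vanishes at $t=\pi/2$, so $\inf_{t>0}\lvert U_X(t)_{w,w}\rvert=0$ and $X$ is not sedentary at $w$; as $w$ is arbitrary, $X$ is not sedentary at any vertex. Equivalently, $K_2$ fails to be sedentary, so the contrapositive of Theorem~\ref{Cart1}(2) applies.

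For part (2), the lower bound is immediate: each $K_{n_j}$ with $n_j\geq 3$ is (tightly) $(1-\frac{2}{n_j})$-sedentary by (\ref{K}), so Theorem~\ref{Cart1}(1) gives that $X$ is $\prod_{j=1}^m(1-\frac{2}{n_j})=C$-sedentary at $w$, i.e.\ $\lvert U_X(t)_{w,w}\rvert\geq C$ for all $t$. It remains to establish tightness when all $\nu_2(n_j)$ are equal, say to $\nu$. By (\ref{K}) the $j$-th factor attains its minimum $1-\frac{2}{n_j}$ exactly when $tn_j/\pi$ is an odd integer, so the task is to exhibit one time $t_1$ at which all $m$ factors are simultaneously minimized. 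I claim $t_1=\pi/2^{\nu}$ works: writing $n_j=2^{\nu}m_j$ with $m_j$ odd, we get $t_1 n_j/\pi=n_j/2^{\nu}=m_j$, which is odd, so every factor is minimized at $t_1$ and $\lvert U_X(t_1)_{w,w}\rvert=\prod_{j=1}^m(1-\frac{2}{n_j})=C$. Together with the all-$t$ bound this shows $X$ is tightly $C$-sedentary at $w$, with the minimum attained at $t_1=\pi/2^{\nu_2(n_1)}$; alternatively, once $t_1$ is identified this is exactly Theorem~\ref{Cart1}(3).

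The only genuine obstacle is the arithmetic synchronization in part (2): producing a single time at which every complete-graph factor sits at its minimum. This is precisely where the hypothesis $\nu_2(n_1)=\cdots=\nu_2(n_m)$ enters, since a time forcing $tn_j/\pi$ to be odd for a factor with small $2$-adic valuation forces an even value for a factor whose $n$ carries a higher power of two, so no common minimizer exists otherwise. Note that this affects only the tightness claim; the bound $\lvert U_X(t)_{w,w}\rvert\geq C$ holds for all choices of $n_j\geq 3$. Everything else is a routine unwinding of the product formula above.
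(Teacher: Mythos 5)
Your proposal is correct and follows essentially the same route as the paper: both reduce to the single-factor formula (\ref{K}) for $K_n$ and combine factors via the tensor structure (\ref{Ucart})/Theorem~\ref{Cart1}, with the tightness claim in part (2) coming from the observation that $n_j/2^{\nu_2(n_1)}$ is odd so all factors are simultaneously minimized at $t_1=\pi/2^{\nu_2(n_1)}$, i.e.\ Theorem~\ref{Cart1}(3). The only cosmetic difference is in part (1), where you exhibit the vanishing factor $\lvert\cos t\rvert$ at $t=\pi/2$ directly while the paper argues by contradiction from Theorem~\ref{Cart1}(2) and the PST of $K_2$ — two phrasings of the same fact, and you note the latter yourself.
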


\begin{proof}
Let $n_j=2$. If $X$ is sedentary at some vertex $(u_1,\ldots,u_m)$, where $u_j\in V(K_2)$, then Theorem~\ref{Cart1}(2) implies that $K_2$ is sedentary at $u_j$, which is a contradiction because $K_2$ exhibits PST. This proves (1). Now, if each $n_j\geq 3$, then (\ref{K}) and Theorem~\ref{Cart1} imply that  $X$ is $C$-sedentary. If we add that the $\nu_2(n_j)$'s are all equal, then each $n_j'=n_j/2^{\nu_2(n_1)}$ is odd, and so (\ref{K}) implies that each $K_{n_j}$ is tightly $(1-\frac{2}{n_j})$-sedentary at any vertex at time $t_1=\pi/2^{\nu_2(n_1)}$. Invoking Theorem~\ref{Cart1}(3) completes the proof of (2).
\end{proof}

The following corollary is immediate from Theorem~\ref{rook}.

\begin{corollary}
\label{rook1}
Fix $k$ and let $\mathscr{F}$ be a family of graphs of the form $\square_{j=1}^kK_{n_j}$, where each $n_j\geq 3$.
\begin{enumerate}
\item If $n_j$ is fixed for some $j$, then $\mathscr{F}$ is $(1-\frac{2}{n_j})$-sedentary at any vertex. 
\item If each $n_j$ increases as $\prod_{j=1}^k n_j\rightarrow\infty$, then $\mathscr{F}$ is sedentary at any vertex.
\end{enumerate}
If we add that the $\nu_2(n_j)$'s are equal for all $\square_{j=1}^kK_{n_j}\in \mathscr{F}$, then the sedentariness in (1) and (2) is tight.
\end{corollary}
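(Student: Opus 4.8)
The plan is to derive the whole statement from the per-graph sedentariness constant supplied by Theorem~\ref{rook}(2) and then pass to a limit. For any member $X=\square_{j=1}^k K_{n_j}\in\mathscr{F}$ (each $n_j\geq 3$), Theorem~\ref{rook}(2) asserts that $X$ is $C_X$-sedentary at \emph{every} vertex, where
\[
C_X=\prod_{j=1}^k\Big(1-\tfrac{2}{n_j}\Big).
\]
Because $n_j\geq 3$ forces $\tfrac13\leq 1-\tfrac{2}{n_j}<1$, each factor lies in $(0,1)$ and hence $C_X\in(0,1]$. This already gives condition (i) of Definition~\ref{defs}: regarding $\mathscr{F}$ as a sequence whose sizes $s=|V(X)|=\prod_j n_j$ tend to infinity, set $f(s)=C_X$, so that $X$ is $f(|V(X)|)$-sedentary at any vertex. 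It then remains only to compute $\lim_{s\to\infty}f(s)$ in each case.

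For part (1), let $n_{j_0}=N$ be the fixed coordinate, the remaining coordinates growing so that $\prod_j n_j\to\infty$. Factoring out the fixed term,
\[
f(s)=C_X=\Big(1-\tfrac{2}{N}\Big)\prod_{i\neq j_0}\Big(1-\tfrac{2}{n_i}\Big),
\]
and since $n_i\to\infty$ for each $i\neq j_0$, the trailing product tends to $1$; hence $f(s)\to 1-\tfrac{2}{N}=1-\tfrac{2}{n_{j_0}}$, and $\mathscr{F}$ is $(1-\tfrac{2}{n_{j_0}})$-sedentary at any vertex. For part (2), every $n_j\to\infty$, so every factor tends to $1$ and $f(s)=C_X\to 1$; thus $\mathscr{F}$ is $1$-sedentary, i.e.\ sedentary, at any vertex.

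The tightness addendum follows from the ``in particular'' clause of Theorem~\ref{rook}(2): if the $\nu_2(n_j)$'s coincide for every member, then for each $X$ there is a single time $t_1=\pi/2^{\nu_2(n_1)}$ with $|U_X(t_1)_{w,w}|=C_X$, while $|U_X(t)_{w,w}|\geq C_X$ for all $t$. Thus the infimum defining sedentariness is attained and each $X$ is \emph{tightly} $C_X$-sedentary at every vertex, so by the tightness provision of Definition~\ref{defs} both (1) and (2) are upgraded to tight $C$-sedentariness.

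The calculations are routine once Theorem~\ref{rook} is available; the only genuine subtlety is the passage from the constant $C_X$, which depends on the whole tuple $(n_1,\dots,n_k)$ rather than on $|V(X)|$ alone, to a legitimate size-function $f$ as demanded by Definition~\ref{defs}. I would resolve this by indexing $\mathscr{F}$ as a sequence along which $|V(X)|$ increases (taking the least such constant should two members share a size), so that the limits above are genuinely $\lim_{s\to\infty}f(s)$. I expect this bookkeeping to be the main---indeed the only---obstacle, the remainder being immediate from Theorem~\ref{rook}.
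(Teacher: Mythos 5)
Your proposal is correct and takes essentially the same route as the paper, which states the corollary as immediate from Theorem~\ref{rook}(2): each member is $\prod_{j=1}^k(1-\tfrac{2}{n_j})$-sedentary at every vertex, the product converges to $1-\tfrac{2}{n_j}$ (resp.\ $1$) under the hypotheses of (1) (resp.\ (2)), and the equal-$\nu_2$ clause makes each member tightly sedentary, hence the family tightly sedentary per Definition~\ref{defs}. Your explicit handling of the bookkeeping (that the constant depends on the tuple $(n_1,\dots,n_k)$ rather than on $\lvert V(X)\rvert$ alone, and that the non-fixed coordinates must grow in part (1)) simply makes precise what the paper leaves implicit.
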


The Hamming graph $H(k,n)$ is obtained by taking the Cartesian product of $k\geq 1$ copies of $K_n$. Combining Theorem~\ref{rook}(2) and Corollary~\ref{rook1}(2) yields the following result about Hamming graphs.

\begin{corollary}
\label{ham}
Let $u$ be a vertex of $H(k,n)$ and $\mathscr{F}$ be a family of Hamming graphs $H(k,n)$. If $n\geq 3$, then $u$ is tightly sedentary in $H(k,n)$ and $\mathscr{F}$ is a sedentary family of graphs.
\end{corollary}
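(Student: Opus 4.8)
The plan is to specialize the general Cartesian-product machinery already in hand to the case where every factor is one and the same complete graph, since by definition $H(k,n)=\square_{j=1}^k K_n$ is precisely the Cartesian product of $k$ identical copies of $K_n$, i.e.\ a product $\square_{j=1}^m K_{n_j}$ with $m=k$ and common factor order $n_j=n$ for all $j$. Everything then reduces to invoking Theorem~\ref{rook}(2) and Corollary~\ref{rook1}(2) with this uniform choice of parameters.

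First I would settle tight sedentariness at a vertex $u$. Since $n\geq 3$, every factor order satisfies $n_j=n\geq 3$, so Theorem~\ref{rook}(2) applies and shows that $H(k,n)$ is $C$-sedentary at any vertex with $C=\prod_{j=1}^k\bigl(1-\tfrac{2}{n}\bigr)=\bigl(1-\tfrac{2}{n}\bigr)^k$. The key observation is that the hypothesis of the ``in particular'' clause of Theorem~\ref{rook}(2)—that the values $\nu_2(n_j)$ all coincide—holds \emph{automatically} here, because each $n_j$ equals $n$ and hence $\nu_2(n_j)=\nu_2(n)$ for every $j$. Consequently $\lvert U_{H(k,n)}(t_1)_{u,u}\rvert=C$ at $t_1=\pi/2^{\nu_2(n)}$, so the infimum defining $C$-sedentariness is actually attained; by Definition~\ref{def} this is exactly tight $C$-sedentariness at $u$ (and in particular sharp $C$-sedentariness).

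Second, for the family statement I would apply Corollary~\ref{rook1}(2) with all $n_j=n$ and $k$ fixed. Over a family $\mathscr{F}$ of Hamming graphs $H(k,n)$ with $k$ fixed and $n$ ranging over an increasing sequence, we have $\lvert V(H(k,n))\rvert=n^k\to\infty$ while each factor order $n$ increases, so $\prod_{j=1}^k n_j=n^k\to\infty$ and Corollary~\ref{rook1}(2) gives sedentariness at any vertex. Concretely, setting $f(s)=\bigl(1-2s^{-1/k}\bigr)^k$ yields a function with $f(n^k)=\bigl(1-\tfrac{2}{n}\bigr)^k$, so that $H(k,n)$ is $f(\lvert V(H(k,n))\rvert)$-sedentary, and $f(s)\to 1$ as $s\to\infty$; extending $f$ to all of $\mathbb{R}^+$ with values in $(0,1]$ is a routine technicality since only the values at $s=n^k$ are constrained. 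By Definition~\ref{defs} with $C=1$ this shows $\mathscr{F}$ is a sedentary family.

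There is no genuine obstacle here—the result is a direct specialization—but the one point requiring care is the regime in which the limit for the family is taken. The bound $C=\bigl(1-\tfrac{2}{n}\bigr)^k$ tends to $1$ only when $n\to\infty$ with $k$ held fixed; were $k$ permitted to grow with $n$, the product $\bigl(1-\tfrac{2}{n}\bigr)^k$ could remain bounded away from $1$ (and indeed tend to $0$), so the family would not be sedentary. Reading ``family of Hamming graphs $H(k,n)$'' as having fixed dimension $k$ and growing alphabet size $n$ is therefore what makes the conclusion valid, consistent with the ``Fix $k$'' hypothesis of Corollary~\ref{rook1}.
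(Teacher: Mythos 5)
Your proposal is correct and takes essentially the same route as the paper, which obtains Corollary~\ref{ham} precisely by writing $H(k,n)=\square_{j=1}^k K_n$ and combining Theorem~\ref{rook}(2) (whose hypothesis that the $\nu_2(n_j)$'s coincide holds automatically since every factor is the same $K_n$, giving attainment of the bound and hence tight sedentariness) with Corollary~\ref{rook1}(2) for the family statement in the regime of fixed $k$ and growing $n$. Your added care about that regime and about extending $f$ to a function $\mathbb{R}^+\to(0,1]$ matches the paper's intent.
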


\section{A sufficient condition}\label{secSuff}

We say that vertex $u$ is \textit{periodic} in $X$ with respect to $M$ if $\lvert U_M(t_1)_{u,u}\rvert=1$ for some time $t_1$, and the minimum such $t_1>0$ is called the minimum period of $u$, denote by $\rho$. If $u$ is periodic, then $\lvert U_M(t)_{u,u}\rvert$ is a periodic function because $\lvert U_M(t+\rho)_{u,u}\rvert=\lvert (U_M(t)U(\rho))_{u,u}\rvert=\lvert U_M(t)_{u,u}\rvert\cdot \lvert U_M(\rho)_{u,u}\rvert=\lvert U_M(t)_{u,u}\rvert$. In this case, $\inf_{t>0}\lvert U_M(t)_{u,u}\rvert=\min_{t\in [0,\rho]}\lvert U_M(t)_{u,u}\rvert$, and so the following is immediate.

\begin{lemma}
\label{tight}
If $u$ is periodic, then $u$ is tightly sedentary if and only if $U_M(t)_{u,u}\neq 0$ for all $t\in [0,\rho]$.
\end{lemma}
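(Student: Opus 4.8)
The plan is to exploit the periodicity fact recorded just before the statement, namely that $\lvert U_M(t)_{u,u}\rvert$ is a $\rho$-periodic function and that $\inf_{t>0}\lvert U_M(t)_{u,u}\rvert = \min_{t\in[0,\rho]}\lvert U_M(t)_{u,u}\rvert$. Writing $g(t)=\lvert U_M(t)_{u,u}\rvert$, I would first observe that $g$ is continuous, since $t\mapsto e^{itM}$ is continuous, and that $[0,\rho]$ is compact; hence the minimum of $g$ over $[0,\rho]$ is attained at some $t^\ast\in[0,\rho]$, and this minimum equals $\inf_{t>0}g(t)$ by the recorded identity. This reduces the whole statement to comparing the value $g(t^\ast)$ with $0$.

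For the forward direction, I would suppose $u$ is tightly sedentary. Then in particular $u$ is sedentary, so $\inf_{t>0}g(t)=C>0$; hence $g(t)\ge C>0$ for every $t>0$, giving $g(t)\neq 0$ on $(0,\rho]$, while $g(0)=\lvert U_M(0)_{u,u}\rvert=1\neq 0$ since $U_M(0)=I$. Thus $U_M(t)_{u,u}\neq 0$ for all $t\in[0,\rho]$.

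For the converse, I would suppose $U_M(t)_{u,u}\neq 0$ for all $t\in[0,\rho]$, that is, $g$ is nonvanishing on $[0,\rho]$. Then $C:=g(t^\ast)=\min_{t\in[0,\rho]}g(t)>0$, and by the periodicity identity $\inf_{t>0}g(t)=C>0$, so $u$ is $C$-sedentary. It remains to check that this infimum is actually attained at some strictly positive time, which is the one place requiring care because the infimum in Definition~\ref{def} ranges over $t>0$ rather than $t\ge 0$. If $t^\ast>0$ we are done immediately. If instead the minimum over $[0,\rho]$ is attained only at $t^\ast=0$, then $C=g(0)=1$; but $g(t)\le 1$ for all $t$ since $U_M(t)$ is unitary, so $g\equiv 1$ and the value $C=1$ is attained at every $t>0$. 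In either case $u$ is tightly $C$-sedentary.

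The argument is short, and the only genuine subtlety is the endpoint issue just described: because tight sedentariness asks for the infimum over $t>0$ to be attained, one must rule out the degenerate possibility that the minimizer sits solely at $t=0$, which is dispatched using unitarity of $U_M(t)$. Everything else is a direct consequence of continuity, compactness of $[0,\rho]$, and the periodicity reduction already provided in the text.
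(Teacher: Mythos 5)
Your proof is correct and takes essentially the same route as the paper: the paper declares the lemma ``immediate'' from the identity $\inf_{t>0}\lvert U_M(t)_{u,u}\rvert=\min_{t\in[0,\rho]}\lvert U_M(t)_{u,u}\rvert$, and your argument is exactly that reduction made explicit via continuity of $t\mapsto e^{itM}$ and compactness of $[0,\rho]$. Your handling of the endpoint case $t^\ast=0$ is a detail the paper leaves implicit; it can also be dispatched by noting that periodicity gives $\lvert U_M(\rho)_{u,u}\rvert=1=\lvert U_M(0)_{u,u}\rvert$, so any minimum attained at $0$ is also attained at $\rho>0$.
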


From Lemma~\ref{tight}, a periodic sedentary vertex is tightly sedentary. Since a rook graph has all integer eigenvalues, it is periodic. By Theorem~\ref{rook}(2), it follows that each vertex in a rook graph is tightly sedentary.

\begin{example}
By Theorem~\ref{rook}(2), the rook graphs $X=K_3\square K_4$ and $Y=K_3\square K_5$ resp.\ are $\frac{1}{6}$- and $\frac{1}{5}$-sedentary at any vertex. Since both are periodic, Lemma~\ref{tight} implies that both are tightly sedentary. Invoking Theorem~\ref{rook}(2), we get $\min_{t>0}\lvert U_{Y}(t)_{w,w}\rvert=\frac{1}{5}$ is attained at $t_1=\pi$, and so $Y$ is tightly $\frac{1}{5}$-sedentary at any vertex. But since $\nu_2(3)\neq \nu_2(4)$, we cannot say that $X$ is tightly $\frac{1}{6}$-sedentary. Indeed, by computing $U_{K_3}(t)$ and $U_{K_4}(t)$, and using the fact that $\lvert U_{X}(t)_{w,w}\rvert=\lvert U_{K_3}(t)_{u,u}\rvert\cdot \lvert U_{K_4}(t)_{v,v}\rvert$, where $w=(u,v)$, one checks that $\min_{t>0}\lvert U_{X}(t)_{w,w}\rvert \approx 0.2064$ is attained at $t_1\approx 0.9556$. Thus, $X$ is tightly $C$-sedentary at any vertex, where $C\approx 0.7936$. Moreover, since $\min_{t>0}\lvert U_{K_3}(t)_{u,u}\rvert =\frac{1}{3}$ and $\min_{t>0}\lvert U_{K_4}(t)_{v,v}\rvert=\frac{1}{2}$, which are attained at $t_1=\frac{\pi}{3}$ and $t_1=\frac{\pi}{4}$ resp., we conclude that the converse of Theorem~\ref{Cart1}(3) does not hold.
\end{example}

We now prove the main result in this section which could be used to prove that a vertex is sedentary.

\begin{theorem}
\label{eureka}
Let $u$ be a vertex of $X$ with $\sigma_u(M)=\{\lambda_1,\ldots,\lambda_r\}$, where $E_j$ is the orthogonal projection matrix corresponding to $\lambda_j$. If $S$ is a non-empty proper subset of $\sigma_u(M)$, say $S=\{\lambda_1,\ldots,\lambda_s\}$, such that
\begin{equation}
\label{eureka1}
\sum_{j=1}^s(E_j)_{u,u}=a
\end{equation}
for some $\frac{1}{2}\leq a<1$, then
\begin{equation}
\label{eureka2}
\lvert U_M(t)_{u,u}\rvert\geq \left\lvert\sum_{j=1}^s e^{it\lambda_j}(E_j)_{u,u}\right\rvert- (1-a)\quad \text{for all}\ t.
\end{equation}
If there exists a time $t_1>0$ such that
\begin{equation}
\label{eureka3}
\left\lvert\sum_{j=1}^s e^{it_1\lambda_j}(E_j)_{u,u}\right \rvert\geq 1-a,
\end{equation}
and for all $j\in\{1,\ldots,s\}$ and $k\in\{s+1,\ldots,r\}$,
\begin{equation}
\label{eureka333}
e^{it_1(\lambda_1-\lambda_j)}=1\quad  \text{and}\quad e^{it_1(\lambda_{1}-\lambda_{k})}=-1,
\end{equation}
then equality holds in (\ref{eureka2}), in which case $\lvert U_M(t_1)_{u,u} \rvert=2a-1$ and $u$ is periodic at time $2t_1$.
\end{theorem}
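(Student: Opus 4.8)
The plan is to work directly from the spectral expansion $U_M(t)_{u,u}=\sum_{j=1}^r e^{it\lambda_j}(E_j)_{u,u}$ and to split the sum at the index $s$. First I would record two elementary facts about the diagonal entries of the projections. Since each $E_j$ is an orthogonal projection, $(E_j)_{u,u}=\mathbf{e}_u^T E_j^2 \mathbf{e}_u=\|E_j\mathbf{e}_u\|^2\geq 0$, and since $\sum_j E_j=I$, summing over the \emph{entire} support gives $\sum_{j=1}^r (E_j)_{u,u}=1$. Combined with the hypothesis (\ref{eureka1}), this forces $\sum_{k=s+1}^r (E_k)_{u,u}=1-a$.

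To obtain (\ref{eureka2}), I would write $U_M(t)_{u,u}=P(t)+Q(t)$, where $P(t)=\sum_{j=1}^s e^{it\lambda_j}(E_j)_{u,u}$ and $Q(t)=\sum_{k=s+1}^r e^{it\lambda_k}(E_k)_{u,u}$. The reverse triangle inequality gives $\lvert U_M(t)_{u,u}\rvert\geq \lvert P(t)\rvert-\lvert Q(t)\rvert$, and since $\lvert e^{it\lambda_k}\rvert=1$ and each diagonal entry is nonnegative, $\lvert Q(t)\rvert\leq \sum_{k=s+1}^r (E_k)_{u,u}=1-a$. Substituting yields (\ref{eureka2}) for all $t$, with $\lvert P(t)\rvert=\lvert\sum_{j=1}^s e^{it\lambda_j}(E_j)_{u,u}\rvert$.

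For the equality case, I would evaluate $P$ and $Q$ at the special time $t_1$ using (\ref{eureka333}). The conditions say $e^{it_1\lambda_j}=e^{it_1\lambda_1}$ for $j\leq s$ and $e^{it_1\lambda_k}=-e^{it_1\lambda_1}$ for $k>s$, so factoring the common phase gives $P(t_1)=e^{it_1\lambda_1}\sum_{j=1}^s (E_j)_{u,u}=a\,e^{it_1\lambda_1}$ and $Q(t_1)=-e^{it_1\lambda_1}\sum_{k=s+1}^r (E_k)_{u,u}=-(1-a)e^{it_1\lambda_1}$. Hence $U_M(t_1)_{u,u}=e^{it_1\lambda_1}(a-(1-a))=(2a-1)e^{it_1\lambda_1}$, so $\lvert U_M(t_1)_{u,u}\rvert=2a-1$ using $a\geq \tfrac12$. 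Since $\lvert P(t_1)\rvert=a$ makes the right-hand side of (\ref{eureka2}) equal to $a-(1-a)=2a-1$, equality is attained at $t_1$; I would also remark that (\ref{eureka3}) is automatically satisfied here because $\lvert P(t_1)\rvert=a\geq 1-a$. For periodicity, squaring the phase relations in (\ref{eureka333}) collapses both cases via $(-1)^2=1$, giving $e^{2it_1\lambda_j}=e^{2it_1\lambda_1}$ for \emph{every} $j$ in the support, whence $U_M(2t_1)_{u,u}=e^{2it_1\lambda_1}\sum_{j=1}^r (E_j)_{u,u}=e^{2it_1\lambda_1}$, of modulus $1$, so $u$ is periodic at time $2t_1$.

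I do not expect a genuine technical obstacle; the proof is essentially bookkeeping with the triangle inequality and unimodular phase factors. The conceptual heart, and the step I would flag as the crux, is recognizing that the two conditions in (\ref{eureka333}) are precisely what is needed to \emph{saturate} the triangle-inequality bound on $\lvert Q(t)\rvert$: one alignment makes the $S$-block contribute its full magnitude $a$ in phase, while the other makes the complementary block contribute $1-a$ in exact anti-phase, so the gap $2a-1$ in (\ref{eureka2}) is realized exactly rather than merely bounded. The only point requiring care beyond this is the justification that the diagonal projection entries are nonnegative and sum to one, since that is what legitimizes the bound $\lvert Q(t)\rvert\leq 1-a$ in the first place.
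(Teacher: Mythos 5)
Your proof is correct and follows essentially the same route as the paper: split $U_M(t)_{u,u}$ into the block over $S$ and its complement, bound the complementary block by $1-a$ via the triangle inequality and the nonnegativity of the $(E_j)_{u,u}$, then evaluate the phases at $t_1$ using (\ref{eureka333}). Your direct verification that $t_1$ saturates the bound (including the observation that (\ref{eureka3}) is automatic once (\ref{eureka333}) holds, since $a\geq 1-a$) and your explicit computation showing $\lvert U_M(2t_1)_{u,u}\rvert=1$ are slightly cleaner than the paper's equality-condition bookkeeping, but the underlying argument is the same.
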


\begin{proof}
For brevity, let $\alpha_j=(E_{j})_{u,u}$ for each $j=1,\ldots,r$. We know that $U_M(t)_{u,u}=\sum_{j=1}^r\alpha_je^{it\lambda_j}$. Suppose (\ref{eureka1}) holds, where $1\leq s<r$ and $\frac{1}{2} \leq a<1$. Then $\sum_{j=s+1}^r\alpha_j=1-a$, and because $\alpha_j>0$ for each $j$, we obtain $\left \lvert\sum_{k=s+1}^r \alpha_ke^{it\lambda_k}\right \rvert\leq \sum_{k=s+1}^r \alpha_k=1-a$ by triangle inequality. Hence, for all $t$, we have
\begin{center}
$\lvert U_M(t)_{u,u}\rvert\stackrel{(*)}{\geq} \left\lvert\sum_{j=1}^s \alpha_je^{it\lambda_j}\right \rvert-\left\lvert\sum_{k=s+1}^r \alpha_ke^{it\lambda_k}\right \rvert \stackrel{(**)}{\geq} \left\lvert\sum_{j=1}^s \alpha_je^{it\lambda_j}\right\rvert-(1-a)$.
\end{center}
This proves (\ref{eureka2}). Equality holds in $(**)$ if and only if for some $t_1$ and $\gamma\in\mathbb{C}$, $e^{it_1\lambda_k}=-\gamma$ for each $k\in\{s+1,\ldots,r\}$. This reduces $(*)$ to $\left \lvert\sum_{j=1}^s \alpha_je^{it_1(\lambda_j-\lambda_{s+1}-\pi)}-(1-a)\right\rvert\geq\ \left \lvert\sum_{j=1}^s \alpha_je^{it_1\lambda_j}\right \rvert-(1-a)$, which is an equality if and only if $\left\lvert\sum_{j=1}^s \alpha_je^{it_1\lambda_j}\right \rvert\geq 1-a$ and $e^{it_1(\lambda_j-\lambda_{s+1})}=-1$ for each $j$. The latter yields $e^{it_1\lambda_j}=\gamma$ for $j=1,\ldots,r$. This proves (\ref{eureka3}) and (\ref{eureka333}). If equality holds in $(*)$ and $(**)$, then $\left\lvert\sum_{j=1}^s \alpha_je^{it_1\lambda_j}\right\rvert=a$, and so $\lvert U_M(t_1)_{u,u}\rvert=2a-1$. The statement about periodicity is straightforward.
\end{proof}

The following lemma helps us identify sharply sedentary vertices which are not tightly sedentary.

\begin{lemma}
\label{eurekarem2}
Suppose the premise of Theorem~\ref{eureka} holds. If $\ell_j$ and $m_j$ are integers such that
\begin{equation*}
\label{eur}
\sum_{j=1}^sm_j\lambda_j+\sum_{j=s+1}^r\ell_j\lambda_j=0\quad \text{and}\quad \sum_{j=1}^sm_j+\sum_{j=s+1}^r\ell_j=0
\end{equation*}
implies that $\sum_{j=1}^sm_j$ is even, then there exists a sequence $\{t_k\}$ such that $\lim_{k\rightarrow\infty}\lvert U_M(t_k)_{u,u} \rvert=2a-1$.
\end{lemma}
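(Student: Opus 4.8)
The plan is to manufacture, via Kronecker's simultaneous approximation theorem, a sequence of times along which the phases $e^{it_k\lambda_j}$ align in the limit exactly as prescribed by (\ref{eureka333}), even though no single finite time $t_1$ realizes them. Writing $\alpha_j=(E_j)_{u,u}$, I would first recall that $\alpha_j>0$ for each $\lambda_j\in\sigma_u(M)$, that $\sum_{j=1}^r\alpha_j=1$, and hence by (\ref{eureka1}) that $\sum_{k=s+1}^r\alpha_k=1-a$. Since $U_M(t)_{u,u}=\sum_{j=1}^r\alpha_j e^{it\lambda_j}$, the goal is to produce times $t_k$ for which
\[
e^{it_k(\lambda_j-\lambda_1)}\to 1\ \ (1\le j\le s),\qquad e^{it_k(\lambda_k-\lambda_1)}\to -1\ \ (s<k\le r).
\]
Factoring out $e^{it_k\lambda_1}$, which has modulus one, these limits force $|U_M(t_k)_{u,u}|\to |a-(1-a)|=2a-1$, the desired conclusion (here $\tfrac12\le a<1$ makes $2a-1\ge 0$). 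Passing to the differences $\lambda_j-\lambda_1$ is the key device: it absorbs the undetermined global phase $\gamma=e^{it\lambda_1}$ of (\ref{eureka333}) and reduces the problem to genuine simultaneous Diophantine approximation.

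Next I would apply Kronecker's theorem to the $r-1$ frequencies $\lambda_2-\lambda_1,\ldots,\lambda_r-\lambda_1$ with target phases $\beta_j=0$ for $2\le j\le s$ and $\beta_k=\pi$ for $s<k\le r$. In the form I intend to use, these targets are simultaneously approximable modulo $2\pi$ (so that a sequence $\{t_k\}$ with the limits above exists) if and only if every exact integer relation $\sum_{j=2}^r c_j(\lambda_j-\lambda_1)=0$ is respected by the targets, i.e.\ forces $\sum_{k=s+1}^r c_k\pi\equiv 0\pmod{2\pi}$, which means $\sum_{k=s+1}^r c_k$ must be even.

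The remaining step is to extract this parity condition from the hypothesis, and this is where the bookkeeping must be handled carefully. Given integers $c_2,\ldots,c_r$ with $\sum_{j=2}^r c_j(\lambda_j-\lambda_1)=0$, I would set $c_1=-\sum_{j=2}^r c_j$, so that $\sum_{j=1}^r c_j\lambda_j=0$ and $\sum_{j=1}^r c_j=0$. Taking $m_j=c_j$ for $1\le j\le s$ and $\ell_k=c_k$ for $s<k\le r$, both displayed equations in the hypothesis hold, so $\sum_{j=1}^s c_j$ is even; since the total sum vanishes, $\sum_{k=s+1}^r c_k=-\sum_{j=1}^s c_j$ is even as well, which is precisely Kronecker's criterion. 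Invoking the theorem then produces the sequence $\{t_k\}$ and finishes the argument.

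The main obstacle I anticipate is not any single computation but this final translation: one must correctly match the lemma's two linear conditions (the eigenvalue relation and the zero-sum relation) to Kronecker's relation-respecting criterion, keeping track of the reference eigenvalue $\lambda_1$ and converting the evenness of $\sum_{j\le s}c_j$ into that of $\sum_{k>s}c_k$ through the zero-sum constraint. A secondary point worth stating explicitly is the precise form of Kronecker's theorem being used, namely that approximate simultaneous solvability of $t\mu_j\equiv\beta_j\pmod{2\pi}$ is equivalent to every exact integer relation among the $\mu_j$ being respected by the $\beta_j$ modulo $2\pi$, since the entire argument rests on this equivalence.
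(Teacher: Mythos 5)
Your proof is correct and is essentially the paper's own argument: the paper proves this lemma by pointing to the Kronecker-theorem proof of the PGST characterization in \cite{Kempton2017a} (with $\sigma_{uv}^{+}(M)$, $\sigma_{uv}^{-}(M)$ replaced by $\sigma_u(M)\backslash S$ and $S$), and that cited proof is precisely what you have written out in full --- passing to the relative frequencies $\lambda_j-\lambda_1$, targeting phases $0$ on $S$ and $\pi$ off $S$, and verifying the relation-respecting criterion by converting the lemma's two integer conditions into the evenness of $\sum_{k>s}c_k$ via the zero-sum constraint. Your explicit bookkeeping (defining $c_1=-\sum_{j\geq 2}c_j$ and using $\alpha_j>0$, $\sum_j\alpha_j=1$ to get $\lvert U_M(t_k)_{u,u}\rvert\to\lvert a-(1-a)\rvert=2a-1$) matches the intended argument exactly.
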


The proof of Lemma~\ref{eurekarem2} is similar to the proof of a characterization of PGST between two vertices \cite[Lemma 2.2]{Kempton2017a}, except that we replace the sets $\sigma_{uv}^+(M)$ and $\sigma_{uv}^-(M)$ resp. by $\sigma_u(M)\backslash S$ and $S$.

Using Theorem~\ref{eureka} and Lemma~\ref{eurekarem2}, we obtain the following sufficient conditions for sedentariness.

\begin{corollary}
\label{singleton}
Let  $u$ be a vertex of $X$ and suppose $\varnothing\neq S\subseteq \sigma_u(M)$.
\begin{enumerate}
\item Let $S=\{\lambda_1\}$. If $(E_1)_{u,u}=a$, then $\lvert U(t)_{u,u}\rvert\geq 2a-1$ for all $t$. The following also hold.
\begin{enumerate}
\item If $a>\frac{1}{2}$, then $u$ is $(2a-1)$-sedentary. This is tight (resp., sharp) whenever (\ref{eureka333}) (resp., Lemma~\ref{eurekarem2}) holds. Moreover, if $u$ is periodic, then $u$ is tightly $C$-sedentary for some $C\geq 2a-1$.
\item Suppose (\ref{eureka3}) and (\ref{eureka333}) hold, or Lemma~\ref{eurekarem2} holds. If $a=\frac{1}{2}$, then $u$ is not sedentary.
\end{enumerate}
\item Let $\lvert S\rvert\geq 2$, $b>0$ and $F(t)=\left\lvert\sum_{j=1}^s e^{it\lambda_j}(E_{j})_{u,u}\right\rvert$. If $a>\frac{1}{2}$, then $u$ is sedentary whenever (i) $F(t)-(1-a)>b$ for all $t$ or (ii) $F(t)\geq 1-a$ for all $t$, $u$ is periodic and $U(t_1)_{u,u}\neq 0$ for all $t_1$ with $F(t_1)=1-a$.
\end{enumerate}
\end{corollary}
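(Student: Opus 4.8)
The plan is to read off every assertion as a specialization of Theorem~\ref{eureka}, together with Lemma~\ref{eurekarem2} and Lemma~\ref{tight}, splitting into the case $\lvert S\rvert=1$ (part~1) and $\lvert S\rvert\geq 2$ (part~2). The guiding observation for part~1 is that when $S=\{\lambda_1\}$ the $S$-sum on the right-hand side of (\ref{eureka2}) collapses to the single term $e^{it\lambda_1}(E_1)_{u,u}$, whose modulus is the constant $a$. The guiding observation for part~2 is simply that $F(t)$ \emph{is} the modulus of that $S$-sum, so (\ref{eureka2}) reads $\lvert U(t)_{u,u}\rvert\geq F(t)-(1-a)$.

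For the opening inequality of part~1, I would split on $a$. When $\tfrac12\leq a<1$ the hypothesis of Theorem~\ref{eureka} is met with $s=1$, and since the $S$-sum has constant modulus $a$, the bound (\ref{eureka2}) becomes $\lvert U(t)_{u,u}\rvert\geq a-(1-a)=2a-1$; when $a<\tfrac12$ the claim is vacuous since $\lvert U(t)_{u,u}\rvert\geq 0>2a-1$. For (1a), $a>\tfrac12$ yields $2a-1>0$, so $u$ is $(2a-1)$-sedentary by definition. For tightness I would verify that the hypotheses of Theorem~\ref{eureka} reduce when $s=1$: condition (\ref{eureka3}) reads $a\geq 1-a$, automatic for $a\geq\tfrac12$, while the first relation in (\ref{eureka333}) is vacuous; hence whenever the remaining relation in (\ref{eureka333}) holds, Theorem~\ref{eureka} supplies a time $t_1$ with $\lvert U(t_1)_{u,u}\rvert=2a-1$, so the bound is attained and $u$ is tightly $(2a-1)$-sedentary. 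Sharpness under Lemma~\ref{eurekarem2} follows because that lemma produces a sequence $\{t_k\}$ with $\lvert U(t_k)_{u,u}\rvert\to 2a-1$, matching the lower bound. The periodic case combines $\lvert U(t)_{u,u}\rvert\geq 2a-1>0$ with Lemma~\ref{tight} to give tight sedentariness with $C=\min_{t\in[0,\rho]}\lvert U(t)_{u,u}\rvert\geq 2a-1$.

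For (1b) I would set $a=\tfrac12$, so $2a-1=0$. If (\ref{eureka3}) and (\ref{eureka333}) hold, Theorem~\ref{eureka} gives a time $t_1$ with $\lvert U(t_1)_{u,u}\rvert=2a-1=0$; if instead Lemma~\ref{eurekarem2} holds, a sequence $\{t_k\}$ drives $\lvert U(t_k)_{u,u}\rvert$ to $0$. Either way $\inf_{t>0}\lvert U(t)_{u,u}\rvert=0$, so $u$ is not sedentary. For part~2, rewriting (\ref{eureka2}) as $\lvert U(t)_{u,u}\rvert\geq F(t)-(1-a)$ does all the work: under (i) the right side exceeds $b>0$ uniformly in $t$, so $\inf_{t>0}\lvert U(t)_{u,u}\rvert\geq b>0$ and $u$ is sedentary; under (ii) the bound gives $\lvert U(t)_{u,u}\rvert>0$ wherever $F(t)>1-a$, the hypothesis supplies $\lvert U(t)_{u,u}\rvert>0$ wherever $F(t)=1-a$, and $F(t)\geq 1-a$ everywhere excludes the remaining possibility, so $U(t)_{u,u}\neq 0$ for all $t$; invoking periodicity and Lemma~\ref{tight} then yields tight, hence ordinary, sedentariness.

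I expect no serious obstacle; the entire effort is careful bookkeeping to confirm that each sub-hypothesis matches the precise hypotheses of Theorem~\ref{eureka} and Lemma~\ref{eurekarem2}. The two points demanding attention are the vacuity of the $j\in\{1,\ldots,s\}$ half of (\ref{eureka333}) and the automaticity of (\ref{eureka3}) in the singleton case, and the need to dispatch the ranges $a<\tfrac12$ and the degenerate $a=1$ (forcing $\sigma_u(M)=\{\lambda_1\}$ and $\lvert U(t)_{u,u}\rvert=1$) by hand, since Theorem~\ref{eureka} is stated only for a proper subset $S$ with $\tfrac12\leq a<1$.
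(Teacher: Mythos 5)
Your proposal is correct and takes essentially the same route as the paper's proof: both specialize Theorem~\ref{eureka} to the singleton case (noting that (\ref{eureka3}) holds automatically when $a\geq\tfrac12$), invoke Lemma~\ref{eurekarem2} for sharpness, and use Lemma~\ref{tight} for the periodic claim in (1a) and for case (2ii). Your write-up is in fact slightly more careful than the paper's, which silently dispatches the ranges $a<\tfrac12$ and $a=1$ and mislabels the Lemma~\ref{eurekarem2} conclusion as ``tightly'' rather than ``sharply'' sedentary.
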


\begin{proof}
The statement in (1) follows from Theorem~\ref{eureka}. To prove (1a), let $a<\frac{1}{2}$. Then $u$ is clearly $2a$-sedentary. Since (\ref{eureka3}) holds by default, the sedentariness is tight by Theorem~\ref{eureka} whenever (\ref{eureka333}) holds. If the premise of Lemma~\ref{eurekarem2} holds, then $\inf_{t>0}\lvert U_M(t)_{u,u}\rvert=2a-1$, and so $u$ is tightly sedentary. If we add that $u$ is periodic, then $\min_{t>0}\lvert U_M(t)_{u,u}\rvert=C\geq 2a-1$, where $\min_{t>0}\lvert U_M(t)_{u,u}\rvert$ is attained at some $t_1\in(o,\rho)$. Thus, $u$ is $C$-sedentary. For (1b), if $a=\frac{1}{2}$, then $\lvert U_M(t)_{u,u}\rvert\geq 0$. If (\ref{eureka3}) and (\ref{eureka333}) hold, then $\lvert U_M(t_1)_{u,u}\rvert=0$ at some $t_1\in(o,\rho)$, while if Lemma~\ref{eurekarem2} holds, then $\inf_{t>0}\lvert U_M(t)_{u,u}\rvert=0$. This proves (1b). Finally, let $\lvert S\rvert\geq 2$ and $a> \frac{1}{2}$. If (2i) holds, then $u$ is sedentary by (\ref{eureka2}). If (2ii) holds, then $\lvert U_M(t)_{u,u}\rvert>0$ for all $t$, and so Lemma~\ref{tight} implies that $u$ is tightly sedentary. This proves (2).
\end{proof}

As we will see, Corollary~\ref{singleton}(1) will be useful in the later sections. We end this section with an example that illustrates Corollary~\ref{singleton}.

\begin{example}
Consider the path $P_3$ with end vertex $u$. Then $\sigma_u(A)=\{\pm\sqrt{2},0\}$ with associated eigenvectors $(1,\pm\sqrt{2},1)$ and $(1,0,-1)$, while $\sigma_u(L)=\{3,1,0\}$ with associated eigenvectors $(1,-2,1)$, $(1,0,-1)$ and $\textbf{1}$. Note that $u$ is periodic in both cases. Moreover,
\begin{center}
$U_A(t)_{u,u}=\frac{1}{4}e^{i\sqrt{2}t}+\frac{1}{4}e^{-i\sqrt{2}t}+\frac{1}{2}\quad \text{and}\quad U_L(t)_{u,u}=\frac{1}{6}e^{i3t}+\frac{1}{2}e^{it}+\frac{1}{3}$.
\end{center}
For $A$, let $S=\{0\}$. Then $(E_{0})_{u,u}=1/2$ and one checks that (\ref{eureka3}) and (\ref{eureka333}) hold at $t_1=\pi/\sqrt{2}$. By Corollary~\ref{singleton}(1b), $u$ is not sedentary, which is consistent with the fact that adjacency PST occurs between end vertices of $P_3$ at $t_1$. For $L$, take $S=\{3,1\}$ so that $(E_{3})_{u,u}+(E_{1})_{u,u}=2/3$. Applying Theorem~\ref{eureka} with $a=2/3$, we get that $F(t)=\left\lvert\frac{1}{6}e^{i3t}+\frac{1}{2}e^{it}\right\rvert-\frac{1}{3}$. Now, $F(t_1)=0$ if and only if $t_1=j\pi/2$ for any odd $j$. Since $ U_L(t_1)_{ u,u}\neq 0$ and $u$ is periodic, Corollary~\ref{singleton}(2ii) implies that $u$ is tightly sedentary. 
\end{example}

\section{Twin vertices}\label{secTwins}

In this section, we show that a vertex with at least two twins is sedentary. Unless otherwise stated, all results in this section apply to both $\textbf{A}$ and $\mathcal{A}$.

Two vertices $u$ and $v$ of $X$ are \textit{twins} if (i) $N_X(u)\backslash \{u,v\}=N_X(v)\backslash \{u,v\}$, (ii) the edges $(u,w)$ and $(v,w)$ have the same weight for each $w\in N_X(u)\backslash \{u,v\}$, and (iii) the loops on $u$ and $v$ have the same weight if they exist.  We say that a subset $T=T(\omega,\eta)$ of $V(X)$ with at least two vertices is a \textit{set of twins} in $X$ if each pair of vertices in $T$ are twins, where each vertex in $T$ has a loop of weight $\omega$ whenever $\omega\neq 0$ and every pair of vertices in $T$ are connected by an edge with weight $\eta$ whenever $\eta\neq 0$. Since there exists an automorphism that switches any pair of twins, it follows that all vertices in $T$ are pairwise cospectral. For a more extensive treatment of the role of twin vertices in quantum state transfer, see \cite{Monterde}.

We now restate a spectral characterization of twin vertices \cite[Lemma 2.9]{MonterdeELA}.

\begin{lemma}
\label{alphabeta}
Let $T=T(\omega,\eta)$ be a set of twins in $X$. Then $u,v\in T$ if and only if
$\textbf{e}_u-\textbf{e}_v$ is an eigenvector of $M$ corresponding to the eigenvalues $\theta$ given by
\begin{equation}
\label{adjalpha}
\theta=
\begin{cases}
 \alpha \operatorname{deg}(u)+\omega-\eta, &\text{if $M=\textbf{A}$}\\
 \frac{\omega-\eta}{\operatorname{deg}(u)}, &\text{if $M=\mathcal{A}$}.
\end{cases}
\end{equation}
\end{lemma}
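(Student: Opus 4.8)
The plan is to establish the biconditional via the spectral structure of twin vertices, treating the two cases $M=\mathbf{A}$ and $M=\mathcal{A}$ in parallel. First I would prove the forward direction. Suppose $u,v\in T=T(\omega,\eta)$. I would directly verify that $\mathbf{e}_u-\mathbf{e}_v$ is an eigenvector by computing $M(\mathbf{e}_u-\mathbf{e}_v)$ entrywise. For a coordinate $w\notin\{u,v\}$, the twin conditions (i) and (ii) guarantee that $M_{w,u}=M_{w,v}$, so the $w$-entry of $M(\mathbf{e}_u-\mathbf{e}_v)$ vanishes. The only surviving entries are those indexed by $u$ and $v$, and here I would carefully account for the diagonal contribution. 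For $M=\mathbf{A}=\alpha D+A$, the $u$-entry picks up $\alpha\operatorname{deg}(u)+\omega$ from the diagonal (loop weight $\omega$ plus the degree term) minus the off-diagonal $\eta$ coming from the edge $[u,v]$, giving eigenvalue $\alpha\operatorname{deg}(u)+\omega-\eta$; the $v$-entry gives the negative of this, using that $\operatorname{deg}(u)=\operatorname{deg}(v)$ since $u,v$ are twins. For $M=\mathcal{A}=D^{-1/2}AD^{-1/2}$, the analogous computation normalizes by $1/\operatorname{deg}(u)$, yielding $\theta=(\omega-\eta)/\operatorname{deg}(u)$.

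For the reverse direction, I would assume $\mathbf{e}_u-\mathbf{e}_v$ is an eigenvector of $M$ with the stated eigenvalue $\theta$ and deduce that $u$ and $v$ are twins. The equation $M(\mathbf{e}_u-\mathbf{e}_v)=\theta(\mathbf{e}_u-\mathbf{e}_v)$ forces, for every $w\notin\{u,v\}$, the relation $M_{w,u}-M_{w,v}=0$, i.e., $M_{w,u}=M_{w,v}$. In the $\mathbf{A}$ case this says $A_{w,u}=A_{w,v}$ for all such $w$, which is precisely conditions (i) and (ii): $u$ and $v$ share the same neighbours outside $\{u,v\}$ with matching edge weights. The $u$- and $v$-coordinates of the eigenvector equation then pin down the diagonal and the $[u,v]$ edge, recovering the common loop weight (condition (iii)) and forcing the loops on $u$ and $v$ to agree, from which $u,v\in T$ follows. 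The $\mathcal{A}$ case is handled by clearing the normalization factors.

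The main obstacle I anticipate is bookkeeping around the diagonal and the possible edge between $u$ and $v$ itself, rather than any deep difficulty. One must be precise about how the loop weight $\omega$ enters the degree (the paper defines $\operatorname{deg}(u)=2\omega_{u,u}+\sum_{j\neq u}\omega_{u,j}$, so loops are counted with a factor of two) and how the edge weight $\eta$ between the twins contributes off-diagonally to both the $u$- and $v$-entries with a sign. In the normalized case $M=\mathcal{A}$, the subtlety is that the reverse direction requires the degrees of $u$ and $v$ to coincide before one can cleanly extract the twin conditions; I would argue that the eigenvector relation at coordinates $u$ and $v$, combined with the already-established equality of neighbourhoods, forces $\operatorname{deg}(u)=\operatorname{deg}(v)$, so the normalization is consistent. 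Since this is a restatement of \cite[Lemma 2.9]{MonterdeELA}, I would keep the argument concise, emphasizing that both directions reduce to reading off the entries of the eigenvalue equation $M(\mathbf{e}_u-\mathbf{e}_v)=\theta(\mathbf{e}_u-\mathbf{e}_v)$.
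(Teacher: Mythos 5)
The paper itself does not prove Lemma~\ref{alphabeta}; it is restated from \cite[Lemma 2.9]{MonterdeELA}, and only the forward implication (twins give an eigenvector with eigenvalue $\theta$) is ever used downstream, e.g.\ in Theorem~\ref{sed}. Your forward direction is the standard entrywise verification and is correct, including the bookkeeping for loops, for the edge $[u,v]$, and the use of $\operatorname{deg}(u)=\operatorname{deg}(v)$ for twins.

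Your reverse direction, however, has genuine gaps. (a) The rows indexed by $u$ and $v$ can only ever determine the \emph{difference} of the loop and edge weights, never the weights themselves: writing $\omega_u=A_{u,u}$ and $c=A_{u,v}$, the row-$u$ equation for $M=\mathbf{A}$ reads $\alpha\operatorname{deg}(u)+\omega_u-c=\theta$, i.e.\ $\omega_u-c=\omega-\eta$. So the step ``recovering the common loop weight \ldots from which $u,v\in T$ follows'' cannot work: at best $u,v$ form a twin pair whose parameters satisfy $\omega_u-c=\omega-\eta$. In fact the literal biconditional is false, so no proof of it exists. Take $M=A$ and the connected $4$-vertex graph in which $u,v$ carry loops of weight $2$ and are joined by an edge of weight $3$, $u',v'$ are loopless and joined by an edge of weight $1$, and the four cross edges $[u,u'],[u,v'],[v,u'],[v,v']$ all have weight $1$. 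Then $T=\{u,v\}=T(2,3)$ is a set of twins with $\theta=2-3=-1$, and $\mathbf{e}_{u'}-\mathbf{e}_{v'}$ is an eigenvector of $A$ with eigenvalue $-1=\theta$, yet $u',v'\notin T$. (b) Even for the weaker claim ``eigenvector $\Rightarrow$ $u,v$ are twins of each other,'' your outline has two problems. For $M=\mathbf{A}$, equality of the two loops comes from combining the rows at $u$ and $v$ with the identity $\operatorname{deg}(u)-\operatorname{deg}(v)=2(\omega_u-\omega_v)$, which yields $(1+2\alpha)(\omega_u-\omega_v)=0$; this gives nothing when $\alpha=-\tfrac12$, and one can construct non-twin pairs with unequal loops satisfying the eigenvector equation with eigenvalue exactly $\theta$ in that case. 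For $M=\mathcal{A}$, the entry at $w\notin\{u,v\}$ gives $A_{w,u}/\sqrt{\operatorname{deg}(u)}=A_{w,v}/\sqrt{\operatorname{deg}(v)}$, i.e.\ \emph{proportionality} of neighbourhood weights, not equality; so conditions (i)--(ii) do not follow from the off-diagonal rows alone, and your plan to deduce $\operatorname{deg}(u)=\operatorname{deg}(v)$ from ``the already-established equality of neighbourhoods'' is circular. The honest conclusion is that the forward implication is what is true (and is all the paper needs), while the converse must be weakened before it can be proved.
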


If $u\in T$, then $\theta\in\sigma_u(M)$ by Lemma~\ref{alphabeta}. We use this to prove our main result.

\begin{theorem}
\label{sed}
Let $T$ be a set of twins in $X$. If $u\in T$ with $\sigma_u(M)=\{\theta,\lambda_2,\ldots,\lambda_r\}$, then
\begin{equation}
\label{UHH}
\lvert U_M(t)_{u,u}\rvert \geq1- \frac{2}{\lvert T\rvert}\quad \text{for all $t$},
\end{equation}
with equality whenever (\ref{eureka333}) holds with $S\hspace{-0.01in} =\{\theta\}$. Further, if $\lvert T\rvert\geq 3$, then $u$ is $(1-\frac{2}{\lvert T\rvert})$-sedentary.
\end{theorem}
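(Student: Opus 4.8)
The plan is to apply the singleton case of the sufficient condition, Corollary~\ref{singleton}(1) (equivalently Theorem~\ref{eureka} with $S=\{\theta\}$), so the whole argument reduces to pinning down the diagonal projection weight $a=(E_\theta)_{u,u}$. By Lemma~\ref{alphabeta}, for every pair $v,w\in T$ the difference $\textbf{e}_v-\textbf{e}_w$ is an eigenvector of $M$ for the single eigenvalue $\theta$; in particular $\theta\in\sigma_u(M)$, so $\{\theta\}$ is a legitimate nonempty subset of the support, and it is proper as long as $r\ge 2$ (the degenerate case $\sigma_u(M)=\{\theta\}$ gives $\lvert U_M(t)_{u,u}\rvert=1$ and is trivial). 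The crux is therefore the lower bound $a\ge 1-\frac{1}{\lvert T\rvert}$.

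To obtain this bound I would work inside the coordinate subspace $V_T=\operatorname{span}\{\textbf{e}_w:w\in T\}$. Let $W=\operatorname{span}\{\textbf{e}_v-\textbf{e}_w:v,w\in T\}$, which by Lemma~\ref{alphabeta} is contained in the $\theta$-eigenspace $\operatorname{im}(E_\theta)$ and is precisely the orthogonal complement of $\textbf{1}_T=\sum_{w\in T}\textbf{e}_w$ inside $V_T$, hence $\dim W=\lvert T\rvert-1$. Since $W\subseteq\operatorname{im}(E_\theta)$ and orthogonal projection onto a larger subspace has larger norm, $(E_\theta)_{u,u}=\lVert E_\theta\textbf{e}_u\rVert^2\ge\lVert P_W\textbf{e}_u\rVert^2$, where $P_W$ projects onto $W$. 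Decomposing $\textbf{e}_u=\frac{1}{\lvert T\rvert}\textbf{1}_T+(\textbf{e}_u-\frac{1}{\lvert T\rvert}\textbf{1}_T)$, the second summand lies in $W$ while the first is orthogonal to $W$, so $P_W\textbf{e}_u=\textbf{e}_u-\frac{1}{\lvert T\rvert}\textbf{1}_T$ and a one-line computation gives $\lVert P_W\textbf{e}_u\rVert^2=1-\frac{1}{\lvert T\rvert}$. Thus $a\ge 1-\frac{1}{\lvert T\rvert}\ge\frac12$, the last inequality because $\lvert T\rvert\ge 2$.

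With $a=(E_\theta)_{u,u}\ge 1-\frac{1}{\lvert T\rvert}$ in hand, Corollary~\ref{singleton}(1) yields $\lvert U_M(t)_{u,u}\rvert\ge 2a-1\ge 1-\frac{2}{\lvert T\rvert}$ for all $t$, which is (\ref{UHH}). For the equality clause I would invoke the equality case of Theorem~\ref{eureka}: for a singleton $S=\{\theta\}$ condition (\ref{eureka3}) holds automatically (it reads $a\ge 1-a$), so when (\ref{eureka333}) holds at some $t_1$ the inequality (\ref{eureka2}) is tight and $\lvert U_M(t_1)_{u,u}\rvert=2a-1$; this attains the value $1-\frac{2}{\lvert T\rvert}$ exactly when $a$ meets its floor $1-\frac{1}{\lvert T\rvert}$ (e.g.\ when the $\theta$-eigenspace equals $W$). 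Finally, if $\lvert T\rvert\ge 3$ then $a\ge 1-\frac{1}{\lvert T\rvert}>\frac12$, so Corollary~\ref{singleton}(1a) makes $u$ a $(2a-1)$-sedentary vertex, and since $1-\frac{2}{\lvert T\rvert}\le 2a-1$, Proposition~\ref{prop}(1a) upgrades this to $(1-\frac{2}{\lvert T\rvert})$-sedentariness.

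The main obstacle is the eigenprojection estimate $(E_\theta)_{u,u}\ge 1-\frac{1}{\lvert T\rvert}$: the substance is recognizing that the $\binom{\lvert T\rvert}{2}$ twin-difference vectors collapse to a single $(\lvert T\rvert-1)$-dimensional eigenspace for $\theta$ and that $\textbf{e}_u$ already carries almost all of its mass there. Everything after that is bookkeeping through Theorem~\ref{eureka}/Corollary~\ref{singleton}. The only genuinely delicate point is the equality analysis, where one must separate the tightness coming from (\ref{eureka333}) (the first inequality) from the tightness of the projection bound $a\ge 1-\frac{1}{\lvert T\rvert}$ (the second inequality); in general $a$ can strictly exceed $1-\frac{1}{\lvert T\rvert}$, so the stated equality is the favourable case in which $\textbf{1}_T$ has no component in the part of the $\theta$-eigenspace orthogonal to $W$.
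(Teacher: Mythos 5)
Your proposal is correct and follows essentially the same route as the paper: both arguments use Lemma~\ref{alphabeta} to place the twin-difference vectors in the $\theta$-eigenspace, deduce $(E_\theta)_{u,u}\geq 1-\frac{1}{\lvert T\rvert}\geq\frac12$ (the paper states this via the block form $E_\theta=\bigl(I_{\lvert T\rvert}-\tfrac{1}{\lvert T\rvert}\textbf{J}_{\lvert T\rvert}\bigr)+F$, which is exactly your projection computation $\lVert P_W\textbf{e}_u\rVert^2=1-\tfrac{1}{\lvert T\rvert}$ made explicit), and then conclude by applying Corollary~\ref{singleton}(1) with $S=\{\theta\}$. Your closing remark on the equality clause (that attaining $1-\frac{2}{\lvert T\rvert}$ under (\ref{eureka333}) additionally requires $a$ to equal its floor $1-\frac{1}{\lvert T\rvert}$) is in fact a more careful reading than the paper's one-line conclusion.
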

\begin{proof}
Let $T$ be a set of twins in $X$. If we index the first $\lvert T\rvert$ rows of $M$ by the elements of $T$, then for a fixed $u\in T$, Lemma~\ref{alphabeta} implies that $\textbf{e}_u-\textbf{e}_v$ is an eigenvector for $M$ for all $v\in T\backslash\{u\}$ corresponding to the eigenvalue $\theta$ in (\ref{adjalpha}). Assuming $u$ is the first row of $M$, we get $E_{\theta}=\left(I_{\lvert T\rvert}-\frac{1}{\lvert T\rvert}\textbf{J}_{\lvert T\rvert}\right)+F$ for some matrix $F$. Taking $S=\{\theta\}$, we get $1-a=1-\frac{1}{\lvert T\rvert}\geq \frac{1}{2}$. Applying Corollary~\ref{singleton}(1) yields the desired result.
\end{proof}

\begin{remark}
\label{complement}
If $X$ is simple and unweighted, and $T$ is a set of twins in $X$, then $T$ is also a set of twins in the complement $X^c$ of $X$. Thus, if $X^c$ is connected, then Theorem~\ref{sed} also holds for $X^c$.
\end{remark}

Theorem~\ref{sed} reveals that twin vertices in quantum walks behave like vertices in a complete graph, which is an interesting observation because the underlying graph induced by a set of twins is either complete or empty. But unlike complete graphs, equality in (\ref{UHH}) may not be attained for other graphs.

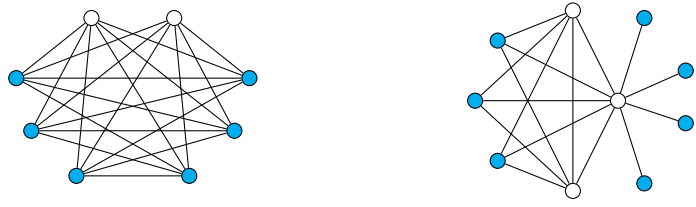
\begin{figure}[h!]
	\begin{center}
		\begin{tikzpicture}
		\tikzset{enclosed/.style={draw, circle, inner sep=0pt, minimum size=.2cm}}
	 
		\node[enclosed, fill=cyan] (v_1) at (-0.2,1.3) {};
		\node[enclosed, fill=cyan] (v_2) at (0,0.6) {};
		\node[enclosed, fill=cyan] (v_4) at (0.6,0) {};
		
		\node[enclosed] (v_3) at (0.8,2.1) {};
		\node[enclosed] (v_5) at (1.9,2.1) {};
		
		\node[enclosed, fill=cyan] (v_6) at (2.1,0) {};
		\node[enclosed, fill=cyan] (v_7) at (2.7,0.6) {};
		\node[enclosed, fill=cyan] (v_8) at (2.9,1.3) {};
		
		\draw (v_1) -- (v_3);
		\draw (v_1) -- (v_5);
		\draw (v_2) -- (v_3);
		\draw (v_2) -- (v_5);
		\draw (v_4) -- (v_3);
		\draw (v_4) -- (v_5);
		\draw (v_6) -- (v_3);
		\draw (v_6) -- (v_5);
		\draw (v_7) -- (v_3);
		\draw (v_7) -- (v_5);
		\draw (v_8) -- (v_3);
		\draw (v_8) -- (v_5);
		\draw (v_1) -- (v_6);
		\draw (v_1) -- (v_7);
		\draw (v_1) -- (v_8);
		\draw (v_2) -- (v_6);
		\draw (v_2) -- (v_7);
		\draw (v_2) -- (v_8);
		\draw (v_4) -- (v_6);
		\draw (v_4) -- (v_7);
		\draw (v_4) -- (v_8);
 		
 		\node[enclosed,fill=cyan] (w_1) at (6.2,0.2) {};
 		\node[enclosed,fill=cyan] (w_2) at (5.9,1) {};
 		\node[enclosed,fill=cyan] (w_3) at (6.2,1.8) {};
 		\node[enclosed] (w_4) at (7.2,-0.2) {};
 		\node[enclosed] (w_5) at (7.2,2.2) {};
 		\node[enclosed] (w_6) at (7.8,1) {};
 		\node[enclosed,fill=cyan] (w_7) at (8.15,-0.1) {};
 		\node[enclosed,fill=cyan] (w_8) at (8.7,0.7) {};
 		\node[enclosed,fill=cyan] (w_9) at (8.7,1.4) {};
 		\node[enclosed,fill=cyan] (w_10) at (8.15,2.1) {};
 		\draw (w_4) -- (w_5);
 		\draw (w_1) -- (w_4);
 		\draw (w_2) -- (w_4);
 		\draw (w_3) -- (w_4);
		\draw (w_1) -- (w_5);
 		\draw (w_2) -- (w_5);
 		\draw (w_3) -- (w_5);
 		\draw (w_1) -- (w_6);
 		\draw (w_2) -- (w_6);
 		\draw (w_3) -- (w_6);
 		\draw (w_4) -- (w_6);
 		\draw (w_5) -- (w_6);
 		\draw (w_7) -- (w_6);
 		\draw (w_8) -- (w_6);
 		\draw (w_9) -- (w_6);
 		\draw (w_10) -- (w_6);
		\end{tikzpicture}
	\end{center}
	\caption{The complete multipartite graph $K_{2,3,3}$ (left) and the threshold graph $((O_3\vee K_2)\cup O_4)\vee K_1$ (right) with sedentary vertices marked blue}\label{yay1}
\end{figure}

\subsection*{Joins}

Since the property of being twins is preserved under joins, Theorem~\ref{sed} yields the following results.

\begin{corollary}
\label{corjoin}
Let $T$ be a set of twins in $Y$. If $\lvert T\rvert\geq 3$, then the vertices in $T$ are $(1-\frac{2}{\lvert T\rvert})$-sedentary in $Y\vee X$ for any weighted graph $X$ with possible loops. 
\end{corollary}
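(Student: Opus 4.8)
The plan is to reduce the statement directly to Theorem~\ref{sed} by showing that the set $T$ of twins in $Y$ persists as a set of twins in the join $Y\vee X$. The key structural fact is that forming $Y\vee X$ only adds edges of weight one between every vertex of $Y$ and every vertex of $X$; it creates no new edges among the vertices of $Y$ and alters no loop weights. Consequently, the whole content of the proof is the (routine) verification that twin-ness is inherited by the join, after which Theorem~\ref{sed} does all the work.

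Concretely, I would first verify the three defining conditions of twins for an arbitrary pair $u,v\in T$ inside $Y\vee X$. Since each of $u$ and $v$ acquires exactly the vertex set $V(X)$ as additional neighbours, all joined by edges of weight one, we have
\[
N_{Y\vee X}(u)\backslash\{u,v\}=\bigl(N_Y(u)\backslash\{u,v\}\bigr)\cup V(X)=\bigl(N_Y(v)\backslash\{u,v\}\bigr)\cup V(X)=N_{Y\vee X}(v)\backslash\{u,v\},
\]
so condition (i) holds. For common neighbours lying in $V(X)$ the edge weights from $u$ and from $v$ are both one and hence equal, while for those in $V(Y)$ they agree because $u,v$ were already twins in $Y$; this gives (ii). The loops on $u$ and $v$ are inherited unchanged from $Y$, so (iii) holds. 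Thus $T=T(\omega,\eta)$ is a set of twins in $Y\vee X$ with the same parameters and the same cardinality.

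Finally, I would observe that $Y\vee X$ is connected whenever both $Y$ and $X$ are non-empty (and if $X$ is empty then $Y\vee X=Y$ and the claim is exactly Theorem~\ref{sed}), so the standing hypotheses on the ambient graph are met. Since $\lvert T\rvert\geq 3$, invoking Theorem~\ref{sed} for each $u\in T$ yields $\lvert U_M(t)_{u,u}\rvert\geq 1-\tfrac{2}{\lvert T\rvert}$ for all $t$, i.e.\ each vertex of $T$ is $(1-\tfrac{2}{\lvert T\rvert})$-sedentary in $Y\vee X$. There is no genuine analytic obstacle here; the only point requiring a moment's care is that the argument is insensitive to the choice of $X$. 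The degrees of the twin vertices do change when passing to the join (so the eigenvalue $\theta$ of~\eqref{adjalpha} shifts, especially for $\mathcal{A}$), but Lemma~\ref{alphabeta} guarantees that $\textbf{e}_u-\textbf{e}_v$ remains an eigenvector with $\theta\in\sigma_u(M)$ for \emph{any} set of twins, so Theorem~\ref{sed} applies regardless of the ambient graph.
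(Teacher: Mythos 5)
Your proposal is correct and follows exactly the paper's argument: the paper justifies Corollary~\ref{corjoin} with the single observation that ``the property of being twins is preserved under joins,'' and then invokes Theorem~\ref{sed}, which is precisely your reduction (you merely spell out the routine verification of the twin conditions in $Y\vee X$ that the paper leaves implicit). Your closing remark that the shift in $\theta$ from \eqref{adjalpha} is harmless, since Lemma~\ref{alphabeta} and hence Theorem~\ref{sed} apply to any set of twins in any ambient graph, is exactly the right point of care and is consistent with the paper.
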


\begin{corollary}
\label{corjoin1}
Let $X$ be a weighted graph with possible loops. For each $m\geq 3$, the vertices of $K_m$ and $O_m$ resp. are $(1-\frac{2}{m})$-sedentary in $K_m\vee X$ and $O_m\vee X$.
\end{corollary}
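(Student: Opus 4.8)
The plan is to recognize that the entire vertex set of $K_m$ (respectively $O_m$) constitutes a set of twins, and then to invoke Corollary~\ref{corjoin} directly. First I would verify the twin structure in each base graph. In $K_m$, any two vertices $u$ and $v$ are mutually adjacent via an edge of weight one, and they share precisely the same neighbours among the remaining $m-2$ vertices; there are no loops. Hence every pair in $V(K_m)$ satisfies conditions (i)--(iii) in the definition of twins, so $T=V(K_m)$ is a set of twins $T(0,1)$ of cardinality $m$. Symmetrically, in $O_m$ no two vertices are adjacent and each has empty neighbourhood, so every pair is trivially a pair of twins and $T=V(O_m)$ is a set of twins $T(0,0)$ of cardinality $m$.

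With the twin structure in hand, the second step is a single application of Corollary~\ref{corjoin}. Taking $Y=K_m$ (respectively $Y=O_m$) and noting that $|T|=m\geq 3$ by hypothesis, Corollary~\ref{corjoin} asserts that the vertices of $T$ are $\left(1-\frac{2}{|T|}\right)$-sedentary in $Y\vee X$ for any weighted graph $X$ with possible loops. Substituting $|T|=m$ yields the claimed bound $1-\frac{2}{m}$ for the vertices of $K_m$ in $K_m\vee X$ and for the vertices of $O_m$ in $O_m\vee X$.

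There is no substantive obstacle here: the result is a direct specialization of Corollary~\ref{corjoin} once one observes that $K_m$ and $O_m$ are precisely the two extreme cases in which the whole vertex set forms a set of twins (the induced subgraph on a set of twins being either complete or empty, as remarked after Theorem~\ref{sed}). The only point requiring a moment's care is to confirm that the preservation of the twin property under the join --- which is what allows Corollary~\ref{corjoin} to pass from $Y$ to $Y\vee X$ --- is exactly the mechanism already encapsulated in that corollary, so that no separate verification of twinhood in $Y\vee X$ is needed.
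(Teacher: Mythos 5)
Your proof is correct and follows the same route as the paper: the paper obtains Corollary~\ref{corjoin1} exactly as you do, by observing that $V(K_m)$ and $V(O_m)$ are sets of twins of size $m\geq 3$ and that twinhood is preserved under joins, so Corollary~\ref{corjoin} (equivalently, Theorem~\ref{sed} applied in $Y\vee X$) gives the $\left(1-\frac{2}{m}\right)$ bound.
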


By Corollary~\ref{corjoin1}, a degree $m-1$ vertex of $K_m\backslash e=K_{m-2}\vee O_2$ is $(1-\frac{2}{m-1})$-sedentary for all $m\geq 5$.

We now examine sedentariness in two well known classes of graphs obtained using the join operation.

\begin{corollary}
\label{cmgtg}
Let $n_1,\ldots, n_{2k}$ be integers such that $n_j\geq 3$ for some $j\in\{1,\ldots,k\}$.
\begin{enumerate}
\item Each vertex of $K_{n_1,n_2,\ldots,n_k}$ in the partite set of size $n_j$ is $(1-\frac{2}{n_j})$-sedentary. Moreover, if $\lvert\{\ell:n_{\ell}=1\}\rvert=p\geq 3$, then each vertex in a singleton partite set of $K_{n_1,n_2,\ldots,n_k}$ is $(1-\frac{2}{p})$-sedentary.
\item Each vertex of $Z\in\{K_{n_j},O_{n_j}\}$ is $(1-\frac{2}{n_j})$-sedentary in the threshold graph
\begin{equation}
\label{thgr}
((((O_{n_1}\vee K_{n_2})\cup O_{n_3})\vee K_{n_4})\cdots)\vee K_{n_{2k}}\quad\text{or} \quad ((((K_{n_1}\cup O_{n_2})\vee K_{n_3})\cup O_{n_4})\cdots)\vee K_{n_{2k+1}}.
\end{equation}
\end{enumerate}
\end{corollary}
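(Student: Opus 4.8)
The plan is to reduce everything to Theorem~\ref{sed} by exhibiting, in each graph, a set of twins of size at least three and invoking the bound $\lvert U_M(t)_{u,u}\rvert \geq 1-\frac{2}{\lvert T\rvert}$.

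For part (1), I would first observe that $K_{n_1,\ldots,n_k}$ is the join $O_{n_1}\vee\cdots\vee O_{n_k}$, so that each partite set $V_j$ of size $n_j$ is exactly the vertex set of one empty factor. Any two vertices in $V_j$ share the same neighbours (all vertices outside $V_j$) and are mutually non-adjacent, hence $V_j$ is a set of twins $T(0,0)$ with $\lvert V_j\rvert=n_j$. When $n_j\geq 3$, writing the graph as $O_{n_j}\vee X$ and applying Corollary~\ref{corjoin1} (equivalently Theorem~\ref{sed}) shows that each such vertex is $(1-\frac{2}{n_j})$-sedentary. For the ``moreover'' clause I would collect the $p$ singleton partite sets: each such vertex is adjacent to every other vertex of the graph, and in particular to all the other singletons, so these $p$ vertices form a clique and constitute a set of twins $T(0,1)$ of size $p$; writing the graph as $K_p\vee X$ and applying Corollary~\ref{corjoin1} when $p\geq 3$ finishes part (1).

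For part (2), the idea is that each block $O_{n_j}$ or $K_{n_j}$ remains a set of twins throughout the iterated construction. At the moment a block is introduced it is a set of twins: a block $K_{n_j}$ attached by a join $\vee K_{n_j}$ forms a clique each of whose vertices is joined to all earlier vertices, hence a set $T(0,1)$; a block $O_{n_j}$ attached by a union $\cup\, O_{n_j}$ (or appearing as the empty factor of the base term) is an independent set whose vertices have identical neighbourhoods, hence a set $T(0,0)$. I would then argue that the two operations used subsequently both preserve twinship: a join $\,\cdot\vee X$ appends the same new neighbours $V(X)$, all of weight one, to every vertex of the block, while a disjoint union $\,\cdot\cup X$ leaves the block's neighbourhoods unchanged; in either case the equal-neighbourhood and equal-weight conditions defining a set of twins persist. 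Consequently, in the final threshold graph each block $Z\in\{K_{n_j},O_{n_j}\}$ is still a set of twins of size $n_j$, and when $n_j\geq 3$ Theorem~\ref{sed} yields that each of its vertices is $(1-\frac{2}{n_j})$-sedentary.

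The only delicate point, and the step I would treat most carefully, is the bookkeeping that twinship survives the entire sequence of joins and unions; this rests on the fact already noted before Corollary~\ref{corjoin} that being twins is preserved under joins, together with the immediate observation that disjoint union introduces no new adjacencies to existing vertices. Everything else is a direct application of Theorem~\ref{sed}, so no genuinely new estimate is required.
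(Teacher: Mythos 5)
Your proof is correct and takes essentially the same route as the paper's: identify each partite set (respectively, each block $K_{n_j}$ or $O_{n_j}$ in the threshold construction, and the collection of $p$ singleton partite sets) as a set of twins of the appropriate size, then apply Theorem~\ref{sed}. The paper's own proof is simply a terser version of yours, asserting the twin property outright where you explicitly verify that it is created by the join/union structure and preserved through the iterated operations.
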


\begin{proof}
If $n_j\geq 3$, then each partite set of $K_{n_1,n_2,\ldots,n_k}$ and those vertices in each $K_{n_j}$ and $O_{n_j}$ in (\ref{thgr}) form a set of twins. If $\lvert \{\ell:n_{\ell}=1\}\rvert=p\geq 3$, then the singleton partite sets also form a set of twins size $p$. Applying Theorem~\ref{sed} yields the desired result.
\end{proof}

Threshold graphs with form given in (\ref{thgr}), where $n_1\geq 2$ and $n_j\geq 1$ for $j\geq 2$, are precisely all the connected threshold graphs as characterized by Kirkland and Severini (see \cite[Lemma 1]{Severini}).

Next, we have the following immediate consequence of Theorem~\ref{sed}.

\begin{corollary}
\label{sedF}
Let $\mathscr{F}$ be a family of graphs with a set of twins $T$ with $\lvert T\rvert\geq 3$.
\begin{enumerate}
\item If $\lvert T\rvert$ is fixed for all $X\in\mathscr{F}$, then $\mathscr{F}$ is $(1-\frac{2}{\lvert T\rvert})$-sedentary at every vertex in $T$.
\item If $\lvert V(X)\backslash T\rvert$ is fixed for all $X\in\mathscr{F}$, then $\mathscr{F}$ is sedentary at every vertex of $T$.
\end{enumerate}
\end{corollary}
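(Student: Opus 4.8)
The plan is to reduce both parts directly to Theorem~\ref{sed} and then, in each case, to exhibit a witnessing function $f$ meeting the two requirements of Definition~\ref{defs}. First I would fix an arbitrary $X\in\mathscr{F}$ and an arbitrary $u\in T$; since $\lvert T\rvert\geq 3$, Theorem~\ref{sed} immediately gives that $X$ is $(1-\frac{2}{\lvert T\rvert})$-sedentary at $u$. This already supplies the per-graph sedentariness demanded by Definition~\ref{defs}(i), and since the bound holds for every $u\in T$, any such vertex may serve as the witness---this is exactly what the phrase ``at every vertex of $T$'' records. What remains is to repackage the constants $1-\frac{2}{\lvert T\rvert}$ as a single function of the order $\lvert V(X)\rvert$ and to read off its limit.

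For (1), I would write $\tau=\lvert T\rvert$, a value common to all $X\in\mathscr{F}$, and take the constant function $f(s)=1-\frac{2}{\tau}$. This lies in $(0,1]$ because $\tau\geq 3$ forces $1-\frac{2}{\tau}\geq\frac13$; condition (i) holds by the previous step, and trivially $f(s)\to 1-\frac{2}{\lvert T\rvert}$ as $s\to\infty$, so $\mathscr{F}$ is $(1-\frac{2}{\lvert T\rvert})$-sedentary.

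For (2), let $p=\lvert V(X)\backslash T\rvert$ be the fixed value, so $\lvert T\rvert=\lvert V(X)\rvert-p$ for every member and the sedentariness constant of $X$ equals $1-\frac{2}{\lvert V(X)\rvert-p}$. I would set $f(s)=1-\frac{2}{s-p}$; since every member satisfies $\lvert V(X)\rvert=\lvert T\rvert+p\geq p+3$, the value $f(\lvert V(X)\rvert)=1-\frac{2}{\lvert T\rvert}$ matches the constant furnished by Theorem~\ref{sed}, and $f(s)\to 1$ as $s\to\infty$, giving $C=1$ and hence sedentariness of $\mathscr{F}$.

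Since this is an immediate corollary, there is no genuine obstacle; the only point needing care is the bookkeeping for $f$ in part (2). To satisfy the letter of Definition~\ref{defs}, which requires $f$ to be defined on all of $\mathbb{R}^+$ with range inside $(0,1]$, I would truncate, e.g.\ $f(s)=1-\frac{2}{\max\{s-p,\,3\}}$; this agrees with $1-\frac{2}{s-p}$ on the orders that actually occur (where positivity is guaranteed by $\lvert T\rvert\geq 3$) and keeps $f$ inside $(0,1]$ everywhere. Everything else is a direct transcription of Theorem~\ref{sed}.
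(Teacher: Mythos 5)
Your proposal is correct and follows the paper's route exactly: the paper states this corollary as an immediate consequence of Theorem~\ref{sed}, and your argument is precisely that reduction, plus the routine bookkeeping of exhibiting the witnessing function $f$ from Definition~\ref{defs} (constant $f(s)=1-\frac{2}{\lvert T\rvert}$ in part (1), and $f(s)=1-\frac{2}{s-p}$ suitably truncated in part (2)). The truncation detail you add is a sound way to meet the letter of the definition and does not change the substance.
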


For the family $\mathscr{K}_1$ of complete graphs on $m\geq 5$ vertices minus an edge, all vertices in $K_{m}\backslash e$, except for the non-adjacent pair, form a set of twins $T$ with $\lvert T\rvert=m-2$. Thus, $\lvert V(X)\backslash T\rvert=2$ is fixed, and so by Corollary~\ref{sedF}(2), $\mathscr{K}_1$ is a family that is sedentary at all vertices except for the non-adjacent pair.

The next result follows immediately from Corollaries \ref{corjoin1}, \ref{cmgtg} and \ref{sedF}.

\begin{corollary}
\label{sedg2}
The following hold.
\begin{enumerate}
\item Let $\mathscr{F}_1$ and $\mathscr{F}_2$ be families of graphs resp.\ of the form $O_m\vee X$ and $K_m\vee X$. Let $Z\in\{O_m,K_m\}$. If $X$ has fixed number of vertices, then each $\mathscr{F}_i$ is sedentary at every vertex of $Z$. If $m\geq 3$ is fixed, then each $\mathscr{F}_i$ is $(1-\frac{2}{m})$-sedentary at every vertex of $Z$.
\item Let $\mathscr{F}_1$ and $\mathscr{F}_2$ resp. be families of complete multipartite graphs $K_{n_1,\ldots,n_k}=\bigvee_{\ell=1}^kO_{n_{\ell}}$ and threshold graphs in (\ref{thgr}). Let $Z\in \{K_{n_j},O_{n_j}\}$. If $n_j\geq 3$ is fixed, then each $\mathscr{F}_i$ is $(1-\frac{2}{n_j})$-sedentary at every vertex in $Z$. If $k\geq 1$ is fixed and $n_j\rightarrow\infty$, then each $\mathscr{F}_i$ is sedentary at every vertex of $Z$.
\end{enumerate}
\end{corollary}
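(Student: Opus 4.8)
The plan is to reduce every one of the four assertions to a direct application of Corollary~\ref{sedF}, whose two parts already package precisely the two conclusions sought: the $(1-\frac{2}{|T|})$-sedentary case and the sedentary (i.e.\ $C=1$) case. The only genuine work is to exhibit, in each family, an explicit set of twins $T$ of the correct size, and then to decide in each sub-case whether it is $|T|$ or $|V(X)\setminus T|$ that is being held fixed.

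First I would dispatch part (1). For $\mathscr{F}_1$ (graphs $O_m\vee X$) and $\mathscr{F}_2$ (graphs $K_m\vee X$), the $m$ vertices of the factor $Z\in\{O_m,K_m\}$ form a set of twins $T$ with $|T|=m$; this is exactly the structure underlying Corollary~\ref{corjoin1}, since the vertices of $O_m$ (resp.\ $K_m$) are pairwise non-adjacent (resp.\ adjacent with a common edge weight) and share the common neighbourhood $V(X)$. If $X$ has a fixed number of vertices, then $|V(Z\vee X)\setminus T|=|V(X)|$ is fixed, so Corollary~\ref{sedF}(2) gives that each $\mathscr{F}_i$ is sedentary at every vertex of $Z$; if instead $m\geq 3$ is fixed, then $|T|=m$ is fixed and Corollary~\ref{sedF}(1) yields $(1-\frac{2}{m})$-sedentariness at every vertex of $Z$.

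Next I would treat part (2) in the same manner, now reading off the twin set from Corollary~\ref{cmgtg}. In a complete multipartite graph $K_{n_1,\dots,n_k}=\bigvee_{\ell=1}^k O_{n_\ell}$, the partite set of size $n_j$ is a set of twins with $|T|=n_j$; in the threshold graph of~(\ref{thgr}), the vertices of the block $Z\in\{K_{n_j},O_{n_j}\}$ likewise form a set of twins of size $n_j$. When $n_j\geq 3$ is fixed, $|T|=n_j$ is fixed and Corollary~\ref{sedF}(1) gives $(1-\frac{2}{n_j})$-sedentariness. For the limiting case I would fix $k$ together with all block sizes other than $n_j$ and let $n_j\to\infty$; then $|V(X)\setminus T|=\sum_{\ell\neq j}n_\ell$, the number of vertices outside the growing twin class, stays fixed, and Corollary~\ref{sedF}(2) delivers sedentariness at every vertex of $Z$.

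The step most likely to trip one up is not a calculation but the bookkeeping in the $C=1$ cases: one must recall that Corollary~\ref{sedF}(2) requires the \emph{complement} $V(X)\setminus T$ to have fixed size (so that $|T|\to\infty$ and $1-\frac{2}{|T|}\to 1$), rather than $|T|$ itself. Accordingly, in part (2) the hypothesis ``$n_j\to\infty$'' must be understood as growing only the relevant twin class while holding $k$ and the remaining $n_\ell$ fixed; otherwise $|V(X)\setminus T|$ need not stabilize and the function $f(|V(X)|)=1-\frac{2}{|T|}$ of Definition~\ref{defs} would fail to be well defined. Once this distinction is observed, all four assertions follow immediately from Corollaries~\ref{corjoin1}, \ref{cmgtg} and~\ref{sedF}.
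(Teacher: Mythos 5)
Your reduction is exactly the paper's route: the paper offers no argument beyond the remark that the corollary ``follows immediately from Corollaries \ref{corjoin1}, \ref{cmgtg} and \ref{sedF}'', and your identification of the twin sets together with the split between Corollary~\ref{sedF}(1) (twin class of fixed size) and Corollary~\ref{sedF}(2) (complement of fixed size) is precisely the intended bookkeeping. Part (1) and the first claim of part (2) are handled correctly.

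The one place you depart from the statement is the last claim of part (2). You prove it only after adding the hypothesis that the remaining block sizes $n_\ell$, $\ell\neq j$, are held fixed, and you justify this by asserting that otherwise the function $f$ of Definition~\ref{defs} ``would fail to be well defined.'' That justification is not correct, and the extra restriction is not needed. For fixed $k$ and fixed order $s$ there are only finitely many tuples $(n_1,\ldots,n_k)$ with $\sum_\ell n_\ell=s$, so one may simply set
\begin{equation*}
f(s)=\min\left\{1-\tfrac{2}{n_j(X)} : X\in\mathscr{F}_i,\ \lvert V(X)\rvert=s\right\},
\end{equation*}
with $f(s)=1$, say, when no member has order $s$ (and with the standing assumption $n_j\geq 3$, as in Corollary~\ref{sedF}). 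Every member is then $f(\lvert V(X)\rvert)$-sedentary at the vertices of $Z$ by Theorem~\ref{sed} (equivalently, Corollaries~\ref{corjoin1} and \ref{cmgtg}), and condition (ii) of Definition~\ref{defs}, namely $f(s)\rightarrow 1$, holds exactly under the stated hypothesis that $n_j\rightarrow\infty$ as the order of the members grows, regardless of how the other $n_\ell$ behave. In other words, Corollary~\ref{sedF}(2) is a convenient sufficient device but not the only one; replacing it by this direct appeal to Definition~\ref{defs} closes the gap and yields the statement in the generality in which it is asserted (compare Corollary~\ref{rook1}(2), where all factors are allowed to grow simultaneously).
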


\begin{figure}[h!]
	\begin{center}
		\begin{tikzpicture}
		\tikzset{enclosed/.style={draw, circle, inner sep=0pt, minimum size=.2cm}}
	 
		\node[enclosed] (v_7) at (0,1) {};
		\node[enclosed] (v_8) at (0.7,1) {};
		\node[enclosed] (v_1) at (1.4,1) {};
		\node[enclosed, fill=cyan] (v_2) at (2.1,0.1) {};
		\node[enclosed, fill=cyan] (v_3) at (2.1,1.9) {};
		\node[enclosed, fill=cyan] (v_4) at (3,0.1) {};
		\node[enclosed, fill=cyan] (v_5) at (3,1.9) {};
		\node[enclosed, fill=cyan] (v_6) at (3.7,1) {};
		
		\draw (v_1) -- (v_2);
		\draw (v_1) -- (v_3);
		\draw (v_1) -- (v_4);
		\draw (v_1) -- (v_5);
		\draw (v_1) -- (v_6);
		\draw (v_2) -- (v_3);
		\draw (v_2) -- (v_4);
		\draw (v_2) -- (v_5);
		\draw (v_2) -- (v_6);
		\draw (v_3) -- (v_4);
		\draw (v_3) -- (v_5);
		\draw (v_3) -- (v_6);
		\draw (v_4) -- (v_5);
		\draw (v_4) -- (v_6);
		\draw (v_5) -- (v_6);
		\draw (v_8) -- (v_1);
	    \draw (v_7) -- (v_8);
 		
 		\node[enclosed] (w_9) at (5,1) {};
 		\node[enclosed] (w_8) at (5.7,1) {};
 		\node[enclosed, fill=magenta] (w_7) at (6.4,1) {};
 		\node[enclosed, fill=cyan] (w_1) at (7.2,1) {};
		\node[enclosed, fill=cyan] (w_2) at (7.9,0.1) {};
		\node[enclosed, fill=cyan] (w_4) at (7.9,1.9) {};
		\node[enclosed, fill=magenta] (w_3) at (8.6,1) {};
		\node[enclosed, fill=magenta] (w_11) at (8.6,1.9) {};
		\node[enclosed, fill=magenta] (w_12) at (8.6,0.1) {};
		\node[enclosed] (w_5) at (9.3,1) {};
		\node[enclosed] (w_13) at (9.3,0.1) {};
		\node[enclosed] (w_15) at (9.3,1.9) {};
		\node[enclosed] (w_6) at (10,1) {};
		\node[enclosed] (w_14) at (10,0.1) {};
		\node[enclosed] (w_16) at (10,1.9) {};
		\draw (w_1) -- (w_2);
		\draw (w_1) -- (w_3);
		\draw (w_4) -- (w_2);
 		\draw (w_2) -- (w_3);
 		\draw (w_4) -- (w_3);
 		\draw (w_4) -- (w_1);
 		\draw (w_5) -- (w_3);
 		\draw (w_5) -- (w_6);
 		\draw (w_1) -- (w_7);
 		\draw (w_8) -- (w_7);
 		\draw (w_2) -- (w_7);
 		\draw (w_4) -- (w_7);
 		\draw (w_8) -- (w_9);
 		\draw (w_11) -- (w_1);
 		\draw (w_11) -- (w_2);
 		\draw (w_11) -- (w_4);
 		\draw (w_12) -- (w_1);
 		\draw (w_12) -- (w_2);
 		\draw (w_12) -- (w_4);
 		\draw (w_12) -- (w_13);
 		\draw (w_13) -- (w_14);
 		\draw (w_11) -- (w_15);
 		\draw (w_15) -- (w_16);
 		
 		\node[enclosed] (u_1) at (11.3,0.4) {};
 		\node[enclosed] (u_2) at (11.3,1.6) {};
 		\node[enclosed] (u_3) at (12,0.4) {};
 		\node[enclosed] (u_4) at (12,1.6) {};
 		\node[enclosed, fill=magenta] (u_5) at (12.7,0.4) {};
 		\node[enclosed, fill=magenta] (u_6) at (12.7,1.6) {};
 		\node[enclosed,fill=cyan] (u_7) at (13.6,-0.1) {};
 		\node[enclosed, fill=magenta] (u_8) at (13.3,1) {};
 		\node[enclosed,fill=cyan] (u_9) at (13.6,2.1) {};
 		\node[enclosed,fill=cyan] (u_10) at (14.6,0.6) {};
 		\node[enclosed] (u_11) at (14.4,1) {};
 		\node[enclosed,fill=cyan] (u_12) at (14.6,1.4) {};
 		\node[enclosed] (u_13) at (15.4,1) {};
		\draw (u_1) -- (u_3);
		\draw (u_5) -- (u_3);
		\draw (u_2) -- (u_4);
		\draw (u_4) -- (u_6);
		\draw (u_5) -- (u_6);
		\draw (u_8) -- (u_6);
		\draw (u_5) -- (u_8);
		\draw (u_5) -- (u_7);
		\draw (u_5) -- (u_9);
		\draw (u_5) -- (u_10);
		\draw (u_5) -- (u_12);
		\draw (u_6) -- (u_7);
		\draw (u_6) -- (u_9);
		\draw (u_6) -- (u_10);
		\draw (u_6) -- (u_12);
		\draw (u_8) -- (u_7);
		\draw (u_8) -- (u_9);
		\draw (u_8) -- (u_10);
		\draw (u_8) -- (u_12);
		\draw (u_8) -- (u_11);
		\draw (u_13) -- (u_11);
		
		\end{tikzpicture}
	\end{center}
	\caption{The lollipop graph $L_{4,2}$ (left), the graph $X_{3,4,2}$ (center) and the graph $Y_{3,4,2}$ (right) with sedentary vertices marked blue}\label{yay}
\end{figure}
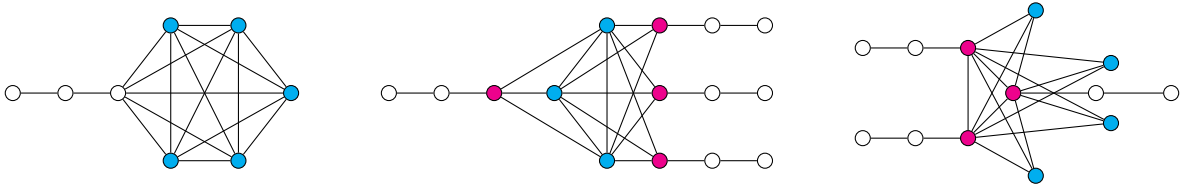

\subsection*{Graphs with tails}

For $n\geq 4$ and $k\geq 1$, let $L_{n,k}$ be a lollipop graph, which is a graph obtained after attaching a path $P_k$ to a vertex $u$ of $K_n$. The vertices $v\neq u$ of $K_n$ in $L_{n,k}$ form a set of twins of size $n-1\geq 3$, and so each of them is $(1-\frac{2}{n-1})$-sedentary by Theorem~\ref{sed}. More generally, if $n\geq k+3$, then attaching $k$ paths (possibly with different lengths) to $k$ vertices of $K_n$ leaves the remaining $n-k$ vertices of $K_n$ $(1-\frac{2}{n-k})$-sedentary in the resulting graph. The same holds in the complement of $L_{n,k}$.

For $n,m\geq 3$ and $k\geq 0$, let $X_{n,m,k}$ and $Y_{n,m,k}$ be graphs obtained from $K_n\vee O_m$ by attaching copies of $P_k$ resp. to the vertices of $O_m$ and $K_n$. The vertices of $K_n$ form a set of twins in $X_{n,m,k}$ of size $n$, while those of $O_m$ form a set of twins in $Y_{n,m,k}$ of size $m$. Thus, the vertices of $K_n$ and $O_m$ resp. are $(1-\frac{2}{n})$- and $(1-\frac{2}{m})$-sedentary in $X_{n,m,k}$ and $Y_{n,m,k}$. This remains true even if we vary the lengths of paths attached to $O_m$ and $K_n$. This also holds in the complements of $X_{n,m,k}$ and $Y_{n,m,k}$.

The above considerations combined with Corollary~\ref{sedF} yield the following result.

\begin{corollary}
\label{sedg1}
Let $\mathscr{F}$ be a family of simple unweighted lollipop graphs $L_{n,k}$, $\mathscr{F}_1$ be a family of graphs $X_{n,m,k}$, and $\mathscr{F}_2$ be a family of graphs $Y_{n,m,k}$.
\begin{enumerate}
\item Let $n\geq 4$ and $k\geq 1$. If $k$ is fixed, then $\mathscr{F}$ is sedentary at every vertex of $v\neq u$ of $K_n$. If $n$ is fixed, then $\mathscr{F}$ is $(1-\frac{2}{n-1})$-sedentary at every vertex $v\neq u$ of $K_n$
\item Let $n,m\geq 3$ and $k\geq 0$. If $m$ and $k$ are fixed, then $\mathscr{F}_1$ (resp., $\mathscr{F}_2$) is sedentary at every vertex of $K_n$ (resp., $O_n$). If $n$ is fixed, then $\mathscr{F}_1$ is $(1-\frac{2}{n})$-sedentary at every vertex of $K_n$, while if $m$ is fixed, then $\mathscr{F}_2$ is $(1-\frac{2}{m})$-sedentary at every vertex of $O_m$.
\end{enumerate}
\end{corollary}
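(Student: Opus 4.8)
The plan is to derive every assertion from Corollary~\ref{sedF} by exhibiting, in each member graph, a concrete set of twins $T$ with $\lvert T\rvert\geq 3$ and then recording the two quantities that drive that corollary: $\lvert T\rvert$ and $\lvert V(X)\setminus T\rvert$. Recall its dichotomy: holding $\lvert T\rvert$ fixed across the family gives $(1-\frac{2}{\lvert T\rvert})$-sedentariness at every vertex of $T$ (case (1)), whereas holding the complement $\lvert V(X)\setminus T\rvert$ fixed forces $\lvert T\rvert\to\infty$, so that $1-\frac{2}{\lvert T\rvert}\to 1$ and we obtain plain sedentariness ($C=1$) at every vertex of $T$ (case (2)). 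The structural input, already recorded in the paragraphs preceding the statement, is that pendant paths attached to vertices \emph{outside} a clique (or coclique) leave the neighbourhoods of that clique (or coclique) untouched, so the relevant $T$ stays a set of twins after the tails are added. Every member graph is connected, since joins are connected and appending paths preserves connectivity, so the sedentariness framework applies throughout.

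For part (1), I take $T$ to be the $n-1$ vertices $v\neq u$ of the clique in the lollipop $L_{n,k}$; as $n\geq 4$ these form a set of twins with $\lvert T\rvert=n-1\geq 3$. Its complement is $\{u\}$ together with the $k$ vertices of the attached path, so $\lvert V(L_{n,k})\setminus T\rvert=k+1$. Hence fixing $k$ fixes the complement and Corollary~\ref{sedF}(2) gives sedentariness at every $v\neq u$, while fixing $n$ fixes $\lvert T\rvert$ and Corollary~\ref{sedF}(1) gives $(1-\frac{2}{n-1})$-sedentariness there.

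For part (2) I treat the two families in parallel. In $X_{n,m,k}$ the clique $T=V(K_n)$ is a set of twins of size $n\geq 3$, whose complement consists of the $m$ vertices of $O_m$ together with the $mk$ pendant-path vertices, so $\lvert V(X_{n,m,k})\setminus T\rvert=m(k+1)$; thus fixing $m$ and $k$ makes the complement fixed and Corollary~\ref{sedF}(2) gives sedentariness at every vertex of $K_n$, while fixing $n$ gives $(1-\frac{2}{n})$-sedentariness by Corollary~\ref{sedF}(1). Symmetrically, in $Y_{n,m,k}$ the coclique $T=V(O_m)$ is a set of twins of size $m\geq 3$ whose complement consists of the $n$ vertices of $K_n$ and the $nk$ pendant-path vertices, so $\lvert V(Y_{n,m,k})\setminus T\rvert=n(k+1)$; fixing the parameters $n$ and $k$ that control this complement gives sedentariness at every vertex of $O_m$ by Corollary~\ref{sedF}(2), and fixing $m$ gives $(1-\frac{2}{m})$-sedentariness by Corollary~\ref{sedF}(1).

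The proof involves no real obstacle beyond bookkeeping: the one point needing care is to pair each ``held-fixed'' hypothesis with the correct case of Corollary~\ref{sedF}, remembering that $C=1$ arises only from a fixed \emph{complement} (so that the twin set grows), whereas a fixed sub-unity bound arises from a fixed twin set. Once the complement counts $k+1$, $m(k+1)$ and $n(k+1)$ are verified and $\lvert T\rvert\geq 3$ is checked, each clause follows by a single application of the appropriate case.
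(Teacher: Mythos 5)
Your proposal is correct and follows essentially the same route as the paper: identify the twin sets ($T$ of size $n-1$ in $L_{n,k}$, $V(K_n)$ in $X_{n,m,k}$, $V(O_m)$ in $Y_{n,m,k}$), verify $\lvert T\rvert\geq 3$, and apply Corollary~\ref{sedF}(1) when the twin set size is fixed and Corollary~\ref{sedF}(2) when the complement ($k+1$, $m(k+1)$, $n(k+1)$ respectively) is fixed. You also correctly resolved the statement's sloppy ``resp.''\ wording for $\mathscr{F}_2$ (the fixed parameters there are $n$ and $k$, and ``$O_n$'' should read ``$O_m$''), which is exactly the reading the paper intends.
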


For $n\geq 4$, let $L_n$ be an infinite lollipop, which is the resulting graph after attaching an infinite path to a vertex $u$ of a complete graph $K_n$. In \cite[Proposition 3]{Tamon}, Bernard et al.\ showed that the family of infinite lollipops is sedentary at each vertex $v\neq u$ of $K_n$. This complements Corollary~\ref{sedg1}(1) which states that the family of lollipop graphs $L_{n,k}$ with $k$ fixed is sedentary at every vertex $v\neq u$ of $K_n$. They also showed that attaching infinite paths to the vertices of $O_m$ in $K_n\vee O_m$ yields a family that is sedentary at every vertex of $K_n$ \cite[Theorem 4]{Tamon}, which again, complements our result in Corollary~\ref{sedg1}(2a), which states that the family of graphs $X_{n,m,k}$ is sedentary at every vertex of $K_n$ whenever $m$ and $k$ are fixed.

Similar to lollipop graphs, one may construct barbell-type graphs with sedentary vertices. Barbell-type graphs are obtained by joining corresponding vertices of two copies of complete graphs with a path. For instance, if $m,n\geq 4$ and $k\geq 1$ then the barbell-type graph $L_{n,k,m}$ formed by joining vertices $u$ of $K_n$ and $w$ of $K_m$ by a path $P_{k}$ is $(1-\frac{2}{n-1})$- and $(1-\frac{2}{m-1})$-sedentary resp.\ at any vertex $v\neq u$ of $K_n$ and $v\neq w$ of $K_m$. One can then derive results about sedentary families of barbell-type graphs similar to Corollary~\ref{sedg1}.

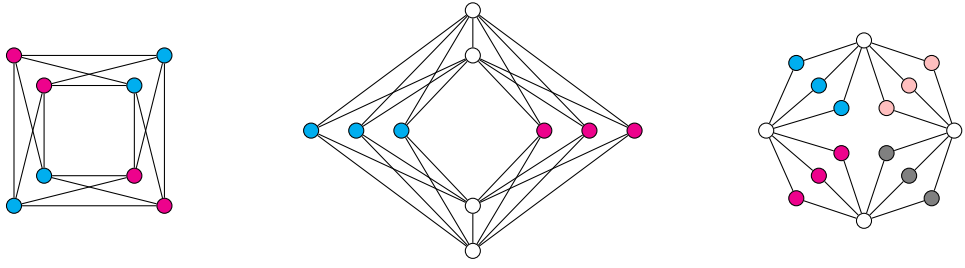
\begin{figure}[h!]
	\begin{center}
		\begin{tikzpicture}
		\tikzset{enclosed/.style={draw, circle, inner sep=0pt, minimum size=.2cm}}
	 
		\node[enclosed,fill=cyan] (v_1) at (0,0) {};
		\node[enclosed,fill=magenta] (v_2) at (0,2) {};
		\node[enclosed,fill=cyan] (v_3) at (0.4,0.4) {};
		\node[enclosed,fill=magenta] (v_4) at (0.4,1.6) {};
		\node[enclosed,fill=magenta] (v_5) at (1.6,0.4) {};
		\node[enclosed,fill=cyan] (v_6) at (1.6,1.6) {};
		\node[enclosed,fill=magenta] (v_7) at (2,0) {};
		\node[enclosed,fill=cyan] (v_8) at (2,2) {};
		
		\draw (v_1) -- (v_4);
		\draw (v_1) -- (v_2);		
		\draw (v_3) -- (v_2);
		\draw (v_4) -- (v_3);
		\draw (v_2) -- (v_6);
		\draw (v_2) -- (v_8);
		\draw (v_4) -- (v_6);
		\draw (v_4) -- (v_8);
		\draw (v_1) -- (v_5);
		\draw (v_1) -- (v_7);
 		\draw (v_3) -- (v_5);
		\draw (v_3) -- (v_7);
		\draw (v_6) -- (v_5);
		\draw (v_6) -- (v_7);
		\draw (v_8) -- (v_5);
		\draw (v_8) -- (v_7);
		
		\node[enclosed,fill=cyan] (w_1) at (3.95,1) {};
		\node[enclosed,fill=cyan] (w_2) at (4.55,1) {};
		\node[enclosed,fill=cyan] (w_3) at (5.15,1) {};
		\node[enclosed] (w_4) at (6.1,2) {};
		\node[enclosed] (w_5) at (6.1,2.6) {};
		\node[enclosed] (w_6) at (6.1,0) {};
		\node[enclosed] (w_7) at (6.1,-0.6) {};
		\node[enclosed,fill=magenta] (w_8) at (7.05,1) {};
		\node[enclosed,fill=magenta] (w_9) at (7.65,1) {};
		\node[enclosed,fill=magenta] (w_10) at (8.25,1) {};
		
		\draw (w_1) -- (w_4);
		\draw (w_1) -- (w_5);
		\draw (w_1) -- (w_6);
		\draw (w_1) -- (w_7);
		\draw (w_2) -- (w_4);
		\draw (w_2) -- (w_5);
		\draw (w_2) -- (w_6);
		\draw (w_2) -- (w_7);
		\draw (w_3) -- (w_4);
		\draw (w_3) -- (w_5);
		\draw (w_3) -- (w_6);
		\draw (w_3) -- (w_7);
		\draw (w_8) -- (w_4);
		\draw (w_8) -- (w_5);
		\draw (w_8) -- (w_6);
		\draw (w_8) -- (w_7);
		\draw (w_9) -- (w_4);
		\draw (w_9) -- (w_5);
		\draw (w_9) -- (w_6);
		\draw (w_9) -- (w_7);
		\draw (w_10) -- (w_4);
		\draw (w_10) -- (w_5);
		\draw (w_10) -- (w_6);
		\draw (w_10) -- (w_7);
		\draw (w_5) -- (w_4);
		\draw (w_6) -- (w_7);
		
		\node[enclosed] (u_1) at (10,1) {};
		\node[enclosed,fill=cyan] (u_2) at (10.4,1.9) {};
		\node[enclosed,fill=magenta] (u_3) at (10.4,0.1) {};
		\node[enclosed,fill=cyan] (u_4) at (10.7,1.6) {};
		\node[enclosed,fill=magenta] (u_5) at (10.7,0.4) {};
		\node[enclosed,fill=cyan] (u_6) at (11,1.3) {};
		\node[enclosed,fill=magenta] (u_7) at (11,0.7) {};
		\node[enclosed] (u_8) at (11.3,2.2) {};
		\node[enclosed] (u_9) at (11.3,-0.2) {};
		\node[enclosed,fill=gray] (u_10) at (11.6,0.7) {};
		\node[enclosed,fill=pink] (u_11) at (11.6,1.3) {};
		\node[enclosed,fill=gray] (u_12) at (11.9,0.4) {};
		\node[enclosed,fill=pink] (u_13) at (11.9,1.6) {};
		\node[enclosed,fill=gray] (u_14) at (12.2,0.1) {};
		\node[enclosed,fill=pink] (u_15) at (12.2,1.9) {};
		\node[enclosed] (u_16) at (12.5,1) {};
		
		\draw (u_1) -- (u_2);
		\draw (u_1) -- (u_3);
		\draw (u_1) -- (u_4);
		\draw (u_1) -- (u_5);
		\draw (u_1) -- (u_6);
		\draw (u_1) -- (u_7);
		\draw (u_8) -- (u_2);
		\draw (u_8) -- (u_4);
		\draw (u_8) -- (u_6);
		\draw (u_9) -- (u_3);
		\draw (u_9) -- (u_5);
		\draw (u_9) -- (u_7);
		\draw (u_16) -- (u_11);
		\draw (u_16) -- (u_13);
		\draw (u_16) -- (u_15);
		\draw (u_16) -- (u_10);
		\draw (u_16) -- (u_12);
		\draw (u_16) -- (u_14);
		\draw (u_9) -- (u_10);
		\draw (u_9) -- (u_12);
		\draw (u_9) -- (u_14);
		\draw (u_8) -- (u_11);
		\draw (u_8) -- (u_13);
		\draw (u_8) -- (u_15);
		
		\end{tikzpicture}
	\end{center}
	\caption{Blow-ups of $C_4$: $C_4^2(V)$ (left), $C_4(2,3,2,3)(V)$ (center), and $C_4^3(E)$ (right) with sets of twins filled with the same color, all members of which are sedentary}\label{yay2}
\end{figure}

\subsection*{Blow-ups}

Let $X$ be a weighted graph with possible loops with vertices $v_1,\ldots,v_n$ and edges with distinct endpoints (i.e., non-loops) $e_1,\ldots,e_m$. Let $(k_1,\ldots,k_n)$ and $(k_1,\ldots,k_m)$ be $n$- and $m$-tuples of positive integers.

A $(k_1,\ldots,k_n)$-\textit{vertex blow-up} of $X$, denoted $X(k_1,\ldots,k_n)(V)$, is the graph obtained by replacing every vertex $v_j$ of $X$ by the graph $X_j\in\{O_{k_j},K_{k_j}\}$ such that a vertex in $X_{j}$ is adjacent to a vertex in $X_{\ell}$ in the resulting graph if and only if $v_j$ and $v_{\ell}$ are adjacent in $X$, and the weight of each edge between $X_j$ and $X_{\ell}$ is the same as the weight of the edge $[v_j,v_{\ell}]$ in $X$. If $k_1=\cdots=k_m=k$, then we call the resulting graph a \textit{$k$-vertex blow-up} of $X$, denoted $X^{k}(V)$. Vertex blow-ups in the literature typically mean replacing each vertex by an empty graph, but in our definition, we have the freedom to choose between an empty or a complete graph. For example, $K_{m,n}$ and $K_{m+n}$ are $(m,n)$-vertex blow-ups of $K_2$, where each vertex of $K_2$ was replaced by an empty graph for the former, and by a complete graph for the latter.

A \textit{$(k_1,\ldots,k_m)$-edge blow-up} of $X$, denoted $X(k_1,\ldots,k_n)(E)$, is a graph obtained by replacing every edge $e_j=[u_j,v_j]$ of $X$ by $X_j\in\{O_{k_j},K_{k_j}\}$ and adding edges $[u_j,w]$ and $[v_j,w]$ for all vertices $w$ of $X_j$, each with weight equal to that of $e_j$. If $k_1=\cdots=k_m=k$, then we call the resulting graph a \textit{$k$-edge blow-up} of $X$, denoted $X^{k}(E)$. A 1-edge blow-up of $X$ is obtained by subdividing every edge of $X$.

\begin{theorem}
\label{bup}
Let $X$ be a weighted graph with possible loops with vertices $v_1,\ldots,v_n$ and edges $e_1,\ldots,e_m$ with distinct endpoints. Let $(k_1,\ldots,k_n)$ and $(k_1,\ldots,k_m)$ be $n$ and $m$-tuples of positive integers.
\begin{enumerate}
\item If $k_j\geq 3$ for some $j$, then the vertices of $X_j\in\{O_{k_j},K_{k_j}\}$ added in place of $v_j$ (resp., $e_j$) are $(1-\frac{2}{k_j})$-sedentary in $X(k_1,\ldots,k_n)(V)$ (resp., $X(k_1,\ldots,k_m)(E)$). If $k\geq 3$, then each vertex in $X^{k}(V)$ is $(1-\frac{2}{k})$-sedentary, while each vertex in $\bigcup_{j=1}^m X_j$ is $(1-\frac{2}{k})$-sedentary in $X^{k}(E)$.
\item Let $T$ be a set of twins in $X$ with $k_j\geq 2$ for some $v_j\in T$. Suppose $W_1=\bigcup_{v_j\in T,X_j=K_{K_j}}X_j$ and $W_2=\bigcup_{v_j\in T,X_j=O_{k_j}} X_j$. If the vertices in $T$ are pairwise adjacent, then $W_1$ is a set of twins in $X(k_1,\ldots,k_n)(V)$. Otherwise, $W_2$ is a set of twins in $X(k_1,\ldots,k_n)(V)$. If $\lvert W_i\rvert\geq 3$ for some $i\in\{1,2\}$, then each vertex in $W_i$ is $(1-\frac{2}{\lvert W_i\rvert})$-sedentary.
\end{enumerate}
\end{theorem}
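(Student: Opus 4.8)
The plan is to establish Theorem~\ref{bup} by reducing both parts to the twin-vertex machinery already developed in Theorem~\ref{sed}. The core observation is that blow-ups are precisely operations that manufacture sets of twins: replacing a vertex or edge by a copy of $O_{k_j}$ or $K_{k_j}$ and attaching it uniformly to the rest of the graph produces vertices that share identical neighbourhoods (and, in the complete case, identical mutual adjacencies). So the entire proof should amount to verifying the twin condition in Definition of twins and then quoting Theorem~\ref{sed}.

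For part (1), I would argue as follows. Consider the vertex blow-up $X(k_1,\ldots,k_n)(V)$ and fix $j$ with $k_j\geq 3$. Let $X_j\in\{O_{k_j},K_{k_j}\}$ be the graph inserted in place of $v_j$. Any two vertices $w,w'$ of $X_j$ have, by construction, the same neighbours outside $X_j$ (namely all vertices of those $X_\ell$ with $v_\ell$ adjacent to $v_j$ in $X$, with edge weights inherited from $[v_j,v_\ell]$), and inside $X_j$ they are either both non-adjacent to each other (if $X_j=O_{k_j}$) or adjacent with a common weight (if $X_j=K_{k_j}$); loops, if present, carry the weight coming from a loop at $v_j$. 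Hence the vertices of $X_j$ form a set of twins $T=T(\omega,\eta)$ of size $k_j\geq 3$, and Theorem~\ref{sed} immediately gives that each such vertex is $(1-\frac{2}{k_j})$-sedentary. The edge blow-up case is identical in spirit: the $k_j$ vertices replacing the edge $e_j=[u_j,v_j]$ all attach to exactly $u_j$ and $v_j$ (plus their mutual structure from $X_j$), so they again form a set of twins of size $k_j$. The uniform statements $X^k(V)$ and $X^k(E)$ are the special case $k_1=\cdots=k=k\geq 3$, where every inserted block is a set of twins of size $k$, so every vertex (resp.\ every vertex in $\bigcup_j X_j$) is $(1-\frac{2}{k})$-sedentary.

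For part (2), the idea is that twins in the base graph $X$ lift to larger twin sets after blowing up, by amalgamating the blocks. If $T$ is a set of twins in $X$ and the vertices of $T$ are pairwise adjacent, then each $v_j\in T$ carries the same external adjacency pattern and mutual edge weight $\eta$; after the blow-up, the union $W_1$ of all the complete blocks $K_{k_j}$ sitting over the vertices $v_j\in T$ consists of vertices that are mutually adjacent (within a block by the complete structure, across blocks because the underlying $v_j,v_\ell$ are adjacent) and share the same outside neighbours. Thus $W_1$ is a set of twins in $X(k_1,\ldots,k_n)(V)$; symmetrically, when the $v_j\in T$ are pairwise non-adjacent the empty blocks amalgamate into a set of twins $W_2$. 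Once $\lvert W_i\rvert\geq 3$, Theorem~\ref{sed} finishes the argument, yielding $(1-\frac{2}{\lvert W_i\rvert})$-sedentariness.

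The only genuinely delicate point—the step I expect to be the main obstacle—is checking the twin condition \emph{across different blocks} in part (2), and in part (1) when $v_j$ has a loop or when the surrounding weights are non-uniform. One must confirm that vertices in block $X_j$ and block $X_\ell$ (for $v_j,v_\ell\in T$) really do have matching neighbourhoods and matching edge weights, which requires using that $v_j$ and $v_\ell$ are twins in $X$ (same external neighbours, same weights) together with the uniform attachment rule of the blow-up; the mutual edge weight $\eta$ between the two blocks must equal the blown-up weight of $[v_j,v_\ell]$, and this has to be consistent with the within-block weights so that all of $W_i$ genuinely satisfies a single $T(\omega,\eta)$ description. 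Care is also needed with loops: a loop at $v_j$ in $X$ should translate correctly into a common loop weight on the twins, so that condition (iii) of the twin definition holds. These are bookkeeping verifications rather than conceptual difficulties, and once they are confirmed every assertion reduces to a direct application of Theorem~\ref{sed}.
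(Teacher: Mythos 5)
Your proposal is correct and takes essentially the same route as the paper: part (1) is exactly the observation that each inserted block $X_j$ is a set of twins of size $k_j\geq 3$ followed by Theorem~\ref{sed}, and part (2) amalgamates the complete (resp.\ empty) blocks over the pairwise adjacent (resp.\ non-adjacent) twins of $T$ into the twin set $W_1$ (resp.\ $W_2$) and applies Theorem~\ref{sed} again. The only step in the paper you omit is its verification that an empty block of size at least two over an adjacent twin cannot form twins with vertices of any other block---which explains why $W_1$ excludes the empty blocks but is not needed for the literal claims of the theorem---while the weight-consistency bookkeeping you flag (within-block weight versus the inherited weight of $[v_j,v_\ell]$) is a genuine subtlety for weighted $X$ that the paper's own proof also glosses over.
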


\begin{proof}
Since the vertices of $X_j$ form a set of twins of size $k_j\geq 3$, (1) follows directly from Theorem~\ref{sed}. Now, let $T$ be a set of twins in $X$ such that $k_j\geq 2$ for some $v_j\in T$. Note that the vertices in $T$ are either all pairwise adjacent, or all pairwise non-adjacent. Suppose the former holds. If $v_1,v_2\in T$ are distinct, and we replace $v_1$ by $O_{k_1}$ with $k_1\geq 2$ and $v_2$ by $Z_2\in\{O_{k_2},K_{k_2}\}$, then a vertex $u_1$ in $O_{k_1}$ and a vertex $u_2$ in $Z$ are not twins in $X(k_1,\ldots,k_n)(V)$, because $u_1$ is not adjacent to least one vertex $w\neq u_1$ in $O_{k_1}$ while $u_2$ is adjacent to this $w$. The same holds if reverse the roles of $v_1$ and $v_2$. Thus, we are left with the case when $v_1$ and $v_2$ are replaced by $K_{k_1}$ and $K_{k_2}$. In this case, any two vertices in $K_{k_1}\cup K_{k_2}$ are adjacent twins, and so the first statement in (2) holds. The second follows by using the same argument, and the third is a direct consequence of Theorem~\ref{sed}.
\end{proof}

Theorem~\ref{bup}(2) tells us that if $T$ is a set of non-adjacent (resp., adjacent) twins in $X$ and each vertex in $T$ is replaced by an empty (resp., complete) graph, one of which has size at least two, then $W=\bigcup_{v_j\in T} V(X_j)$ is a set of twins in $X(k_1,\ldots,k_n)(V)$ and each vertex in $W$ is $(1-\frac{2}{\lvert W\rvert})$-sedentary.

\begin{example}
Figure \ref{yay2} depicts three blow-ups of $C_4$: $C_4^2(V)$ is obtained by replacing each vertex of $C_4$ by $O_2$, $C_4(2,3,2,3)(V)$ by replacing two vertices of $C_4$ by two copies of $K_2$ and the rest by $O_3$, and $C_4^3(E)$ by replacing all edges of $C_4$ by copies of $O_3$. By Theorem~\ref{bup}(1), the vertices in the two copies of $O_3$ are $\frac{1}{3}$-sedentary in $C_4
(2,3,2,3)(V)$, while the 12 coloured vertices are $\frac{1}{3}$-sedentary in $C_4^3(E)$. By Theorem~\ref{bup}(2), a set of two twins in $C_4$ becomes a set of four in $C_4^2(V)$, all members of each set are $\frac{1}{2}$-sedentary.
\end{example}

\section{Cones}\label{secCones}

A graph of the form $K_1\vee X$ is called a \textit{cone} on $X$ with \textit{apex} $u$, where $V(K_1)=\{u\}$. A graph of the form $Z \vee X$, where $Z\in\{K_2,O_2\}$ is called a \textit{double cone} on $X$, and any vertex of $Z$ is called an \textit{apex}. In particular, $K_2\vee X$ and $O_2\vee X$ are resp. called \textit{connected} and \textit{disconnected} double cones.

For cones over $d$-regular graphs on $n$ vertices, Godsil showed that $\lvert U_A(t)_{u,u}\rvert\geq \frac{d^2}{d^2+4n}$, with equality if and only if $t=\frac{\pi}{\sqrt{d^2+4n}}$ \cite{SedQW}. This yields the following.

\begin{proposition}
\label{Godsil}
Let $d>0$ and $\mathscr{C}$ be a family of cones over weighted $d$-regular graphs on $n$ vertices.
\begin{enumerate}
\item If $d^2/n\rightarrow\infty$ as $n$ increases, then $\mathscr{C}$ is tightly sedentary at the apex.
\item If $\gamma$ is a constant such that $d^2/n\rightarrow \gamma$ as $n$ increases, then $\mathscr{F}$ is $\frac{\gamma}{\gamma+4}$-sedentary at the apex. In particular, if $d$ is fixed, then $\mathscr{C}$ is quasi-sedentary at the apex. 
\end{enumerate}
\end{proposition}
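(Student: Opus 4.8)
The plan is to reduce the entire statement to Godsil's bound $\lvert U_A(t)_{u,u}\rvert \geq \frac{d^2}{d^2+4n}$, with equality exactly at $t = \frac{\pi}{\sqrt{d^2+4n}}$, and then to track the limiting behaviour of this bound along the family. First I would note that the cone $K_1\vee X$ over a $d$-regular graph $X$ on $n$ vertices has $N = n+1$ vertices, so Godsil's equality statement says precisely that the apex $u$ is \emph{tightly} $\frac{d^2}{d^2+4n}$-sedentary: the quantity $\frac{d^2}{d^2+4n}\in(0,1]$ is attained at the positive time $t=\frac{\pi}{\sqrt{d^2+4n}}$, so the infimum in Definition~\ref{def} is a minimum. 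This lets me define the candidate function of Definition~\ref{defs} by $f(n+1)=\frac{d^2}{d^2+4n}$, extending $f$ arbitrarily to the remaining points of $\mathbb{R}^+$ (the family is countable, so this does not affect the limit taken along the relevant values of $N$).

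The key algebraic step is to rewrite the bound purely in terms of the ratio $d^2/n$, namely
\[
\frac{d^2}{d^2+4n} = \frac{d^2/n}{d^2/n + 4} = \phi\!\left(\frac{d^2}{n}\right), \qquad \phi(x) = \frac{x}{x+4} = 1 - \frac{4}{x+4},
\]
where $\phi$ is continuous and strictly increasing on $[0,\infty)$ with $\phi(0)=0$ and $\phi(x)\to 1$ as $x\to\infty$. Since $N\to\infty$ forces $n\to\infty$, the limit of $f$ is governed entirely by the limit of $d^2/n$, which is exactly the quantity the hypotheses control.

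For part (1), the hypothesis $d^2/n\to\infty$ yields $f(n+1)=\phi(d^2/n)\to 1$, so $\mathscr{C}$ is sedentary with $C=1$; because each member is tightly $f(\lvert V(X)\rvert)$-sedentary by the first step, the family is \emph{tightly} sedentary. For part (2), the hypothesis $d^2/n\to\gamma$ together with continuity of $\phi$ gives $f(n+1)\to\phi(\gamma)=\frac{\gamma}{\gamma+4}$, so $\mathscr{C}$ is $\frac{\gamma}{\gamma+4}$-sedentary, and tightness of each member upgrades this to tightly (hence sharply) $\frac{\gamma}{\gamma+4}$-sedentary. For the special case $d$ fixed we have $d^2/n\to 0$, i.e. $\gamma=0$, so $C=\phi(0)=0$ while each cone remains sharply $\frac{d^2}{d^2+4n}$-sedentary; by Definition~\ref{defs} this is exactly quasi-sedentariness.

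I expect the only genuine subtlety — more a bookkeeping point than an obstacle — to be the passage from the single-graph bound to the family-level function $f(\lvert V(X)\rvert)$: one must express $f$ as a function of the vertex count $N=n+1$ rather than of $n$ or $d$ directly, and must carefully carry the \emph{tightly}/\emph{sharply} qualifiers from Godsil's equality case through to the definitions of a tightly sedentary family and of a quasi-sedentary family.
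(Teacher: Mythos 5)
Your proposal is correct and follows essentially the same route as the paper, which derives the proposition directly from Godsil's bound $\lvert U_A(t)_{u,u}\rvert\geq \frac{d^2}{d^2+4n}$ together with its equality case at $t=\frac{\pi}{\sqrt{d^2+4n}}$ (the paper treats this as immediate, writing only ``This yields the following''). Your explicit rewriting via $\phi(x)=\frac{x}{x+4}$ and the bookkeeping of the tightly/sharply qualifiers and of $f$ as a function of $\lvert V(X)\rvert=n+1$ simply spells out what the paper leaves implicit.
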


\begin{remark}
\label{conerem}
If $d=0$, then $\lvert U(\frac{\pi}{2\sqrt{n}})_{u,u}\rvert=0$, and so the apex in this case is not sedentary.
\end{remark}

\begin{theorem}
\label{adj}
For each $0\leq C\leq 1$, there exists a $C$-sedentary family with respect to the adjacency matrix.
\end{theorem}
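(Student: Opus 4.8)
The plan is to reduce the whole statement to Godsil's cone estimate recorded in Proposition~\ref{Godsil}, which already supplies a continuous one-parameter family of sedentariness constants; the only genuine work is to realize each prescribed target $C$ by choosing the degree-to-order ratio of the base graph correctly. For the endpoint $C=1$ I would simply invoke the family $\mathscr{K}$ of complete graphs: by (\ref{K}) each $K_n$ is tightly $\bigl(1-\tfrac{2}{n}\bigr)$-sedentary at every vertex, so taking $f(s)=1-\tfrac{2}{s}$, which maps into $(0,1]$ and satisfies $f(s)\to 1$, shows via Definition~\ref{defs} that $\mathscr{K}$ is a (sedentary) $1$-sedentary family with respect to $A$.

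For $0\le C<1$ I would use cones. The map $\gamma\mapsto \tfrac{\gamma}{\gamma+4}$ is strictly increasing and carries $[0,\infty)$ onto $[0,1)$, and solving $\tfrac{\gamma}{\gamma+4}=C$ gives $\gamma=\tfrac{4C}{1-C}\ge 0$. Hence it suffices to exhibit, for each target $\gamma\ge 0$, a family of weighted $d$-regular graphs $X_n$ on $n$ vertices with $d_n^2/n\to\gamma$; then Proposition~\ref{Godsil}(2) makes the family of cones $\{K_1\vee X_n\}$ precisely $\tfrac{\gamma}{\gamma+4}=C$-sedentary at the apex with respect to $A$.

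The cleanest realization uses edge weights so that the ratio is attained exactly. For $\gamma>0$ I would let $X_n$ be $K_n$ with every edge given weight $w_n=\sqrt{\gamma n}/(n-1)$; then each vertex has weighted degree $d_n=w_n(n-1)=\sqrt{\gamma n}$, so $d_n^2/n=\gamma$ for every $n$, and Proposition~\ref{Godsil}(2) yields that each cone $K_1\vee X_n$ is sharply $C$-sedentary at its apex. Choosing the constant function $f\equiv C\in(0,1]$ in Definition~\ref{defs} then certifies that $\{K_1\vee X_n\}_{n\ge 2}$ is a $C$-sedentary family. For $\gamma=0$ (that is, $C=0$) I would instead keep the degree fixed, for instance $X_n=C_n$ with $d=2$, so that $d^2/n=4/n\to 0$; here each cone $K_1\vee C_n$ on $s=n+1$ vertices is sharply $\tfrac{1}{s}$-sedentary at the apex, and $f(s)=1/s\to 0$ gives exactly the quasi-sedentary (i.e.\ $0$-sedentary) conclusion, matching the last assertion of Proposition~\ref{Godsil}(2).

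Once Proposition~\ref{Godsil} is invoked the argument is essentially bookkeeping, so the only substantive point to verify is the existence of a genuine infinite family of $d$-regular graphs realizing the prescribed ratio $d_n^2/n\to\gamma$. Permitting edge weights removes all difficulty, since the weighted $K_n$ construction hits $d_n^2/n=\gamma$ identically. The main obstacle one would instead face in an unweighted formulation is purely arithmetic: producing $d_n$-regular graphs with $d_n\sim\sqrt{\gamma n}$ of the correct parity (e.g.\ via circulants) and then passing to the limit $d_n^2/n\to\gamma$. This is why I would favor the weighted construction, which makes every target $C\in[0,1]$ attainable with no number-theoretic side conditions.
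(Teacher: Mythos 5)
Your proof is correct, and for $0\le C<1$ it runs on the same engine as the paper's: invoke Proposition~\ref{Godsil}(2) for cones over weighted regular graphs with $d^2/n\to\gamma=\frac{4C}{1-C}$, so that $\frac{\gamma}{\gamma+4}=C$. The two deviations are minor, and both are to your credit. First, for the endpoint $C=1$ the paper uses Proposition~\ref{Godsil}(1) (cones with $d^2/n\to\infty$), whereas you reuse the complete-graph family $\mathscr{K}$ and the bound (\ref{K}); either works, since the paper already established in Section~\ref{secSed} that $\mathscr{K}$ is sedentary. Second, the paper's proof never exhibits base graphs achieving a prescribed limiting ratio --- it tacitly assumes that for each $\gamma\ge 0$ there is a family of weighted $d$-regular graphs with $d^2/n\to\gamma$ --- while you construct them explicitly: complete graphs with edge weight $\sqrt{\gamma n}/(n-1)$ realize $d^2/n=\gamma$ exactly when $\gamma>0$ (legitimate here, since Proposition~\ref{Godsil} is stated for weighted regular base graphs and $d=\sqrt{\gamma n}>0$), and cycles handle $\gamma=0$. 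This closes an existence step the paper glosses over. One nit: your $f(s)=1-\frac{2}{s}$ for $C=1$ is nonpositive at $s\le 2$, so it does not map $\mathbb{R}^+$ into $(0,1]$ as Definition~\ref{defs} requires; since $\mathscr{K}$ contains only graphs on at least three vertices, redefining $f$ below $s=3$ fixes this, exactly as the paper itself tacitly does.
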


\begin{proof}
If $0\leq C<1$, then $\mathscr{C}$ is $C$-sedentary at the apex by Proposition \ref{Godsil}(2) whenever $d^2/n\rightarrow \frac{4C}{1-C}$. If $d^2/n\rightarrow \infty$, then $\mathscr{C}$ is sedentary at the apex by Proposition \ref{Godsil}(1).
\end{proof}

For the Laplacian case, we prove a more general result for cones.

\begin{theorem}
\label{sedL1}
Let $m\geq 1$ and $X$ be a simple positively weighted graph on $n\geq 2$ vertices. For any vertex $u$ of $K_m$ in $K_m\vee X$, $\lvert U_L(t)_{u,u}\rvert\geq 1-\frac{2}{m+n}$ for all $t$ with equality if and only if $t=\frac{j\pi}{m+n}$ for some odd $j$. Thus, the family of joins $K_m\vee X$ is tightly sedentary at any vertex of $K_m$.
\end{theorem}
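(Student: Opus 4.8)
The plan is to determine the eigenvalue support $\sigma_u(L)$ of a vertex $u$ of $K_m$ with respect to the Laplacian $L$ of $K_m\vee X$, show that it consists of exactly two eigenvalues, and then read off $\lvert U_L(t)_{u,u}\rvert$ directly. Writing $L$ in block form with the $K_m$-vertices listed first,
\[
L=\begin{pmatrix} L(K_m)+nI_m & -\mathbf{J}_{m,n}\\ -\mathbf{J}_{n,m} & L(X)+mI_n\end{pmatrix},
\]
I would first locate the eigenvalue $\theta=m+n$. The $m-1$ twin differences $\mathbf{e}_{u_i}-\mathbf{e}_{u_j}$, supported on $K_m$, all lie in its eigenspace (a direct computation using $\deg(u)=m+n-1$, $\omega=0$, $\eta=1$; cf.\ Lemma~\ref{alphabeta}), and so does the \emph{global} join vector $z$ that equals $n$ on each vertex of $K_m$ and $-m$ on each vertex of $X$. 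Any further $\theta$-eigenvectors arising from $X$ are supported off $K_m$.

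Next I would compute $(E_{m+n})_{u,u}=\lVert P_{V_\theta}\mathbf{e}_u\rVert^2$, where $V_\theta$ is the $\theta$-eigenspace, by splitting $V_\theta$ into three mutually orthogonal pieces: the sum-zero subspace of the $K_m$-block (spanned by the twin vectors), the line $\mathrm{span}(z)$, and the $X$-supported part. These are orthogonal because $z$ restricts to the $\mathbf{1}_m$-direction on $K_m$, which is orthogonal to the sum-zero subspace, and the $X$-supported eigenvectors meet $z$ only on $X$, where they are orthogonal to $\mathbf{1}_n$. Since $\mathbf{e}_u$ is supported on $K_m$, its projection contributes $1-\tfrac1m$ from the sum-zero subspace, $\langle\mathbf{e}_u,z\rangle^2/\lVert z\rVert^2=n^2/(mn(m+n))=\tfrac{n}{m(m+n)}$ from $\mathrm{span}(z)$, and $0$ from the $X$-part, giving $(E_{m+n})_{u,u}=1-\tfrac1m+\tfrac{n}{m(m+n)}=1-\tfrac{1}{m+n}$. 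As $K_m\vee X$ is connected, $0$ is a simple Laplacian eigenvalue with eigenvector $\mathbf{1}$, so $(E_0)_{u,u}=\tfrac{1}{m+n}$. The crucial observation is that $(E_0)_{u,u}+(E_{m+n})_{u,u}=1$; since the nonnegative numbers $(E_\lambda)_{u,u}$ sum to $I_{u,u}=1$ over all eigenvalues, every other diagonal entry must vanish, forcing $\sigma_u(L)=\{0,\,m+n\}$.

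With a two-element support the statement follows immediately. Writing $p=\tfrac{1}{m+n}$, we get $U_L(t)_{u,u}=p+(1-p)e^{i(m+n)t}$, so
\[
\lvert U_L(t)_{u,u}\rvert^2=p^2+(1-p)^2+2p(1-p)\cos\!\bigl((m+n)t\bigr),
\]
which is minimized exactly when $\cos((m+n)t)=-1$, i.e.\ $t=\tfrac{j\pi}{m+n}$ for odd $j$; there the modulus equals $(1-p)-p=1-\tfrac{2}{m+n}$, positive since $m+n\ge 3$. This yields the inequality, the equality characterization, and—because the minimum is attained—tight sedentariness, with periodicity of period $\tfrac{2\pi}{m+n}$ being clear.

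The main obstacle is the second step: proving the support is exactly $\{0,m+n\}$ rather than merely bounding $(E_{m+n})_{u,u}\ge 1-\tfrac1m$ as in Theorem~\ref{sed}. The improvement over the generic twin bound $1-\tfrac2m$ comes precisely from the contribution of the global vector $z$, which the twin argument cannot see; the cleanest way to capture it is the ``diagonal entries sum to $1$'' identity, which simultaneously eliminates every other eigenvalue from the support. (Alternatively, one could apply Corollary~\ref{singleton}(1) with $a=1-\tfrac{1}{m+n}$ and verify (\ref{eureka3}) and (\ref{eureka333}) at $t_1=\tfrac{\pi}{m+n}$, but the explicit two-term expression makes the equality condition transparent.)
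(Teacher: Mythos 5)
Your proof is correct, and it takes a more self-contained route than the paper. The paper's proof is essentially two lines: it cites Equation 31 of Alvir et al.\ for the closed form $U_L(t)_{u,u}=\tfrac{1}{m+n}+\tfrac{m+n-1}{m+n}e^{it(m+n)}$ at a vertex $u$ of $K_m$, and then minimizes the modulus of this two-term expression exactly as you do in your final step. What you have done differently is to derive that formula from scratch: you identify the $(m+n)$-eigenspace as the orthogonal sum of the sum-zero subspace of the $K_m$-block, the join vector $z$, and the $X$-supported part; you compute $(E_{m+n})_{u,u}=1-\tfrac{1}{m+n}$ and $(E_0)_{u,u}=\tfrac{1}{m+n}$; and you use the identity $\sum_\lambda (E_\lambda)_{u,u}=1$ together with nonnegativity of the diagonal entries to conclude $\sigma_u(L)=\{0,m+n\}$, which immediately gives the paper's cited formula. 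All of these computations check out (in particular $1-\tfrac1m+\tfrac{n}{m(m+n)}=\tfrac{m+n-1}{m+n}$, and $z$ is indeed an eigenvector for $m+n$). What each approach buys: the paper's proof is short but opaque, since the entire content is hidden in the citation; yours makes the mechanism visible---the improvement over the generic twin bound $1-\tfrac2m$ of Theorem~\ref{sed} comes exactly from the extra projection weight $\tfrac{n}{m(m+n)}$ carried by the global join vector $z$, which the twin argument alone cannot detect---and the ``diagonal entries sum to one'' step is a clean way to kill every other eigenvalue in the support at once, rather than having to compute the remaining eigenprojections explicitly. Your equality characterization and the tightness/periodicity conclusions then follow exactly as in the paper.
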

\begin{proof}
Let $u$ be a vertex of $K_m$ in $K_m\vee X$. By \cite[Equation 31]{Alvir2016}, $U_L(t)_{u,u}=\frac{1}{m+n}+\frac{m+n-1}{m+n}e^{it(m+n)}$. Thus, $\lvert U_L(t)_{u,u}\rvert^2\geq \frac{(m+n-2)^2}{(m+n)^2}$ for all $t$ and the result is immediate.
\end{proof}

By Corollary~\ref{sedg2}(1), if $m\geq 3$ is fixed, then $K_m\vee X$ is $(1-\frac{2}{m})$-sedentary at every vertex of $K_m$ with respect to $M$. But since $\lvert U_L(t)_{u,u}\rvert\geq 1-\frac{2}{m+n}>1-\frac{2}{m}$, this family of joins is, in fact, sedentary with respect to $L$. This also implies that Theorem~\ref{sedL1} yields a sharper bound than Theorem~\ref{sed}, which suggests that the bound obtained in Theorem~\ref{sed} can be improved if we take a more specific Hamiltonian. 

Taking $m\in\{1,2\}$ in Theorem~\ref{sedL1} yields the following result.

\begin{corollary}
\label{corsedLLL}
The families of cones and connected double cones on simple positively weighted graphs are tightly sedentary at the apexes with respect to the Laplacian matrix.
\end{corollary}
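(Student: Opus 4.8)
The plan is to derive Corollary~\ref{corsedLLL} directly from Theorem~\ref{sedL1} by specializing the parameter $m$ to the two relevant cone cases. Recall that a cone on $X$ is precisely $K_1 \vee X$ and a connected double cone is $K_2 \vee X$, so these correspond exactly to $m=1$ and $m=2$ in the join $K_m \vee X$. The entire content of the corollary is therefore an instantiation of an already-proved stronger statement, and the proof should be a short paragraph rather than a fresh argument.

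First I would set $m=1$. Theorem~\ref{sedL1} requires $m \geq 1$ and $X$ simple positively weighted on $n \geq 2$ vertices, all of which hold, so it yields that for any vertex $u$ of $K_1$ in $K_1 \vee X$ we have $\lvert U_L(t)_{u,u}\rvert \geq 1 - \frac{2}{1+n}$ for all $t$, with equality exactly when $t = \frac{j\pi}{1+n}$ for odd $j$. The single vertex of $K_1$ is by definition the apex of the cone, so the cone $K_1 \vee X$ is tightly sedentary at its apex. Since this holds for every such $X$ and the bound $1 - \frac{2}{n+1}$ tends to $1$ as $n \to \infty$ (by Definition~\ref{defs} with $C=1$), the family of cones is a sedentary family, and tightness follows because the infimum is attained, giving tight sedentariness at the apex.

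Next I would set $m=2$. Then Theorem~\ref{sedL1} gives, for any vertex $u$ of $K_2$ in $K_2 \vee X$, the bound $\lvert U_L(t)_{u,u}\rvert \geq 1 - \frac{2}{2+n}$ for all $t$ with equality iff $t = \frac{j\pi}{2+n}$ for odd $j$. By the definition given in Section~\ref{secCones}, $K_2 \vee X$ is precisely a connected double cone and each vertex of $K_2$ is an apex; hence the connected double cone is tightly sedentary at each apex, and again the family is tightly sedentary at the apexes as $n$ grows. Invoking Proposition~\ref{prop}(1b) is not even necessary here since the theorem already asserts the bound at \emph{any} vertex of $K_m$, but it could be cited to note that the two apexes of $K_2$, being cospectral, share the same sedentariness behaviour.

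I do not anticipate a genuine obstacle, since the result is a direct corollary. The only points requiring care are purely definitional: confirming that ``cone'' and ``connected double cone'' match $m=1$ and $m=2$ respectively (the disconnected double cone $O_2 \vee X$ is deliberately excluded, as Theorem~\ref{sedL1} concerns $K_m$, not $O_m$), and confirming that the family-level conclusion follows by checking that the function $f(s) = 1 - \tfrac{2}{s}$ arising from the vertex count tends to $C=1$, so that the families qualify as \emph{sedentary} in the sense of Definition~\ref{defs}. The mild subtlety worth flagging is that tightness of the family requires the per-graph infimum to be attained, which the equality clause of Theorem~\ref{sedL1} guarantees for each fixed $X$.
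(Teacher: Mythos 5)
Your proposal is correct and takes exactly the same route as the paper, whose entire proof is the single sentence ``Taking $m\in\{1,2\}$ in Theorem~\ref{sedL1} yields the following result.'' Your additional remarks on tightness and the family-level limit $f(s)=1-\tfrac{2}{s}\to 1$ are fine but already implicit in the final sentence of Theorem~\ref{sedL1}.
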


\section{Strongly cospectral vertices}\label{secSC}

We say that two vertices $u$ and $v$ are \textit{strongly cospectral} if $E_j\textbf{e}_u=\pm E_j\textbf{e}_v$ for all $\lambda_j\in\sigma_u(M)$. In this case,
\begin{center}
$\sigma_{uv}^+(M)=\{E_j\textbf{e}_u=E_j\textbf{e}_u\neq \textbf{0}\}\quad$ and $\quad \sigma_{uv}^-(M)=\{E_j\textbf{e}_u=-E_j\textbf{e}_u\neq \textbf{0}\}$.
\end{center}
partition $\sigma_u(M)$. The interest in the study of strongly cospectrality is motivated by the fact that it is a requirement for two vertices to exhibit PGST \cite[Lemma 13.1]{Godsil2012a}. In this section, we show that there are infinitely many graphs with strongly cospectral vertices that are sedentary. But as the next result shows, the machinery that we have developed in Section \ref{secSuff} has limitations for strongly cospectral vertices.

\begin{proposition}
\label{SCsed}
Let $u$ and $v$ are strongly cospectral. If $S=\sigma_{uv}^{\pm}(M)$, then $a=\frac{1}{2}$ in (\ref{eureka1}).
\end{proposition}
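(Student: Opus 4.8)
The plan is to unpack the definition of $a$ in equation (\ref{eureka1}) when $S=\sigma_{uv}^{\pm}(M)$ and exploit the spectral symmetry imposed by strong cospectrality. Recall that $a=\sum_{\lambda_j\in S}(E_j)_{u,u}$, and that $(E_j)_{u,u}=\be_u^T E_j\be_u=\norm{E_j\be_u}^2$ since $E_j$ is an orthogonal projection. So the strategy reduces to computing the total squared length of the projections of $\be_u$ onto the eigenspaces indexed by $S$ versus those indexed by $\sigma_u(M)\backslash S$, and showing the two sums are equal (hence each is $\tfrac12$).

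First I would record the orthogonal decomposition $\be_u=\sum_{\lambda_j\in\sigma_u(M)}E_j\be_u$, which gives $\norm{\be_u}^2=\sum_j\norm{E_j\be_u}^2=1$. Since $\{\sigma_{uv}^+(M),\sigma_{uv}^-(M)\}$ partitions $\sigma_u(M)$, this splits as
\begin{equation*}
1=\sum_{\lambda_j\in\sigma_{uv}^+(M)}\norm{E_j\be_u}^2+\sum_{\lambda_j\in\sigma_{uv}^-(M)}\norm{E_j\be_u}^2.
\end{equation*}
The key step is then to bring in strong cospectrality, which asserts $E_j\be_u=\pm E_j\be_v$ for every $\lambda_j$, with the sign determined by whether $\lambda_j$ lies in $\sigma_{uv}^+$ or $\sigma_{uv}^-$. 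The natural object to examine is $\be_u^T U_M(t)\be_v=\sum_j e^{it\lambda_j}\be_u^T E_j\be_v$, or more directly the diagonal comparison: strong cospectrality forces $u$ and $v$ to be cospectral, so $(E_j)_{u,u}=(E_j)_{v,v}=\norm{E_j\be_u}^2$ for each $j$.

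The crux, which I expect to be the main obstacle, is establishing the \emph{balance} between the $+$ and $-$ parts rather than merely their sum. The cleanest route is to consider $\be_u\pm\be_v$. From $E_j\be_v=E_j\be_u$ for $\lambda_j\in\sigma_{uv}^+$ and $E_j\be_v=-E_j\be_u$ for $\lambda_j\in\sigma_{uv}^-$, one computes
\begin{equation*}
\be_u^T\be_v=\sum_{\lambda_j\in\sigma_u(M)}\be_u^T E_j\be_v=\sum_{\lambda_j\in\sigma_{uv}^+(M)}\norm{E_j\be_u}^2-\sum_{\lambda_j\in\sigma_{uv}^-(M)}\norm{E_j\be_u}^2.
\end{equation*}
Since $u\neq v$ are distinct vertices, $\be_u^T\be_v=0$, which immediately forces the two sums to be equal. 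Combining this with the total being $1$ yields that each equals $\tfrac12$, so $a=\sum_{\lambda_j\in S}(E_j)_{u,u}=\tfrac12$ whether $S=\sigma_{uv}^+(M)$ or $S=\sigma_{uv}^-(M)$. I would close by remarking that this explains the limitation noted before the proposition: Theorem~\ref{eureka} and Corollary~\ref{singleton} require $a\geq\tfrac12$ with the boundary case $a=\tfrac12$ being exactly the non-sedentary or borderline regime, so taking $S$ to be a strong-cospectrality class never produces the strict inequality $a>\tfrac12$ needed to directly certify sedentariness.
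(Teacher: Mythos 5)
Your proof is correct and is essentially the paper's own argument: both use strong cospectrality to write $\mathbf{e}_u^T E_j\mathbf{e}_v = \pm(E_j)_{u,u}$ on $\sigma_{uv}^{\pm}(M)$, and then exploit that the $E_j$'s sum to the identity (so the $(u,u)$-entries sum to $1$ while the $(u,v)$-entries sum to $\mathbf{e}_u^T\mathbf{e}_v=0$) to force each half to equal $\tfrac12$. The only difference is cosmetic: you phrase the diagonal entries as squared norms $\lVert E_j\mathbf{e}_u\rVert^2$, whereas the paper works directly with matrix entries.
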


\begin{proof}
Assume $\sigma_{uv}^+(M)=\{\lambda_1,\ldots,\lambda_s\}$ and $\sigma_{uv}^-(M)=\{\lambda_{s+1},\ldots,\lambda_r\}$. Then we have $(E_{j})_{u,u}=(E_{j})_{u,v}$ for $j=1,\ldots,s$, while $(E_{j})_{u,u}=-(E_{j})_{u,v}$ for $j=s+1,\ldots,r$. As the $E_j$'s sum to identity, we get $\sum_{j=1}^s(E_{j})_{u,u}=\sum_{k=s+1}^r(E_{k})_{u,u}=\frac{1}{2}$.
\end{proof}

Let $u$ and $v$ be strongly cospectral. If we take $S\in\{\sigma_{uv}^+(M),\sigma_{uv}^-(M)\}$, then $a=\frac{1}{2}$ from Proposition \ref{SCsed}. In this case, Theorem~\ref{eureka} is not very useful, as (\ref{eureka2}) yields the trivial statement $\lvert U(t)_{u,u}\rvert\geq 0$ for all $t$. If we add that either (\ref{eureka3}) and (\ref{eureka333}) hold or Corollary~\ref{eurekarem2}(2) holds, then Corollary~\ref{singleton}(1b) implies that $u$ is not sedentary. Indeed, this holds because strong cospectrality together with either (\ref{eureka3}) and (\ref{eureka333}) or Corollary~\ref{eurekarem2}(2) resp. yield PST or PGST between $u$ and $v$. Hence, in order for Theorem~\ref{eureka} to work for strongly cospectral vertices, one may avoid taking $S\in\{\sigma_{uv}^+(M),\sigma_{uv}^-(M)\}$. For the case of strongly cospectral twin vertices, it is known that $\lvert\sigma_{uv}^-(M)\rvert=1$ \cite[Theorem 3.4]{MonterdeELA}), and so the only viable option is to choose $S$ such that $\sigma_{uv}^-(M)$ is a proper subset of $S$, in which case, $\lvert S\rvert\geq 2$ and $\theta\in S$, where $\theta$ is given in (\ref{adjalpha}). However, we shall see in Remark~\ref{twoeval} of the next subsection that, unlike the case $S=\{\theta\}$ which yields Theorem~\ref{sed}, the case $\lvert S\rvert\geq 2$ with $\theta\in S$ requires more work in order to establish sedentariness of $u$.

To achieve our goal of showing that there are infinitely many graphs with strongly cospectral vertices that are sedentary, we consider disconnected double cones. Indeed, the apexes of such graphs are strongly cospectral with respect to $A$ and $L$ by \cite[Corollary 6.9]{MonterdeELA}. Our main motivation for considering these graphs is that their apexes form a set of twins of size two, and our results in Corollary~\ref{corjoin1}(1) prompt us to investigate whether the apexes of $O_2\vee X$ are also sedentary. Results in the literature indicate that the apexes of disconnected double cones are excellent sources of PST and PGST (see for instance \cite{Alvir2016} for the Laplacian case and \cite{Monterde2022} for the adjacency and signless Laplacian case), and so one might be inclined to speculate that these apexes are not sedentary. But as it turns out, the apexes of disconnected double cones are sedentary whenever they do not exhibit PST or PGST.

\subsection*{Laplacian case}

Let $X$ be a simple positively weighted graph on $n$ vertices. Then
\begin{equation}
\label{om}
U_L(t)_{u,u}=\frac{1}{m+n}+\frac{(m-1)e^{itn}}{m}+\frac{ne^{it(m+n)}}{m(m+n)}
\end{equation}
for each vertex $u$ of $O_m$ in $O_m\vee X$ (see \cite[Equation 33]{Alvir2016}).

\begin{theorem}
\label{sedL}
Let $X$ be a simple positively weighted graph on $n$ vertices, and let $u$ be an apex of $O_2\vee X$.
\begin{enumerate}
\item If $n\equiv 2$ (mod 4), then $O_2\vee X$ is not sedentary at $u$.
\item If $n\equiv 0$ (mod 4), then $\lvert U_L(t)_{u,u}\rvert\geq 1-\frac{n}{n+2}$ with equality if and only if $t=\frac{j\pi}{2}$ for any odd $j$.
\item Let $n$ be odd. If $n=1$, then $\lvert U_L(t)_{u,u}\rvert \geq \frac{1}{3}$ with equality if and only if $t=\ell\pi$ for any odd $\ell$, while $n\geq 3$, then $\lvert U_L(t)_{u,u}\rvert \geq 1-\frac{n+2-\sqrt{2}}{n+2}$ with equality if and only if $t=\frac{j\pi}{2}$ for any odd $j$.
\end{enumerate}
\end{theorem}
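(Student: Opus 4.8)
The plan is to reduce the whole theorem to the study of the single function $f(t) = U_L(t)_{u,u}$. Setting $m=2$ and $N = n+2$ in (\ref{om}) gives
\begin{equation*}
f(t) = \frac{1}{N} + \frac{1}{2}e^{int} + \frac{n}{2N}e^{iNt},
\end{equation*}
whose frequencies $0, n, N$ are integers; hence $u$ is periodic with respect to $L$ and $\inf_{t>0}|f(t)| = \min_t|f(t)|$ is attained, so that whatever sedentariness we prove will be tight by Lemma~\ref{tight}. A direct evaluation at the distinguished time $t=\pi/2$, using $e^{iN\pi/2} = i^{n+2} = -i^n$, gives $f(\pi/2) = (1+i^n)/N$. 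This one computation already produces the qualitative trichotomy: $|f(\pi/2)| = 0$ when $n\equiv 2 \pmod 4$, proving (1) since then $\inf|f|=0$; while $|f(\pi/2)| = 2/N = 1-\tfrac{n}{n+2}$ when $n\equiv 0\pmod 4$ and $|f(\pi/2)| = \sqrt2/N$ when $n$ is odd, matching the proposed constants in (2) and (3). It remains to show that, except for $n=1$, this value is the \emph{global} minimum and to identify the equality locus.

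To locate the minimum I would differentiate. Writing $|f|^2$ as a trigonometric polynomial and using $N-n=2$ with sum-to-product identities, the derivative factors cleanly as
\begin{equation*}
\frac{d}{dt}\,|f(t)|^2 = -\frac{4n}{N}\,\cos t\,\sin\!\Big(\tfrac{Nt}{2}\Big)\cos\!\Big(\tfrac{nt}{2}\Big).
\end{equation*}
The critical points thus split into three families: (A) $\cos t = 0$, i.e.\ $t = j\pi/2$ with $j$ odd; (B) $\sin(Nt/2)=0$, i.e.\ $t=2k\pi/N$; and (C) $\cos(nt/2)=0$, i.e.\ $t=(2k+1)\pi/n$. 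On each family $f$ simplifies: on (A) one gets $N^2|f|^2 = 2\big(1+\cos(nt)\big)$, reproducing the values $4,0,2$ according to $n\bmod 4$ as above; on (B) the terms of frequencies $0$ and $N$ align, giving $f=\tfrac12(1+e^{int})$ and $|f|=|\cos(kn\pi/N)|$; and on (C) one has $e^{int}=-1$, giving $f=-\tfrac{n}{2N}(1+e^{2i(2k+1)\pi/n})$ and $|f|=\tfrac{n}{N}|\cos((2k+1)\pi/n)|$.

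The crux is the comparison of the three families, and this is where I expect the main difficulty. Here I would use $\gcd(n,n+2)=\gcd(n,2)$ to determine which residues $kn \bmod N$ and $(2k+1)\bmod n$ actually occur, hence how closely the relevant angles can approach $\pi/2$. For odd $n$ this yields minimal family-(B) and family-(C) values $N^2\sin^2(\pi/(2N))$ and $n^2\sin^2(\pi/(2n))$ (with the analogues $N^2\sin^2(\pi/N)$, $n^2\sin^2(\pi/n)$ for $n\equiv 0\pmod 4$). Comparing with the family-(A) value then reduces to elementary inequalities, namely $N\sin(\pi/(2N))>\sqrt2$ for odd $N\ge 3$ and $n\sin(\pi/n)>2$ for $n\ge 4$, all provable from monotonicity of $\tfrac{\sin y}{y}$ plus a single boundary check. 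For $n\ge 3$ these hold, so families (B) and (C) strictly exceed (A); the minimum therefore sits on (A), giving (2) and the $n\ge 3$ part of (3) with equality exactly at $t=j\pi/2$, $j$ odd. The inequality $n\sin(\pi/(2n))>\sqrt2$ fails precisely at $n=1$, where family (C) drops to $N^2|f|^2 = 1$ and overtakes (A); this is the genuine source of the separate $n=1$ statement, whose minimum $1/3$ occurs at $t=\ell\pi$, $\ell$ odd. The subtle points to handle carefully are thus the arithmetic of the nearest attainable angle and the isolation of $n=1$ as the sole exception.

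As an independent check on sedentariness itself (though not on the sharp constant) one can bypass the minimization and apply Corollary~\ref{singleton}(2ii) with $S=\{n,N\}$, for which $a=\tfrac{n+1}{N}$ and $1-a=\tfrac1N$. A short computation gives $F(t)=\tfrac{1}{2N}\sqrt{N^2+n^2+2nN\cos 2t}\ge \tfrac1N$, with equality only at $t\equiv \pi/2 \pmod \pi$, and $f$ is nonzero there exactly when $n\not\equiv 2\pmod 4$; since $u$ is periodic, this immediately yields tight sedentariness in cases (2) and (3) and confirms non-sedentariness in case (1).
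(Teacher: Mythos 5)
Your proposal is correct, and its core coincides with the paper's proof: starting from (\ref{om}) with $m=2$, both arguments minimize the same trigonometric polynomial $\lvert U_L(t)_{u,u}\rvert^2$ by factoring its derivative as a constant times $\cos t\,\cos(nt/2)\,\sin\bigl(t(n+2)/2\bigr)$, and both then compare critical values using the $\gcd(n,n+2)=\gcd(n,2)$ nearest-angle arithmetic and elementary bounds of the type $m\sin(\pi/m)>2$. Two of your choices genuinely diverge from the paper, and both make the argument more self-contained. First, you prove part (1) directly from $f(\pi/2)=(1+i^n)/(n+2)=0$ when $n\equiv 2\pmod 4$, whereas the paper invokes the known Laplacian PST result for disconnected double cones; your one-line evaluation at $t=\pi/2$ also simultaneously produces the candidate constants in (2) and (3). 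Second, you never compute a second derivative: the paper classifies the family $t=2k\pi/(n+2)$ as relative maxima via the sign of $h''$ and then compares only the remaining two families, while you evaluate the walk on all three families (on your family (B) the frequencies $0$ and $n+2$ align, giving $\lvert f\rvert=\lvert\cos(kn\pi/(n+2))\rvert$, whose minimum $ \sin(\pi/(2N))$ or $\sin(\pi/N)$ you then check strictly exceeds the family-(A) value). This costs one extra gcd computation but removes the $h''$ sign analysis, including any worry about degenerate critical points where two factors vanish, and it makes the equality loci in (2) and (3) and the isolation of $n=1$ fall out immediately. Your closing cross-check via Corollary~\ref{singleton}(2ii) with $S=\{n,n+2\}$ is also sound: it gives $a=(n+1)/(n+2)$ and $F(t)\geq \tfrac{1}{n+2}$ with equality only at odd multiples of $\pi/2$, where $f\neq 0$ precisely when $n\not\equiv 2\pmod 4$; as you note, and as the paper's Remark~\ref{twoeval} explains for the nearby choice $S=\{0,n\}$, this shortcut certifies tight sedentariness but cannot recover the sharp constants, which is exactly why the full minimization is needed.
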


\begin{proof}
Let $u$ be an apex of $O_2\vee X$. From (\ref{om}), $U_L(t)_{u,u}=\frac{1}{n+2}+\frac{e^{itn}}{2}+\frac{ne^{it(n+2)}}{2(n+2)}$, and so
\begin{equation}
\begin{split}
\label{dc}
\lvert U_L(t)_{u,u}\rvert^2&=\frac{n^2+2n+4+h(t)}{2(n+2)^2},
\end{split}
\end{equation}
where $h(t)=n(n+2)\cos(2t)+2(n+2)\cos(tn)+2n\cos(t(n+2))$. Note that $\lvert U_L(t)_{u,u}\rvert^2$ is maximum (resp., minimum) if and only if $h(t)$ is maximum (resp., minimum). One can then verify that
\begin{center}
$\hspace{-0.152in} h'(t)\stackrel{(*)}{=}-2n(n+2)\left[\sin(2t)+\sin(tn)+\sin(t(n+2))\right]=-8n(n+2)\cos\left(t\right)\cos\left(tn/2\right)\sin\left(t(n+2)/2\right)$.
\end{center}
Thus, $h'(t)=0$ if and only if either (i) $t=j\pi/2$ for some odd $j$, (ii) $t=\ell\pi/n$ for some odd $\ell$, or (iii) $t=\frac{k\pi}{n+2}$ for some even $k$. We now differentiate $h'(t)$ in $(*)$ to get
\begin{equation}
\label{pp}
h''(t)=-4n(n+2)\left[2\cos\left(t(n+4)/2\right)\cos\left(tn/2\right)+n\cos\left(t(n+1)\right)\cos(t)\right]
\end{equation}
If $t=\frac{j\pi}{2}$ for some odd $j$, then $\cos(t)=0$ and $\cos\left(t(n+4)/2\right)=-\cos\left(tn/2\right)$ because $j$ is odd. While if $t=\ell\pi/n$ for some odd $\ell$, then $\cos\left(tn/2\right)=0$ and $\cos(t(n+1))=-\cos(t)$ because $\ell$ is odd. In both cases, (\ref{pp}) yields $h''(t)>0$, and so $\lvert U_L(t)_{u,u}\rvert^2$ has a relative minimum. Now, if $t=\frac{k\pi}{n+2}$ for some even $k$, then $\cos\left(t(n+4)/2\right)=\cos\left(tn/2\right)$ and $\cos\left(t(n+1)\right)=\cos(t)$. Using (\ref{pp}), one checks that $h''(t)<0$, and so $\lvert U_L(t)_{u,u}\rvert^2$ has a relative maximum. From these three cases, it suffices to compare the values of $\lvert U_L(t)_{u,u}\rvert^2$ at $t=j\pi/2$ for odd $j$ and $t=\ell\pi/n$ for odd $\ell$ to get the absolute minimum. We begin with $t=j\pi/2$ for some odd $j$. In this case, $\cos(2t)=-1$ and $\cos(t(n+2))=-\cos(tn)$, and so (\ref{dc}) yields
\begin{equation}
\label{cos1}
\begin{split}
\lvert U_L(t)_{u,u}\rvert^2=\frac{4\left[1+\cos\left(tn\right)\right]}{2(n+2)^2}.
\end{split}
\end{equation}
If $n\equiv 2$ (mod 4), then $u$ exhibits PST \cite[Corollary 4]{Alvir2016}, and so it is not sedentary. This proves (a). Thus, we have two remaining cases.
\begin{itemize}
\item If $n=4m$, then $\cos(tn)=\cos\left(2jm\pi\right)=1$, and so (\ref{cos1}) yields $\lvert U_L(t)_{u,u}\rvert^2=\frac{4}{(n+2)^2}$.
\item If $n$ is odd, then $\cos(tn)=0$, and so (\ref{cos1}) gives us $\lvert U_L(t)_{u,u}\rvert^2=\frac{2}{(n+2)^2}$.
\end{itemize}
Next, let $t=\frac{\ell\pi}{n}$ for odd $\ell$. Then $\cos(tn)=-1$ and $\cos(t(n+2))=-\cos(2t)$. From (\ref{dc}), we obtain
\begin{equation}
\label{cos}
\lvert U_L(t)_{u,u}\rvert^2=\frac{n^2(1+\cos(2t))}{2(n+2)^2}.
\end{equation}
\begin{itemize}
\item Let $n=4m$. Then $2t=\frac{\ell\pi}{2m}$ cannot be an odd multiple of $\pi$, and so $\cos(2t)>-1$. The closest that $2t$ will be from an odd multiple of $\pi$ is when $\ell=2ms\pm 1$ for some odd $s$, in which case, $2t=\left(s\pm \frac{1}{2m}\right)\pi$. From (\ref{cos}), we get $\lvert U_L(t)_{u,u}\rvert^2\geq \frac{n^2\left(1-\cos\left(2\pi/n\right)\right)}{2(n+2)^2}=\frac{n^2\sin^2\left(\pi/n\right)}{(n+2)^2}\geq \frac{8}{(n+2)^2}$.
\item Let $n$ be odd. If $n=1$, then $2t=2\ell\pi$ and so $\lvert U_L(t)_{u,u}\rvert^2=\frac{1}{9}$. If $n>1$, then using (\ref{cos}) and the same argument in the case $n=4m$ yields $\lvert U_L(t)_{u,u}\rvert^2\geq \frac{n^2\left(1-\cos\left(\pi/n\right)\right)}{2(n+2)^2}=\frac{n^2\sin^2\left(\pi/2n\right)}{(n+2)^2}\geq \frac{2.25}{(n+2)^2}$.
\end{itemize}
Finally, comparing the above subcases yields the desired result.
\end{proof}

\begin{remark}
\label{twoeval}
In the above proof, if we take $S=\{0,n\}$, then $\sigma_u(L)\backslash S=\{n+2\}$ and $(E_n)_{u,u}+(E_{0})_{u,u}=a$, where $a=1-\frac{n}{2(n+2)}>\frac{1}{2}$. If $F(t):=\lvert \frac{1}{n+2}+\frac{1}{2}e^{itn}\rvert-\frac{n}{2(n+2)}$, then Theorem~\ref{eureka} yields $\vert U(t)_{u,u}\rvert \geq F(t)\geq 0$ for all $t$. If $n\equiv 0$ (mod 4), then one checks that (\ref{eureka333}) holds at $t_1=j\pi/2$ for odd $j$, i.e., the first inequality is an equality, in which case $U(t_1)=2a-1=1-\frac{n}{n+2}$. As $F(t)$ is not a constant function, we are not guaranteed that $\lvert U(t)_{u,u}\rvert\geq 1-\frac{n}{n+2}$ for all $t$. To show that this is indeed the case, we need to establish that $\lvert U(t)_{u,u}\rvert$ is minimum at $t_1$, which was the approach taken in the proof of Theorem~\ref{sedL}.
\end{remark}

The following is an immediate consequence of Theorem~\ref{sedL}.

\begin{corollary}
\label{corsedL}
The family of disconnected double cones on simple positively weighted graphs on $n$ vertices, where $n\not\equiv 2$ (mod 4), is quasi-sedentary at the apexes with respect to the Laplacian matrix. 
\end{corollary}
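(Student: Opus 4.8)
The plan is to repackage the single-graph computation of Theorem~\ref{sedL} into the family language of Definition~\ref{defs}. Fix a member of the family, a disconnected double cone $O_2\vee X$ with $X$ a simple positively weighted graph on $n$ vertices satisfying $n\not\equiv 2\pmod 4$; this cone has $N=n+2$ vertices, and I take $u$ to be an apex. For each admissible case Theorem~\ref{sedL} supplies the exact value of $\inf_{t>0}\lvert U_L(t)_{u,u}\rvert$ and, crucially, the fact that it is \emph{attained} (the ``if and only if'' clause pinning down the minimizing $t$). Reading off the cases, this infimum equals $\frac{2}{N}$ when $n\equiv 0\pmod 4$, equals $\frac{\sqrt2}{N}$ when $n$ is odd with $n\geq 3$, and equals $\tfrac13$ when $n=1$; in every case the apex is sharply (indeed tightly) sedentary at that value.

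First I would assemble these numbers into one function $f:\mathbb{R}^+\to(0,1]$ of the vertex count. Because Theorem~\ref{sedL} depends on $X$ only through $n=N-2$, the infimum is a well-defined function of $N$ alone, with no ambiguity across distinct base graphs of the same order. So I set $f(N)=\frac{2}{N}$ for $N\equiv 2\pmod 4$, $f(N)=\frac{\sqrt2}{N}$ for odd $N\geq 5$, and $f(3)=\tfrac13$; on the remaining $N$ (none of which is the order of a family member) I let $f$ be any value in $(0,1]$ that decays, say $f(N)=\min\{1,2/N\}$. By construction $f(\lvert V(O_2\vee X)\rvert)$ equals $\inf_{t>0}\lvert U_L(t)_{u,u}\rvert$, so condition~(i) of Definition~\ref{defs} holds at the apex and the sedentariness is sharp.

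It then remains to verify condition~(ii) with target constant $C=0$. Each branch of $f$ is $O(1/N)$, so $f(N)\to 0$ as $N\to\infty$, the isolated value $f(3)=\tfrac13$ being irrelevant to the limit. Thus $C=0$ and the family is sharply $0$-sedentary, which is exactly quasi-sedentariness. The only delicate point is the sharpness: quasi-sedentariness requires each member to be \emph{sharply} sedentary and the family constant to be the limit of the true infima, so I must invoke the exact values and their attainment from Theorem~\ref{sedL} rather than its lower bounds alone. This is also why the residue class $n\equiv 2\pmod 4$ must be excluded — there Theorem~\ref{sedL}(1) shows the apex undergoes PST and is not sedentary at all, so it cannot be folded into $f$.
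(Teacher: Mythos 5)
Your proof is correct and takes essentially the same approach as the paper: the paper states Corollary~\ref{corsedL} as an immediate consequence of Theorem~\ref{sedL}, and your write-up is exactly the routine verification that makes it immediate — packaging the exact, attained minima $\tfrac{2}{n+2}$ (for $n\equiv 0 \pmod 4$), $\tfrac{\sqrt{2}}{n+2}$ (for odd $n\geq 3$), and $\tfrac{1}{3}$ (for $n=1$) into a function $f$ of the vertex count $n+2$, checking $f\to 0$, and noting that sharpness follows from the equality cases. Your observation that these values depend only on $n$ (via equation (\ref{om})) and that the excluded class $n\equiv 2\pmod 4$ fails sedentariness because of PST is likewise precisely the paper's reasoning.
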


In keeping with Theorem~\ref{sedL1}, we show that the bound in Theorem~\ref{sed} is tight for the vertices of $O_m$ in $O_m\vee X$ for some values of $m$ and $n$.

\begin{theorem}
\label{sedom}
Let $m\geq 3$ and $X$ be a simple positively weighted graph on $n\geq 1$ vertices. If $\nu_2(m)=\nu_2(n)$, then for any vertex $u$ of $O_m$ in $O_m\vee X$, $\lvert U_L(t)_{u,u}\rvert=1-\frac{2}{m}$ whenever $t_1=\frac{j\pi}{\operatorname{gcd}(m,n)}$ for any odd $j$.
\end{theorem}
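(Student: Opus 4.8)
The plan is to evaluate the closed form for $U_L(t)_{u,u}$ in equation~(\ref{om}) directly at the prescribed times $t_1=\frac{j\pi}{\gcd(m,n)}$ and simplify. Everything hinges on pinning down the two phase factors $e^{it_1 n}$ and $e^{it_1(m+n)}$, and this is exactly where the hypothesis $\nu_2(m)=\nu_2(n)$ enters. So first I would set $g=\gcd(m,n)$ and write $m=gm'$, $n=gn'$ with $\gcd(m',n')=1$. The one genuinely conceptual step is the observation that $\nu_2(m)=\nu_2(n)$ forces both $m'$ and $n'$ to be odd: since $\nu_2(m)=\nu_2(g)+\nu_2(m')$ and similarly for $n$, the hypothesis is equivalent to $\nu_2(m')=\nu_2(n')$, and because $m'$ and $n'$ are coprime they cannot both be even, so equal $2$-adic valuations forces both valuations to be $0$.

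With the parities settled, I substitute $t_1=\frac{j\pi}{g}$ (for odd $j$) into the exponents. Then $e^{it_1 n}=e^{ij\pi n'}=-1$, because $jn'$ is a product of two odd integers and hence odd; and $e^{it_1(m+n)}=e^{ij\pi(m'+n')}=1$, because $m'+n'$ is even (a sum of two odd numbers), so $j(m'+n')$ is even. Plugging these into (\ref{om}) gives $U_L(t_1)_{u,u}=\frac{1}{m+n}-\frac{m-1}{m}+\frac{n}{m(m+n)}$. Clearing denominators over $m(m+n)$, the numerator is $m-(m-1)(m+n)+n$, which factors as $-(m+n)(m-2)$. Hence $U_L(t_1)_{u,u}=-\frac{m-2}{m}=-\bigl(1-\frac{2}{m}\bigr)$, and since $m\geq 3$ makes $1-\frac{2}{m}>0$, we conclude $\lvert U_L(t_1)_{u,u}\rvert=1-\frac{2}{m}$, as claimed.

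There is no serious obstacle here beyond the parity bookkeeping; the computation is short and self-contained once the phase factors are determined. The result is best read alongside Theorem~\ref{sed}: the $m$ apex vertices of $O_m$ in $O_m\vee X$ form a set of twins of size $m$, so Theorem~\ref{sed} already gives the universal bound $\lvert U_L(t)_{u,u}\rvert\geq 1-\frac{2}{m}$ for all $t$. The point of the present statement is precisely that, when $\nu_2(m)=\nu_2(n)$, this lower bound is \emph{attained} at the times $t_1=\frac{j\pi}{\gcd(m,n)}$, so the bound of Theorem~\ref{sed} is tight for these graphs. The only step requiring care is the translation of the hypothesis $\nu_2(m)=\nu_2(n)$ into the oddness of $m/g$ and $n/g$, which is what drives the two sign evaluations.
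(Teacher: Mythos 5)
Your proof is correct and follows the same route as the paper: the paper's proof simply cites Theorem~\ref{sed} for the lower bound and says "one can check using (\ref{om})" that equality holds at $t_1=\frac{j\pi}{\gcd(m,n)}$, and your computation (reducing $\nu_2(m)=\nu_2(n)$ to the oddness of $m/g$ and $n/g$, evaluating the two phase factors as $-1$ and $1$, and simplifying to $U_L(t_1)_{u,u}=-(1-\tfrac{2}{m})$) is precisely that check, carried out in full.
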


\begin{proof}
By Theorem~\ref{sed}, $\lvert U_L(t)_{u,u}\rvert\geq 1-\frac{2}{m}$ for all $t$. Letting $t_1=\frac{j\pi}{g}$, where $j$ is odd and $g=\operatorname{gcd}(m,n)$, one can check that using (\ref{om}) that $\lvert U_L(t)_{u,u}\rvert= 1-\frac{2}{m}$.
\end{proof}

Let $m$ be fixed. By Theorem~\ref{sedom}, the family of joins $O_m\vee X$ such that $\nu_2(m)=\nu_2(n)$ is tightly $(1-\frac{2}{m})$-sedentary at the vertices of $O_m$. Our numerical observations indicate that if $\nu_2(m)\neq \nu_2(n)$ with $m$ fixed, then any vertex of $O_m$ is $C(n)$-sedentary in $O_m\vee X$, where $C(n)$ satisfies $C(n)<1-\frac{2}{m}$ for all $n$ and $C(n)\rightarrow 1-\frac{2}{m}$ as $n$ increases. This suggests that in general, the family of graphs of the form $O_m\vee X$ with fixed $m$ is sharply $(1-\frac{2}{m})$-sedentary at the vertices of $O_m$.

\subsection*{Adjacency case}

Next, we examine the adjacency case for disconnected double cones. 

\begin{theorem}
\label{sedA}
Let $X$ be a simple uweighted $d$-regular graph on $n$ vertices, and let $u$ be an apex of $O_2\vee X$.
\begin{enumerate}
\item If either (i) $d^2+8n$ is not a perfect square, (ii) $d=0$, or (iii) $d^2+8n$ is a perfect square and $\nu_2(d+\sqrt{d^2+8n})=\nu_2(d-\sqrt{d^2+8n})$, then $O_2\vee X$ is not sedentary at $u$.
\item Let $d>0$ and $n=\frac{1}{2}s(d+s)$ for some integer $s>0$ such that $s(d+s)$ is even and $\nu_2(d+s)\neq \nu_2(s)$ (i.e., $d^2+8n$ is a perfect square and $\nu_2(d+\sqrt{d^2+8n})\neq \nu_2(d-\sqrt{d^2+8n})$). Suppose $d_1=d/g$ and $s_1=s/g$, where $g=\operatorname{gcd}(d,s)$. The following hold.
\begin{enumerate}
\item If $s_1=1$, then $\lvert U_A(t)_{u,u}\rvert\geq \frac{1}{d_1+2}$ with equality if and only if $t=\frac{j\pi}{g}$ for any odd $j$.
\item If $s_1\geq 2$, then $\lvert U_A(t)_{u,u}\rvert\geq \frac{\sqrt{2}}{d_1+2s_1}$ with equality if and only if $t=\frac{j\pi}{g}$ for any odd $j$.
\end{enumerate}
\end{enumerate}
\end{theorem}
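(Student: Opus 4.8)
The plan is to mirror the Laplacian argument of Theorem~\ref{sedL}: first read off a closed form for $U_A(t)_{u,u}$ from the spectral decomposition, and then minimize $\lvert U_A(t)_{u,u}\rvert^2$ directly. The two apexes $u,u'$ of $O_2\vee X$ form a set of non-adjacent twins, so by Lemma~\ref{alphabeta} the vector $\mathbf{e}_u-\mathbf{e}_{u'}$ is an eigenvector for eigenvalue $0$, and the projection computation in the proof of Theorem~\ref{sed} gives $(E_0)_{u,u}=\tfrac12$. Since $X$ is $d$-regular, $\mathbf{1}$ is an eigenvector of $A(X)$, so the remainder of $\sigma_u(A)$ is controlled by the action of $A$ on $\operatorname{span}\{\mathbf{e}_u+\mathbf{e}_{u'},\,\mathbf{1}_X\}$; this $2\times2$ problem has eigenvalues $\lambda_{\pm}=\tfrac{d\pm\sqrt{d^2+8n}}{2}$ and diagonal weights $(E_{\lambda_{\pm}})_{u,u}=\tfrac{n}{2n+\lambda_{\pm}^2}$, yielding
\[
U_A(t)_{u,u}=\tfrac12+\frac{n}{2n+\lambda_-^2}\,e^{it\lambda_-}+\frac{n}{2n+\lambda_+^2}\,e^{it\lambda_+}.
\]

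For part~(1) I would establish non-sedentariness case by case. If $d=0$ the two nonzero weights coincide and $U_A(t)_{u,u}=\tfrac12\big(1+\cos(\sqrt{2n}\,t)\big)$, which vanishes at $t=\pi/\sqrt{2n}$. Otherwise the apexes are strongly cospectral with $\sigma_{uu'}^{-}(A)=\{0\}$, so $a=\tfrac12$ in~\eqref{eureka1} by Proposition~\ref{SCsed}. If $d^2+8n$ is not a perfect square then $\lambda_{\pm}$ are conjugate quadratic irrationals, so any integer relation $\ell_+\lambda_++\ell_-\lambda_-=0$ with $m+\ell_++\ell_-=0$ forces $m$ even; the hypothesis of Lemma~\ref{eurekarem2} then holds with $S=\{0\}$, and Corollary~\ref{singleton}(1b) gives $\inf_{t>0}\lvert U_A(t)_{u,u}\rvert=0$, i.e.\ PGST from $u$ to $u'$. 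If instead $d^2+8n$ is a perfect square with $\nu_2(d+\sqrt{d^2+8n})=\nu_2(d-\sqrt{d^2+8n})$, the integer eigenvalues satisfy $\nu_2(s)=\nu_2(d+s)=:\nu$, so $t_1=\pi/2^{\nu}$ gives $e^{it_1\lambda_+}=e^{it_1\lambda_-}=-1$ and hence $\lvert U_A(t_1)_{u,u}\rvert=0$ (PST between the apexes, cf.~\cite{Monterde2022}).

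For part~(2) the hypothesis $n=\tfrac12 s(d+s)$ makes $d^2+8n=(d+2s)^2$, so $\lambda_+=d+s$ and $\lambda_-=-s$ are integers ($u$ is periodic) and the weights become $p:=\tfrac{d+s}{2(d+2s)}$ and $q:=\tfrac{s}{2(d+2s)}$, satisfying the crucial identity $\tfrac12-p=q$. Expanding the modulus,
\[
\lvert U_A(t)_{u,u}\rvert^2=\Big(\tfrac14+p^2+q^2\Big)+p\cos(st)+q\cos\!\big((d+s)t\big)+2pq\cos\!\big((d+2s)t\big),
\]
and the decisive simplification is that, on differentiating, all three amplitudes collapse to the common value $\tfrac{n}{d+2s}$, producing the factorization
\[
\frac{d}{dt}\lvert U_A(t)_{u,u}\rvert^2=-\frac{4n}{d+2s}\,\cos\!\Big(\tfrac{(d+s)t}{2}\Big)\cos\!\Big(\tfrac{st}{2}\Big)\sin\!\Big(\tfrac{(d+2s)t}{2}\Big),
\]
which is the exact analogue of the factored derivative in Theorem~\ref{sedL}. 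The critical points thus split into the three families $t=\tfrac{(2k+1)\pi}{d+s}$, $t=\tfrac{(2k+1)\pi}{s}$, and $t=\tfrac{2k\pi}{d+2s}$; a second-derivative test in the style of~\eqref{pp} separates the minima, and the second family supplies the global minimizer.

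The main obstacle is the final comparison of candidate minimum values, which is governed entirely by the $2$-adic hypothesis. Reducing by $g=\gcd(d,s)$ to the coprime pair $(d_1,s_1)$, at the minimizing times $t=\tfrac{(2k+1)\pi}{s}$ one has $\cos(st)=-1$, and the parities of $s_1$ and $d_1+s_1$ (pinned down by $\nu_2(d+s)\neq\nu_2(s)$) determine whether $\cos((d+s)t)$ and $\cos((d+2s)t)$ equal $0$ or $\pm1$. Using $\tfrac12-p=q$, the subcase $s_1=1$ collapses to the value $\tfrac{1}{d_1+2}$, while $s_1=2$ (where both remaining cosines vanish) collapses to $2q^2$, i.e.\ $\tfrac{\sqrt2}{d_1+2s_1}$; verifying that the other two critical families never dip below these, and carefully tracking the cosine evaluations across the possible parities, is the delicate bookkeeping I expect to dominate the proof.
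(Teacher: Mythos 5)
Your strategy is the same as the paper's: the identical spectral formula for $U_A(t)_{u,u}$, the identical factored derivative of $\lvert U_A(t)_{u,u}\rvert^2$ (your ``common amplitude $n/(d+2s)$'' observation is precisely the paper's computation of $h'(t)$), the same three families of critical points, and the same endgame of a second-derivative test followed by a comparison of critical values. Your part (1) is correct and in fact more self-contained than the paper's treatment, which quotes PST/PGST results for disconnected double cones and then invokes Proposition~\ref{prop}: your explicit $d=0$ computation, the direct vanishing at $t_1=\pi/2^{\nu}$ in case (iii), and the application of Lemma~\ref{eurekarem2} with $S=\{0\}$ (the conjugate-irrationality argument forcing $\ell_+=\ell_-$ and hence $m$ even) are all valid.

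The genuine gap is in part (2), precisely in what you defer as ``delicate bookkeeping.'' Your premise that at the minimizing times $t=(2k+1)\pi/s$ the two remaining cosines ``equal $0$ or $\pm1$'' fails for $s_1\geq 3$: writing $j=2k+1$, they equal $-\cos(jd_1\pi/s_1)$ and $+\cos(jd_1\pi/s_1)$, where $\cos(jd_1\pi/s_1)$ ranges over values $\cos(m\pi/s_1)$ with $m$ odd (e.g.\ $\pm\sqrt{2}/2$ when $s_1=4$); indeed $\lvert U_A(t)_{u,u}\rvert^2=2q^2\bigl(1-\cos(jd_1\pi/s_1)\bigr)$, which collapses to $2q^2$ only when $s_1=2$. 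For $s_1\geq 3$ the stated bound therefore requires the two estimates that form the quantitative heart of the paper's proof: $\cos(jd_1\pi/s_1)\leq\cos(\pi/s_1)$ (because $jd_1$ is odd, so $jd_1\pi/s_1$ is never an even multiple of $\pi$) and $s_1\sin(\pi/2s_1)\geq\sqrt{2}$ for $s_1\geq 2$. Likewise, the family $t=(2k+1)\pi/(d+s)$ also consists of local minima, so ``never dips below'' must actually be proved; the paper does this using the parity fact that $s_1$ and $d_1+s_1$ have opposite parity, obtaining the lower bounds $2/(d_1+2)^2$ (for $s_1=1$) and $2.25/(d_1+2s_1)^2$ (for $s_1\geq 2$). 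None of this is routine. Finally, a warning: completing the computation shows equality holds only for $s_1\in\{1,2\}$ and only at $t=j\pi/s$ with $j$ odd. This matches the printed condition $t=j\pi/g$ when $s_1=1$ but not when $s_1=2$ (for $d=2$, $s=4$ one finds $\lvert U_A(\pi/2)_{u,u}\rvert=3/5$, while the minimum $\sqrt{2}/5$ is attained at $t=\pi/4$), and for $s_1\geq 3$ equality is never attained; so the equality clause of 2(b) as stated is not what either your plan or the paper's own proof delivers, and your write-up should flag and correct this rather than inherit it.
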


\begin{proof}
Let $Y$ be a double cone on $X$ with apexes $u$ and $v$. From \cite[Lemma 3(2)]{Monterde2022}, we know that $\sigma_u(A)=\{0,\lambda^{\pm}\}$, where $\lambda^{\pm}=\frac{1}{2}\left(d\pm\sqrt{d^2+8n}\right)$. Applying \cite[Lemma 12.3.1]{Coutinho2021}, we obtain
\begin{equation}
\label{UA}
U_A(t)_{u,u}=\frac{1}{2}+\frac{n}{2n+(\lambda^+)^2}e^{it\lambda^+}+\frac{n}{2n+(\lambda^-)^2}e^{it\lambda^-}
\end{equation}
If $d^2+8n$ is not a perfect square, then PGST occurs between $u$ and $v$ \cite[Theorem 11(1)]{Monterde2022}, while if $d=0$, or $d^2+8n$ is a perfect square and $\nu_2(d+\sqrt{d^2+8n})=\nu_2(d-\sqrt{d^2+8n})$, then PST occurs between $u$ and $v$ \cite[Theorem 14(2a)]{Monterde2022}. Invoking Proposition \ref{prop}(1c) yields (1a). To prove (1b), suppose $d^2+8n$ is a perfect square, $d>0$, and $\nu_2(d+\sqrt{d^2+8n})\neq \nu_2(d-\sqrt{d^2+8n})$. This is equivalent to the existence of an integer $s>0$ such that $n=\frac{1}{2}s(d+s)$ with $s(d+s)$ is even and
\begin{equation}
\label{nu2}
\nu_2(s)\geq \nu_2(d). 
\end{equation}
One also checks that $2n+(\lambda^+)^2=(d+s)(d+2s)$ and $2n+(\lambda^-)^2=s(d+2s)$. Thus, we obtain $\frac{n}{2n+(\lambda^+)^2}=\frac{s(d+s)}{2(d+s)(d+2s)}=\frac{s}{2(d+2s)}$ and $\frac{n}{2n+(\lambda^-)^2}=\frac{s(d+s)}{2s(d+2s)}=\frac{d+s}{2(d+2s)}$. Combining this with (\ref{UA}), we get that $U_A(t)_{u,u}=\frac{1}{2(d+2s)}\left((d+2s)+se^{it(d+s)}+(d+s)e^{-its}\right)$, and so
\begin{equation}
\label{dcA}
\lvert U_A(t)_{u,u}\rvert^2=\frac{d^2+3ds+3s^2+h(t)}{2(d+2s)^2}
\end{equation}
where $h(t)=s(d+s)\cos(t(d+2s))+s(d+2s)\cos(t(d+s))+(d+s)(d+2s)\cos(ts)$. Following the Laplacian case, $\lvert U_A(t)_{u,u}\rvert^2$ is maximum (resp., minimum) if and only if $h(t)$ is, and we have
\begin{equation}
\label{g'}
\begin{split}
h'(t)&=-4s(d+s)(d+2s)\cos\left(ts/2\right)\cos\left(t(d+s)/2\right)\sin\left(t(d+2s)/2\right).
\end{split}
\end{equation}
and
\begin{equation}
\label{g''}
h''(t)=-2s(d+s)(d+2s)\left[(d+s)\cos\left(t(2d+3s)/2\right)\cos\left(ts/2\right)+s\cos\left(t(d+3s)/2\right)\cos\left(t(d+s)/2\right)\right].
\end{equation}
From (\ref{g'}), $h'(t)=0$ if and only if either $t=j\pi/s$ for some odd $j$, $t=\frac{\ell\pi}{d+s}$ for some odd $\ell$, and $t=\frac{k\pi}{d+2s}$ for even $k\neq 0$. Among these three values, one can use (\ref{g''}) to show that $h''(t)>0$ if and only if $t=j\pi/s$ for odd $j$ and $t=\frac{\ell\pi}{d+s}$ for odd $\ell$. Thus, it suffices to compare the values of $\lvert U_A(t)_{u,u}\rvert^2$ at the points $t=j\pi/s$ for some odd $j$ and $t=\frac{\ell\pi}{d+s}$ for some odd $\ell$ to obtain the absolute minimum. Let's start with $t=j\pi/s$ for some odd $j$. In this case, $\cos(ts)=-1$, $\cos(t(d+2s))=\cos\left(jd\pi/s\right)$ and $\cos(t(d+s))=-\cos\left(jd\pi/s\right)$. Thus, $h(t)=-s^2\cos\left(jd\pi/s\right)-(d+s)(d+2s)$, and (\ref{dcA}) gives us
\begin{equation}
\label{sin}
\lvert U_A(t)_{u,u}\rvert^2=\frac{s^2\left[1-\cos\left(jd\pi/s\right)\right]}{2(d+2s)^2}
\end{equation}
Let $d=gd_1$ and $s=gs_1$, where $g=\operatorname{gcd}(d,s)$. Then we can write $jd\pi/s=jd_1\pi/s_1$, where $d_1$ is odd by (\ref{nu2}). We proceed with two subcases.
\begin{itemize}
\item Let $s_1=1$. Then (\ref{sin}) yields $\lvert U_A(t)_{u,u}\rvert^2=\frac{1-\cos\left(jd_1\pi\right)}{2(d_1+2)^2}=\frac{1}{(d_1+2)^2}$.
\item Let $s_1\geq 2$. As $j$ is odd, $jd_1\pi/s_1$ is not an even multiple of $\pi$. Hence, $\cos\left(jd_1\pi/s_1\right)\leq \cos\left(\pi/s_1\right)$, and making use of (\ref{sin}) then yields
$\lvert U_A(t)_{u,u}\rvert^2\geq \frac{s_1^2\left[1-\cos\left(\pi/s_1\right)\right]}{2(d_1+2s_1)^2}=\frac{s_1^2\sin^2\left(\pi/2s_1\right)}{(d_1+2s_1)^2}\geq \frac{2}{(d_1+2s_1)^2}$.
\end{itemize}
Next, let $t=\frac{\ell\pi}{d+s}$ for some odd $\ell$. In this case, $\cos(t(d+s))=-1$ and $\cos(t(d+2s))=\cos\left(\frac{\ell s\pi}{d+s}\right)$. Thus, $h(t)=-s(d+2s)+(d+s)^2\cos(ts)$, and making use of (\ref{dcA}) gives us
\begin{equation}
\label{sin1}
\lvert U_A(t)_{u,u}\rvert^2=\frac{(d+s)^2\left[1+\cos\left(\frac{\ell s\pi}{d+s}\right)\right]}{2(d+2s)^2}
\end{equation}
Note that we can write $\frac{\ell s\pi}{d+s}=\frac{\ell s_1\pi}{d_1+s_1}$. We proceed with two subcases.
\begin{itemize}
\item Let $s_1=1$. The same argument in the proof of Theorem~\ref{sedL} for the case $t=\frac{\ell\pi}{n}$ for odd $\ell$ and $n=4m$ yields $\lvert U_A(t)_{u,u}\rvert^2\geq \frac{(d_1+1)^2\left[1-\cos\left(\frac{\pi}{d_1+1}\right)\right]}{2(d_1+2)^2}=\frac{(d_1+1)^2\sin^2\left(\frac{\pi}{2(d_1+1)}\right)}{(d_1+2)^2}\geq \frac{2}{(d_1+2)^2}$.
\item Let $s_1\geq 2$. Since $d_1+s_1$ and $s_1$ must have opposite parities, $\frac{\ell s_1\pi}{d_1+s_1}$ is not an odd multiple of $2\pi$. Thus, (\ref{sin1}) yields $\lvert U_A(t)_{u,u}\rvert^2\geq \frac{(d_1+s_1)^2\left[1-\cos\left(\frac{\pi}{d_1+s_1}\right)\right]}{2(d_1+2s_1)^2}=\frac{(d_1+s_1)\sin^2\left(\frac{\pi}{2(d_1+s_1)}\right)}{(d_1+2s_1)^2}\geq \frac{2.25}{(d_1+2s_1)^2}$.
\end{itemize}
Comparing the minima for each subcases above yields the desired conclusion.
\end{proof}

The following is a straightforward consequence of Theorem~\ref{sedA}(2).

\begin{corollary}
\label{corsedA}
With the assumption in Theorem~\ref{sedA}(2), let $\mathscr{F}$ be a family of disconnected double cones on simple unweighted $d$-regular graphs on $n$ vertices. If $s_1$ and $d_1$ are fixed, then $\mathscr{F}$ is $\frac{1}{d_1+2}$- and $\frac{\sqrt{2}}{d_1+2s_1}$-sedentary at the apexes resp.\ whenever $s_1=1$ and $s_1\geq 2$. On the other hand, if either $d_1\rightarrow\infty$ or $s_1\rightarrow\infty$ as $n$ increases, then $\mathscr{F}$ is quasi-sedentary at the apexes.
\end{corollary}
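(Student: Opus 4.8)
The plan is to read the conclusion off Theorem~\ref{sedA}(2) and match it against Definition~\ref{defs}; the work is almost entirely bookkeeping. First I would fix a member $X=O_2\vee Y\in\mathscr{F}$, with $Y$ the $d$-regular graph on $n$ vertices and $u$ an apex, and recall $g,d_1,s_1$ as in Theorem~\ref{sedA}(2). That theorem gives the sharp bound $\inf_{t>0}\lvert U_A(t)_{u,u}\rvert=C_X$, where
\[
C_X=\begin{cases}\frac{1}{d_1+2}, & s_1=1,\\ \frac{\sqrt{2}}{d_1+2s_1}, & s_1\geq 2,\end{cases}
\]
and the bound is attained (at $t=j\pi/g$ for odd $j$); hence $u$ is tightly, and in particular sharply, $C_X$-sedentary. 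I would also record that $\lvert V(X)\rvert=n+2=\tfrac{1}{2}g^2s_1(d_1+s_1)+2$.

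For the first assertion I keep $s_1$ and $d_1$ fixed. Then $C_X$ is independent of the remaining parameter (hence of $n$), so it equals a single constant $C\in\{\tfrac{1}{d_1+2},\tfrac{\sqrt{2}}{d_1+2s_1}\}$ for every member of $\mathscr{F}$. Taking $f\equiv C$ on $\mathbb{R}^+$ verifies both clauses of Definition~\ref{defs}: each $X$ is (tightly, hence sharply) $f(\lvert V(X)\rvert)$-sedentary at $u$, and $f(s)\to C$ trivially. Thus $\mathscr{F}$ is (tightly) $C$-sedentary, which gives the two stated values.

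For the second assertion I let $n$ grow so that $d_1\to\infty$ or $s_1\to\infty$. Inspecting the displayed formula, $C_X\to 0$ in either regime, while $\lvert V(X)\rvert=\tfrac{1}{2}g^2s_1(d_1+s_1)+2\to\infty$. Setting $f(\lvert V(X)\rvert)=C_X$ then makes each member sharply $f(\lvert V(X)\rvert)$-sedentary with $f(s)\to 0$, which is precisely the case $C=0$ of a sharply $C$-sedentary family, i.e.\ quasi-sedentariness.

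Since this is pure definition-chasing, I do not expect a genuine obstacle; the only point needing care is that $f$ be a bona fide function of the vertex count. This is automatic once the members are indexed by $n$ (so $\lvert V(X)\rvert=n+2$ is injective and $f(\lvert V(X)\rvert)=C_X$ is unambiguous); should two members share a vertex count, one instead assigns $f(s)$ the common sharp constant on that fibre, or passes to the indexed subfamily. The remaining verifications — that $C_X\to 0$ exactly under $d_1\to\infty$ or $s_1\to\infty$, and that this forces $n\to\infty$ — are immediate from the explicit expressions above.
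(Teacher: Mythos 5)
Your proposal is correct and matches the paper's treatment: the paper states this corollary as a ``straightforward consequence'' of Theorem~\ref{sedA}(2) with no written proof, and your argument is exactly the intended definition-chasing — read off the sharp constants $\frac{1}{d_1+2}$ and $\frac{\sqrt{2}}{d_1+2s_1}$, take $f$ constant in the fixed-parameter case, and take $f(\lvert V(X)\rvert)=C_X\to 0$ in the growing-parameter case to meet Definition~\ref{defs}. Your extra care about $f$ being a well-defined function of the vertex count (resolved by the family being indexed by $n$) is a point the paper glosses over entirely, so no gap remains.
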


We end this section with the following remark.

\begin{remark}
Let $u$ and $v$ be the apexes of $O_2\vee X$, where $X$ is regular whenever $M=A$.
\begin{enumerate}
\item Suppose the assumption Theorem~\ref{sedL} holds. If $n\equiv 0$ (mod 4), then $\lvert U_L(t)_{u,v}\rvert \leq 1-\delta$ for all $t$, where $\delta=\frac{2}{n+2}$, while if $n\geq 3$ is odd, then $\lvert  U_L(t)_{u,v}\rvert \leq 1-\delta$ for all $t$, where $\delta=\frac{\sqrt{2}}{n+2}$.
\item Suppose the assumption in Theorem~\ref{sedA}(2) holds. If $s_1=1$, then $\lvert U_A(t)_{u,v}\rvert \leq 1-\delta$ for all $t$, where $\delta=\frac{1}{d_1+2}$, while if $s_1\geq 2$, then $\rvert U_A(t)_{u,v}\rvert \leq 1-\delta$ for all $t$, where $\delta=\frac{\sqrt{2}}{d_1+2s_1}$.
\end{enumerate}
In \cite{Godsil2017}, Godsil and Smith asked to find examples of strongly cospectral vertices $u$ and $v$ such that for some constant $\delta>0$, $\lvert U(t)_{u,v}\rvert \leq 1-\delta$ for all $t$. Mirror symmetric vertices in paths without PGST and antipodal vertices in even cycles without PGST are infinite families that answer this question. However, we do not know whether paths and cycles are sedentary. Thus, the families in (1) and (2) are the first examples that answer Godsil and Smith's question, whereby the vertices involved are sedentary.
\end{remark}

\section{Trees}\label{secTrees}

Our first result in this section is a direct consequence of Theorem~\ref{sed}.

\begin{proposition}
\label{treeprop}
Let $T$ be a set of leaves of a tree $X$ that share a common neighbour. Then $T$ is a set of twins in $X$, and for each $u\in T$, we have $\lvert U_M(t)_{u,u}\rvert\geq 1-\frac{2}{\lvert T\rvert}$ for all $t$.
\end{proposition}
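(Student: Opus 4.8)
The plan is to reduce the statement to a direct application of Theorem~\ref{sed} by first checking that $T$ is a set of twins. I would let $w$ denote the common neighbour of the leaves in $T$. Since every $u\in T$ is a leaf, it has exactly one incident edge, whence $N_X(u)=\{w\}$ for each $u\in T$; in particular $w\notin T$, and $w$ is the unique neighbour of each vertex of $T$.

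Next I would verify the defining conditions of twins for an arbitrary pair $u,v\in T$. Condition (i) is immediate, since $N_X(u)\backslash\{u,v\}=\{w\}=N_X(v)\backslash\{u,v\}$ using $u,v\neq w$. For the edge-weight condition (ii), the only shared neighbour is $w$, and in the unweighted tree the edges $[u,w]$ and $[v,w]$ both have weight $1$; condition (iii) holds vacuously, as leaves carry no loops. Finally, because $\deg(u)=1$ with sole neighbour $w\neq v$, the vertices $u$ and $v$ cannot be adjacent, so no edge joins two vertices of $T$. Hence $T=T(0,0)$ is a set of non-adjacent twins, whose underlying induced subgraph is $O_{\lvert T\rvert}$.

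Once $T$ is known to be a set of twins, the bound $\lvert U_M(t)_{u,u}\rvert\geq 1-\frac{2}{\lvert T\rvert}$ for all $t$ and each $u\in T$ is exactly the conclusion of Theorem~\ref{sed}, and it applies to both $\mathbf{A}$ and $\mathcal{A}$ (for $\mathcal{A}$ one needs only $\deg(u)>0$, which holds since a leaf has a positive-weight incident edge). I do not expect a genuine obstacle here: the only point requiring care is the passage from ``leaves sharing a common neighbour'' to ``twins,'' which rests on the single observation that a degree-one vertex has $w$ as its only neighbour --- this simultaneously yields the coincidence of neighbourhoods and the mutual non-adjacency of the vertices of $T$. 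After that, Theorem~\ref{sed} supplies the inequality immediately.
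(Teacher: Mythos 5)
Your proposal is correct and follows exactly the paper's route: the paper states this proposition as a direct consequence of Theorem~\ref{sed}, and your argument simply fills in the routine verification that pendant vertices with a common neighbour satisfy the twin conditions (with the reasonable reading, consistent with Section~\ref{secTrees}, that the tree is simple and unweighted) before invoking that theorem. No gaps.
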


Next, we examine whether the central vertex of a star is sedentary.

\begin{proposition}
\label{starsed}
Let $T$ be the set of leaves of $K_{1,n}$ and $u\in T$. Then $\lvert U_M(t)_{u,u}\rvert\geq 1-\frac{2}{n}$ for all $t$. Hence, the family $\mathscr{S}$ of stars $K_{1,n}$ on $n\geq 3$ vertices is sedentary at every leaf vertex. The following also hold.
\begin{enumerate}
\item For all $n\geq 2$, $\lvert U_A(t)_{u,u}\rvert= 1-\frac{2}{n}$ if and only if $t=\frac{j\pi}{\sqrt{n}}$ for any odd $j$. Moreover, the central vertex of $K_{1,n}$ is not sedentary with respect to $A$.
\item If $n$ is odd, then $\lvert U_L(t)_{u,u}\rvert= 1-\frac{2}{n}$ whenever $t=j\pi$ for any odd $j$. For all $n\geq 2$, the central vertex $w$ of $K_{1,n}$ satisfies $\lvert U_L(t)_{w,w}\rvert\geq 1-\frac{2}{n+1}$ with equality if and only if $t=\frac{j\pi}{n+1}$ for any odd $j$.
\end{enumerate}
\end{proposition}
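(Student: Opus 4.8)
The plan is to split the proposition into its four distinct assertions and handle each either by invoking the machinery already in place or by a short explicit computation on the star.

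First, the uniform bound $\lvert U_M(t)_{u,u}\rvert \ge 1-\frac{2}{n}$ for a leaf $u$ is immediate from Theorem~\ref{sed}. The $n$ leaves of $K_{1,n}$ pairwise share the single neighbour (the centre), carry no loops, and are pairwise non-adjacent, so they form a set of twins $T=T(0,0)$ with $\lvert T\rvert=n$. Substituting $\lvert T\rvert=n$ into (\ref{UHH}) gives the bound for all $t$ and, for $n\ge 3$, $(1-\frac{2}{n})$-sedentariness. For the family statement I would invoke Corollary~\ref{sedF}(2): across $\mathscr{S}$ the set $V(K_{1,n})\setminus T$ is the single centre, hence of fixed size $1$, so $\mathscr{S}$ is sedentary at every leaf (equivalently, $f(s)=1-\frac{2}{s-1}\to 1$).

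Next I would record the adjacency spectral data, which is the only genuinely new computation. The matrix $A(K_{1,n})$ has eigenvalues $\pm\sqrt{n}$ (simple) and $0$ (multiplicity $n-1$), with $\pm\sqrt{n}$-eigenvectors $(\pm\sqrt{n},1,\dots,1)$ of squared norm $2n$. Reading off the diagonal entries of the idempotents gives, for a leaf $u$, $(E_{\pm\sqrt n})_{u,u}=\frac{1}{2n}$ and $(E_0)_{u,u}=\frac{n-1}{n}$, whence $U_A(t)_{u,u}=\frac{n-1+\cos(\sqrt n\,t)}{n}$. This quantity is real and, for $n\ge 2$, nonnegative, so $\lvert U_A(t)_{u,u}\rvert$ equals it and attains its minimum $1-\frac{2}{n}$ exactly when $\cos(\sqrt n\,t)=-1$, i.e.\ $t=\frac{j\pi}{\sqrt n}$ with $j$ odd. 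For the centre $w$ one finds $(E_{\pm\sqrt n})_{w,w}=\frac12$ and $(E_0)_{w,w}=0$, so $U_A(t)_{w,w}=\cos(\sqrt n\,t)$, whose modulus vanishes at $t=\frac{\pi}{2\sqrt n}$; thus $w$ is not sedentary. This last point also follows from Remark~\ref{conerem}, since $O_n$ is $0$-regular and $K_{1,n}=K_1\vee O_n$.

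For the Laplacian assertions I would reuse the known diagonal entries. For a leaf I either cite Theorem~\ref{sedom} with the degenerate factor $X=K_1$ (so $K_{1,n}=O_n\vee K_1$, where $\nu_2(n)=\nu_2(1)=0$ exactly when $n$ is odd and $\gcd(n,1)=1$, giving equality at $t=j\pi$ for odd $j$), or I substitute directly into $U_L(t)_{u,u}=\frac{1}{n+1}+\frac{n-1}{n}e^{it}+\frac{1}{n(n+1)}e^{it(n+1)}$; at $t=j\pi$ with $j$ odd this collapses to $\frac{1}{n}-\frac{n-1}{n}=-\frac{n-2}{n}$. For the centre $w$, recognising $K_{1,n}=K_1\vee O_n$ lets me apply Theorem~\ref{sedL1} with $m=1$ and $X=O_n$ (simple, positively weighted, on $n\ge 2$ vertices), which yields $U_L(t)_{w,w}=\frac{1}{n+1}+\frac{n}{n+1}e^{it(n+1)}$ and hence $\lvert U_L(t)_{w,w}\rvert\ge 1-\frac{2}{n+1}$ with equality iff $t=\frac{j\pi}{n+1}$ for odd $j$. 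The only step demanding real care is the parity bookkeeping in the Laplacian leaf case: for odd $j$ the factor $e^{ij\pi(n+1)}$ equals $(-1)^{n+1}$, and only when $n$ is odd does this become $+1$ so that the three phasors align to give modulus exactly $1-\frac{2}{n}$. For $n$ even the same substitution gives $\frac{n-1}{n+1}\neq 1-\frac{2}{n}$, which is precisely why the hypothesis that $n$ be odd is essential and cannot be dropped.
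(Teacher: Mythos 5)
Your proposal is correct, and its skeleton coincides with the paper's: the leaf bound via the twin machinery of Theorem~\ref{sed} (the paper routes this through Proposition~\ref{treeprop}), the family statement via Corollary~\ref{sedF}(2), the non-sedentariness of the centre via Remark~\ref{conerem}, the Laplacian leaf equality via Theorem~\ref{sedom} applied to $K_{1,n}=O_n\vee K_1$, and the Laplacian centre claim via Theorem~\ref{sedL1} with $m=1$. Where you genuinely diverge is part (1): the paper obtains the adjacency equality by checking that conditions (\ref{eureka3}) and (\ref{eureka333}) of Theorem~\ref{eureka} hold exactly when $t=j\pi/\sqrt{n}$ with $j$ odd, and then invoking the equality case of (\ref{UHH}), whereas you compute the spectral decomposition of $A(K_{1,n})$ explicitly and obtain the closed form $U_A(t)_{u,u}=\frac{n-1+\cos(\sqrt{n}\,t)}{n}$, which is nonnegative for $n\geq 2$, so both directions of the ``if and only if'' fall out at once from $\cos(\sqrt{n}\,t)=-1$. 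This buys transparency: the paper's route formally gives equality ``whenever'' (\ref{eureka333}) holds and leans on the equality analysis inside Theorem~\ref{eureka} to exclude other times, while your formula settles the biconditional directly and, as a byproduct, gives $U_A(t)_{w,w}=\cos(\sqrt{n}\,t)$ for the centre, making Remark~\ref{conerem} optional. Similarly, your direct substitution into (\ref{om}) for the Laplacian leaf case---including the parity bookkeeping showing the value is $-\frac{n-2}{n}$ for odd $n$ but has modulus $\frac{n-1}{n+1}$ for even $n$, so the hypothesis that $n$ be odd cannot be dropped---is a self-contained alternative to the paper's citation of Theorem~\ref{sedom}; both are valid, and yours additionally explains why the hypothesis is sharp.
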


\begin{proof}
The first statement follows from Proposition \ref{treeprop}, while the second one is obtained by by noting that $\lvert V(X)\backslash T\rvert=1$ and applying Corollary~\ref{sedF}(2). To prove (1a), one can use the fact that $\sigma_u(A)=\{0,\pm\sqrt{n}\}$ to show that (\ref{eureka3}) and (\ref{eureka333}) hold if and only if $t_1=\frac{j\pi}{\sqrt{n}}$ for odd $j$. Thus, equality holds in (\ref{UHH}) in Theorem~\ref{sed}, which yields the first statement of (1a). As $K_{1,n}$ is a cone on a $0$-regular graph, the second statement follows from Remark~\ref{conerem}. Finally, since $K_{1,n}=O_n\vee K_1$ with $T=V(O_n)$, Theorem~\ref{sedom} yields the first statement of (2), while the second follows by noting that $K_{1,n}=O_n\vee K_1$ and applying Theorem~\ref{sedL1}.
\end{proof}

Proposition \ref{starsed}(2) implies that the apex of a cone on any simple positively weighted graph $X$ on $n$ vertices is Laplacian tightly $(1-\frac{2}{n+1})$- sedentary. Next, we use Proposition \ref{starsed} to create more sedentary families using Cartesian products.
 
\begin{example}
\label{stars}
Let $k\geq 1$, $n\geq 3$ and $Z_{k,n}$ be the Cartesian product of $K_{1,n}$ with itself $k$ times. Let $u$ be a leaf of $K_{1,n}$. By Proposition \ref{starsed}, $\lvert U_{Z_{k,n}}(t)_{(u,\ldots,u),(u,\ldots,u)}\rvert\geq (1-\frac{2}{n})^k$ for all $t$ with respect to $M$. Consider the families $\mathscr{F}_1=\{Z_{k,n}:k\ \text{fixed}\}$, $\mathscr{F}_2=\{Z_{k,n}:n=\lfloor mk\rfloor \ \text{for some}\ m>0\}$ and $\mathscr{F}_3=\{Z_{k,n}:n\ \text{fixed}\}$.
\begin{enumerate}
\item As $n$ increases, $(1-\frac{2}{n})^k\rightarrow 1$ in $\mathscr{F}_1$ and $(1-\frac{2}{n})^k\rightarrow 1/\sqrt[m]{e^2}$ in $\mathscr{F}_2$. Thus, $\mathscr{F}_1$ is sedentary while $\mathscr{F}_2$ is $1/\sqrt[m]{e^2}$-sedentary at $(u,\ldots,u)$. If $M=A$, then the sedentariness of $\mathscr{F}_1$ and $\mathscr{F}_2$ is tight by Proposition \ref{starsed}(1). If $M=L$, then the sedentariness of the subfamilies of $\mathscr{F}_1$ and $\mathscr{F}_2$ is tight whenever $n$ in each $Z_{k,n}$ is odd by virtue of Proposition \ref{starsed}(2).
\item Since $(1-\frac{2}{n})^k\rightarrow 0$ as $k\rightarrow\infty$, $\mathscr{F}_3$ is quasi-sedentary by Proposition \ref{starsed}.
\end{enumerate}
\end{example}

If $u$ in Example~\ref{stars} is instead the degree $n$ vertex of $K_{1,n}$, then $\lvert U_{Z_{k,n}}(t)_{(u,\ldots,u),(u,\ldots,u)}\rvert \geq (1-\frac{2}{n+1})^k$ for all $t$ with respect to $L$ by Proposition \ref{starsed}(2). Thus, $\mathscr{F}_1$ is sedentary, $\mathscr{F}_2$ is $1/\sqrt[m]{e^2}$-sedentary, and $\mathscr{F}_3$ is quasi-sedentary at $(u,\ldots,u)$ with respect to $L$. Now, this does not hold for $A$ by Proposition \ref{starsed}, and so we get a family that is sedentary with respect to $L$ but not to $A$. By Theorems \ref{sedL} and \ref{sedA}, one may construct a family that is sedentary with respect to $A$ but not to $L$ by taking the family of disconnected double cones on $d$-regular graphs on $n\equiv 2$ (mod 4) vertices satisfying condition (2) of Theorem~\ref{sedA}.

We also note that if we replace the $Z_{k,n}$'s in the above example by the Hamming graphs $H(k,n)$, then $\mathscr{F}_1$ is tightly sedentary (which we already know by Corollary~\ref{ham}), $\mathscr{F}_2$ is tightly $1/\sqrt[m]{e^2}$-sedentary at $(u,\ldots,u)$, and $\mathscr{F}_3$ is quasi-sedentary. Moreover, this applies to any $M$ because each $H(k,n)$ is regular.

A \textit{double star} $S_{k,\ell}$ is a tree resulting from attaching $k$ and $\ell$ pendent vertices to the vertices of $K_2$. Like the central vertex of $K_{1,n}$, we show that an internal vertex of $S_{k,k}$ is not sedentary whenever $M=A$.

\begin{theorem}
\label{doublestars}
Let $k\geq 1$, and consider a double star $S_{k,k}$ with internal vertices $u$ and $v$.
\begin{enumerate}
\item Let $4k+1$ be a perfect square, and let $w$ be a leaf of $S_{k,k}$. If $k=2$, then $\lvert U_A(t)_{w,w}\rvert\geq \frac{1}{4}$, with equality whenever $t=\frac{j\pi}{3}$, where $j\equiv 2,4$ (mod 6). If $k>2$, then $\lvert U_A(t)_{w,w}\rvert\geq 1-\frac{2}{k}$, with equality whenever $t=j\pi$ for an integer $j$ such that $j\sqrt{4k+1}\equiv 3$ (mod 4).
\item For all $k\geq 1$, $u$ and $v$ are not sedentary in $S_{k,k}$ with respect to the adjacency matrix. 
\end{enumerate}
\end{theorem}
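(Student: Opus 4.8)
I would target the principal claim, statement~(2), that the two internal vertices $u,v$ of $S_{k,k}$ are not sedentary for the adjacency matrix; statement~(1) I treat afterward as an application of Theorem~\ref{sed}. Write $a_1,\dots,a_k$ and $b_1,\dots,b_k$ for the leaves adjacent to $u$ and to $v$. The plan is first to pin down $\sigma_u(A)$ and the diagonal idempotent entries $(E_\lambda)_{u,u}$ exactly, and then to run a short analytic argument. The partition $\{u\},\{v\},\{a_i\},\{b_i\}$ is equitable, and the automorphism swapping the two stars ($u\leftrightarrow v$, $a_i\leftrightarrow b_i$) splits $A$ into a symmetric and an antisymmetric part; solving the two resulting $2\times 2$ eigenvalue problems gives $\sigma_u(A)=\{\pm\tfrac{r+1}{2},\pm\tfrac{r-1}{2}\}$ with $r=\sqrt{4k+1}$, the symmetric eigenvalues $\tfrac{1\pm r}{2}$ lying in $\sigma^{+}_{uv}(A)$ and the antisymmetric ones $\tfrac{-1\pm r}{2}$ in $\sigma^{-}_{uv}(A)$. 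The one point that needs care here is that $0\notin\sigma_u(A)$: for $k\ge 2$ the value $0$ is an eigenvalue of $A$, but its eigenspace is spanned by the leaf twin-differences $\be_{a_i}-\be_{a_j}$ and $\be_{b_i}-\be_{b_j}$, all of which vanish at $u$, so $E_0\be_u=\mathbf 0$.

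Since $S_{k,k}$ is bipartite with $u$ in one part, negating the coordinates of the other part turns a $\lambda$-eigenvector into a $(-\lambda)$-eigenvector without changing the $u$-entry, whence $(E_\lambda)_{u,u}=(E_{-\lambda})_{u,u}$; a direct computation with the four normalized eigenvectors gives $(E_{\pm(r+1)/2})_{u,u}=\tfrac{r+1}{4r}$ and $(E_{\pm(r-1)/2})_{u,u}=\tfrac{r-1}{4r}$ (which sum to $1$, as they must). Collapsing the conjugate pairs $\pm\lambda$ into cosines then yields the real-valued expression
\[
U_A(t)_{u,u}=\frac{r+1}{2r}\cos\!\bigl(\tfrac{r+1}{2}\,t\bigr)+\frac{r-1}{2r}\cos\!\bigl(\tfrac{r-1}{2}\,t\bigr)=:g(t),\qquad g(0)=1 .
\]

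The decisive observation is that $g$ is essentially the derivative of a bounded function. Setting $F(t)=\sin\!\bigl(\tfrac{r+1}{2}t\bigr)+\sin\!\bigl(\tfrac{r-1}{2}t\bigr)$, one has $F'(t)=r\,g(t)$ while $\lvert F(t)\rvert\le 2$ for every $t$. Consequently $g$ cannot be bounded below by a positive constant: if $g(t)\ge c>0$ for all $t>0$, then $F(t)-F(0)=\int_0^t r\,g\ge rct\to\infty$, contradicting $\lvert F\rvert\le 2$. Hence $\inf_{t>0}g(t)\le 0$, and since $g$ is continuous with $g(0)=1>0$, this forces $\inf_{t>0}\lvert g(t)\rvert=0$ — either $g$ changes sign and the intermediate value theorem supplies a zero, or $\inf_{t>0}g(t)$ is already $0$. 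In both cases $u$ (and hence $v$, being cospectral with $u$) is not sedentary. I regard this antiderivative argument as the crux: it settles every $k$ at once and avoids the case split between $4k+1$ a perfect square (where $g$ is periodic) and $4k+1$ not a perfect square (where $g$ is only almost periodic and a Kronecker/equidistribution argument would otherwise be needed to drive $g$ below $0$).

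For statement~(1) the approach is to apply Theorem~\ref{sed} to the set of twin leaves $a_1,\dots,a_k$, which already gives the bound $1-\tfrac2k$ for $k\ge 3$, and then to locate the candidate equality times by imposing the equality condition~(\ref{eureka333}) with $S=\{\theta\}=\{0\}$ on $\sigma_w(A)=\{0,\pm\tfrac{r+1}{2},\pm\tfrac{r-1}{2}\}$; this reduces to a finite parity/congruence condition on $j$ at $t=j\pi$ once $r=\sqrt{4k+1}$ is an odd integer, with the degenerate value $k=2$ (where $1-\tfrac2k=0$) handled by reading off $\min_t\lvert U_A(t)_{w,w}\rvert$ from the explicit cosine expression directly. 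The analytic heavy lifting is entirely in~(2); part~(1) is bookkeeping on top of Theorem~\ref{sed}.
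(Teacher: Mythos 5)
Your proof of part (2) is correct, and it takes a genuinely different route from the paper's. The paper proves (2) in two cases: when $4k+1$ is not a perfect square it invokes an external result (pretty good state transfer between $u$ and $v$, due to Fan and Godsil) and rules out sedentariness via Proposition~\ref{prop}(1c); when $4k+1$ is a perfect square it observes that the periodic function in (\ref{doubs}) takes both signs on $[0,2\pi]$ and applies the intermediate value theorem. Your eigendata agree with the paper's --- formula (\ref{doubs}) collapses exactly to your $g(t)=\frac{r+1}{2r}\cos(\frac{r+1}{2}t)+\frac{r-1}{2r}\cos(\frac{r-1}{2}t)$ --- and your antiderivative argument ($rg=F'$ with $F(t)=\sin(\frac{r+1}{2}t)+\sin(\frac{r-1}{2}t)$ bounded, so $g$ cannot stay above a positive constant; then either a sign change hands you a zero by IVT, or $\inf_{t>0}g=0$ directly) settles every $k$ at once. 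What this buys is real: no case split on whether $4k+1$ is a square, no appeal to the PGST literature, and a proof of the sign behaviour that the paper's version merely asserts as an observation. The handling of $0\notin\sigma_u(A)$ and the cospectrality of $u$ and $v$ are also correct.

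Part (1) for $k>2$ is where your plan breaks down --- and, carried out honestly, it refutes the stated equality rather than confirming it. With $S=\{0\}$, condition (\ref{eureka333}) demands $e^{it_1\lambda}=-1$ simultaneously for all four nonzero $\lambda\in\sigma_w(A)$. But $\frac{r+1}{2}$ and $\frac{r-1}{2}$ are consecutive integers: dividing $e^{it_1(r+1)/2}=-1$ by $e^{it_1(r-1)/2}=-1$ forces $e^{it_1}=1$, i.e.\ $t_1\in 2\pi\mathbb{Z}$, and then $e^{it_1(r-1)/2}=1\neq -1$. So the ``finite parity/congruence condition'' you defer to is unsatisfiable. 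The same obstruction shows that equality in $\lvert U_A(t)_{w,w}\rvert\geq 1-\frac{2}{k}$ is never attained for $k>2$: by (\ref{Ustar1}), equality requires $\cos\bigl(t\tfrac{r+1}{2}\bigr)=\cos\bigl(t\tfrac{r-1}{2}\bigr)=-1$ at the same $t$. At the theorem's claimed times $t=j\pi$ with $j\sqrt{4k+1}\equiv 3 \pmod 4$, exactly one of the two cosines equals $+1$ and the other $-1$ (their arguments sum to the odd integer $jr$ times $\pi$), giving $\lvert U_A(t)_{w,w}\rvert=1-\frac{1}{k}\pm\frac{1}{kr}$, not $1-\frac{2}{k}$; for instance $k=6$, $r=5$, $t=3\pi$ yields $\frac{13}{15}$, whereas $1-\frac{2}{k}=\frac{2}{3}$. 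The inequality itself is fine (it is Theorem~\ref{sed}), and the $k=2$ claim is fine (direct minimization over $c=\cos t$ gives $\frac{c^2+c+1}{3}\geq\frac14$), but the equality assertion for $k>2$ is an error in the statement: the paper's own proof only says ``one can check using (\ref{Ustar1})'', and that check fails. So the gap in your part (1) is not one you can close; what your method actually proves is the inequality together with the non-attainment of equality, which contradicts the theorem as written.
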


\begin{proof}
Suppose we index the vertices of $A(S_{k,\ell})$ starting with the $k$ leaves attached to $u$, followed by $u$ and $v$, and then the $k$ leaves attached to $v$. Then we can write $A(S_{k,k})=\left[\begin{array}{cccc} A(K_{1,k})&Y\\ Y^T&A(K_{1,k})\end{array} \right]$, where $A(K_{1,k})=\left[\begin{array}{cccc} \textbf{0}_{k}&\textbf{1} \\ \textbf{1}^T&0 \end{array} \right]$ and $Y=\left[\begin{array}{cccc} \textbf{0}&\textbf{0}_k \\ 1&\textbf{0}\end{array} \right]$. Thus, $\textbf{e}_1-\textbf{e}_j$ for $j=1,\ldots,k$ and $\textbf{e}_{k+2}-\textbf{e}_j$ for $j=k+3,\ldots,2k+2$ are eigenvectors for $A(S_{k,k})$ corresponding to the eigenvalue 0. Moreover,
\begin{equation}
\label{evals}
\hspace{-0.15in} \lambda_1=-\frac{1}{2}(1+\sqrt{4k+1}), \lambda_2=\frac{1}{2}(-1+\sqrt{4k+1}), \lambda_3=\frac{1}{2}(1-\sqrt{4k+1})\ \ \text{and}\ \ \lambda_4=\frac{1}{2}(1+\sqrt{4k+1})
\end{equation}
are simple eigenvalues of $A(S_{k,k})$ resp.\ with eigenvectors $\textbf{v}_1=\left[-\textbf{1},\frac{1+\sqrt{4k+1}}{2},-\frac{1+\sqrt{4k+1}}{2},\textbf{1}\right]$, $\textbf{v}_2=\left[-\textbf{1},\frac{1-\sqrt{4k+1}}{2},\frac{-1+\sqrt{4k+1}}{2},\textbf{1}\right]$, $\textbf{v}_3=\left[\textbf{1},\frac{1-\sqrt{4k+1}}{2},\frac{1-\sqrt{4k+1}}{2},\textbf{1}\right]$ and $\textbf{v}_4=\left[\textbf{1},\frac{1+\sqrt{4k+1}}{2},\frac{1+\sqrt{4k+1}}{2},\textbf{1}\right]$. Thus, $E_0= I_k-\frac{1}{k}\textbf{J}_k\oplus O_2\oplus  I_k-\frac{1}{k}\textbf{J}_k$, where $A\oplus B$ is the direct sum of matrices $A$ and $B$, and  $E_{\lambda_j}=\frac{1}{\|\textbf{v}_j\|^2}\textbf{v}_j^T\textbf{v}_j$ for $j\in\{1,2,3,4\}$. Thus, $\sigma_u(A)=\{\lambda_1,\lambda_2,\lambda_3,\lambda_4\}$ and $\sigma_w(A)=\{0\}\cup \sigma_u(A)$ for any leaf $w$ of $S_{k,k}$. Using spectral decomposition and the fact that $\lambda_1=-\lambda_4$ and $\lambda_2=-\lambda_3$ yields
\begin{equation}
\label{doubs}
\begin{split}
U_A(t)_{u,u}&=\frac{(1+\sqrt{4k+1})^2\cos(t\lambda_1)}{2(4k+1+\sqrt{4k+1})}+\frac{(1-\sqrt{4k+1})^2\cos(t\lambda_2)}{2(4k+1-\sqrt{4k+1})}
\end{split}
\end{equation}
and
\begin{equation}
\label{Ustar1}
\begin{split}
U_A(t)_{w,w}&=\frac{k-1}{k}+\frac{2\cos(t\lambda_1)}{4k+1+\sqrt{4k+1}}+\frac{2\cos(t\lambda_2)}{4k+1-\sqrt{4k+1}}.
\end{split}
\end{equation}
Let $4k+1$ be a perfect square. Note that the first statement of (1) can be easily verified using (\ref{Ustar1}). Since Theorem~\ref{sed} yields $\lvert U_A(t)_{w,w}\rvert \geq 1-\frac{2}{k}$ for all $t$, one can check using (\ref{Ustar1}) that indeed, $\lvert U_A(t_1)_{w,w}\rvert=1-\frac{2}{k}$ whenever $t_1=j\pi$, where $j$ is an integer such that $j\ell\equiv 3$ (mod 4). This proves (1). To prove (2), it suffices to check the case when $4k+1$ is a perfect square by Proposition \ref{prop}(1c) and \cite[Theorem 5.3]{Fan2013}. Observe that $U_A(t)_{u,u}$ in (\ref{doubs}) is a real valued continuous function that has positive and negative values as $t$ ranges across $[0,2\pi]$. By IVT, there exists a $t\in [0,2\pi]$ such that $U_A(t)_{u,u}=0$, i.e., $u$ is not sedentary.
\end{proof}

\begin{corollary}
\label{doublestars1}
Let $u$ be a vertex of $S_{k,\ell}$ with $\operatorname{deg}(u)=k+1$. The following hold for $M=A$.
\begin{enumerate}
\item If $k=2$, then the leaves attached to $u$ in $S_{2,\ell}$ are sedentary if and only if $\ell=2$. In particular, if $k=\ell=2$, then all leaves in $S_{2,2}$ are tightly $\frac{1}{4}$-sedentary. 
\item If $k\geq 3$ and the nonzero eigenvalues of $A(S_{k,\ell})$ are linearly independent over $\mathbb{Q}$, then the leaves attached to $u$ are sharply $(1-\frac{2}{k})$-sedentary. In particular, if $4k+1$ is not a perfect square, then all leaves of $S_{k,k}$ are sharply $(1-\frac{2}{k})$-sedentary.
\end{enumerate}
\end{corollary}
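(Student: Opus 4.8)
The plan is to handle both parts by exploiting that the $k$ leaves adjacent to $u$ form a set of twins $T$ of size $k$. Since these leaves are pairwise non-adjacent and loopless, Lemma~\ref{alphabeta} identifies the twin eigenvalue as $\theta=0$, and taking $S=\{0\}$ in Theorem~\ref{sed} gives $a=(E_0)_{w,w}=1-\tfrac1k$ for every such leaf $w$ (the $0$-eigenspace restricted to the $T$-block is exactly $I_k-\tfrac1k\mathbf J_k$). Thus Theorem~\ref{sed} already delivers $\lvert U_A(t)_{w,w}\rvert\ge 1-\tfrac2k$ for all $t$, and the entire content of the corollary is a \emph{sharpness} statement: whether $\inf_t\lvert U_A(t)_{w,w}\rvert$ equals the lower bound $1-\tfrac2k$ (for $k\ge 3$) or collapses to $0$ (for $k=2$, $\ell\ne 2$).

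First I would pin down $\sigma_w(A)$ through the equitable partition $\{\text{leaves of }u\},\{u\},\{v\},\{\text{leaves of }v\}$, whose quotient matrix has characteristic polynomial $\lambda^4-(k+\ell+1)\lambda^2+k\ell$. Its roots form two sign-pairs $\pm\mu_1,\pm\mu_2$ with $\mu_1^2>\mu_2^2>0$; since the component of a quotient eigenvector on the leaves of $u$ vanishes only when $\mu^2=\ell$, which never solves the characteristic equation for $\ell\ge1$, all four lie in the support and $\sigma_w(A)=\{0,\pm\mu_1,\pm\mu_2\}$ consists of five distinct eigenvalues. Bipartiteness gives $(E_{\mu_i})_{w,w}=(E_{-\mu_i})_{w,w}$, so $\sigma_w(A)\setminus S=\{\pm\mu_1,\pm\mu_2\}$ is precisely the set of nonzero eigenvalues appearing in the hypothesis.

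For part (2), $k\ge3$ makes $a=1-\tfrac1k>\tfrac12$, so I would invoke the sharpness clause of Corollary~\ref{singleton}(1a), which reduces the problem to checking the hypothesis of Lemma~\ref{eurekarem2} with $S=\{0\}$. Any integer relation among the eigenvalues with vanishing coefficient sum reads $(c_{\mu_1}-c_{-\mu_1})\mu_1+(c_{\mu_2}-c_{-\mu_2})\mu_2=0$; $\mathbb Q$-linear independence of $\mu_1,\mu_2$ forces $c_{\mu_1}=c_{-\mu_1}$ and $c_{\mu_2}=c_{-\mu_2}$, whereupon the coefficient of $0$ equals $-2(c_{\mu_1}+c_{\mu_2})$ and is even. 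This is exactly the parity condition of Lemma~\ref{eurekarem2}, so a sequence $t_n$ with $\lvert U_A(t_n)_{w,w}\rvert\to2a-1=1-\tfrac2k$ exists and the sedentariness is sharp. The ``in particular'' clause follows because for $S_{k,k}$ one has $\mu_1=\tfrac12(1+\sqrt{4k+1})$ and $\mu_2=\tfrac12(\sqrt{4k+1}-1)$, which are $\mathbb Q$-independent exactly when $\sqrt{4k+1}$ is irrational, i.e.\ $4k+1$ is not a perfect square.

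For part (1), $k=2$ forces $a=\tfrac12$, so Theorem~\ref{sed} is vacuous and I would argue directly, the key input being number-theoretic. Here $\mu_1^2,\mu_2^2=\tfrac12\bigl(\ell+3\pm\sqrt{(\ell-1)^2+8}\bigr)$, and $\mu_1^2/\mu_2^2$ is irrational whenever $(\ell-1)^2+8$ is not a perfect square; solving $(\ell-1)^2+8=m^2$ as $(m-\ell+1)(m+\ell-1)=8$ over the positive integers shows this fails only for $\ell=2$. When $\ell=2$ the graph is $S_{2,2}$ and Theorem~\ref{doublestars}(1) gives tight $\tfrac14$-sedentariness of every leaf. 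When $\ell\ne2$, irrationality of $\mu_1^2/\mu_2^2$ forces irrationality of $\mu_1/\mu_2$, hence $\mathbb Q$-independence of $\mu_1,\mu_2$, and rerunning the verification above (now with $a=\tfrac12$) yields a sequence with $\lvert U_A(t_n)_{w,w}\rvert\to 0$; Corollary~\ref{singleton}(1b) then shows non-sedentariness, completing the equivalence. I expect the main obstacle to be this diophantine step together with the correct reading of ``nonzero eigenvalues linearly independent over $\mathbb Q$'': since the $\pm$ pairing precludes literal independence of all four, the statement must be understood as $\mathbb Q$-independence of $\mu_1$ and $\mu_2$, and the proof should make this explicit.
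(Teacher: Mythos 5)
Your proposal is correct. On part (2) you follow essentially the paper's route: take $S=\{0\}$, get $a=(E_0)_{w,w}=1-\frac{1}{k}$ from the twin structure, and obtain sharpness from Lemma~\ref{eurekarem2} via Corollary~\ref{singleton}(1a). Your treatment is in fact more careful than the paper's: the paper concludes from its hypothesis that the integer coefficients $\ell_j$ all vanish (hence $m_j=0$), but, as you observe, the nonzero adjacency eigenvalues of the bipartite double star come in pairs $\pm\mu_1,\pm\mu_2$ and so can never be literally linearly independent over $\mathbb{Q}$; your reinterpretation (independence of the positive eigenvalues $\mu_1,\mu_2$), together with the derivation $\ell_{\mu_i}=\ell_{-\mu_i}$, hence $\sum_j\ell_j$ even and $m_1=-\sum_j\ell_j$ even, is exactly the repair needed, and parity---not vanishing---is all that Lemma~\ref{eurekarem2} requires. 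On part (1) you genuinely depart from the paper. The paper handles $\ell\neq 2$ by citing Fan and Godsil's double-star theorem to get PGST between the two leaves attached to $u$ and then invoking Proposition~\ref{prop}(1c), leaving only $S_{2,2}$, which both you and the paper settle by Theorem~\ref{doublestars}(1) together with cospectrality and Proposition~\ref{prop}(1b). (As printed, the paper's proof says PGST occurs iff $\ell=2$, though its logic requires PGST precisely when $\ell\neq 2$; your derivation sidesteps this citation entirely.) You instead prove non-sedentariness for $\ell\neq 2$ inside the paper's own machinery: you compute the quotient spectrum of $S_{2,\ell}$, show via the factorization $(m-\ell+1)(m+\ell-1)=8$ that $(\ell-1)^2+8$ is a perfect square only for $\ell=2$, deduce $\mathbb{Q}$-independence of $\mu_1,\mu_2$, and then run Lemma~\ref{eurekarem2} with $a=\frac{1}{2}$ and Corollary~\ref{singleton}(1b) to force $\inf_{t>0}\lvert U_A(t)_{w,w}\rvert=0$. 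The paper's argument is shorter but rests on an external PGST result; yours is self-contained, makes the support $\sigma_w(A)=\{0,\pm\mu_1,\pm\mu_2\}$ explicit, and costs only a small diophantine computation. The two routes are close cousins: for strongly cospectral twins the parity condition you verify is exactly the PGST criterion, as the paper itself notes after Proposition~\ref{SCsed}.
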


\begin{proof}
If $k=2$, then PGST occurs between the leaves attached to $u$ if and only if $\ell=2$ \cite[Theorem 5.2]{Fan2013}. By Proposition \ref{prop}(1c), we only need to check $S_{2,2}$. By Theorem~\ref{doublestars}(1), a leaf attached to $u$ is tightly $\frac{1}{4}$-sedentary. As all leaves in $S_{2,2}$ are cospectral, Proposition \ref{prop}(1b) implies they are tightly $\frac{1}{4}$-sedentary. This proves (1). Now, assume the premise of (2). Take $S=\{0\}$ in Theorem~\ref{eureka} so that $(E_0)_{u,u}=a\geq \frac{1}{2}$, where $a=1-\frac{1}{k}$. If $m_j$ and $\ell_j$ are integers such that $\sum_{\lambda_j\neq 0}\ell_j\lambda_j=0$ and $m_j+\sum_{\lambda_j\neq 0}\ell_j=0$, then $m_j=0$. Invoking Lemma~\ref{eurekarem2} yields sharp $(1-\frac{2}{k})$-sedentariness at $u$. The last statement follows from the linear independence of the nonzero eigenvalues of $A(S_{k,k})$ when $4k+1$ is not a perfect square.
\end{proof}

It is natural to ask whether the internal vertices of $S_{k,\ell}$ are in general sedentary. We leave this as an open question.

\section{Other types of state transfer}\label{secOther}

By Proposition \ref{prop}(1c), a sedentary vertex cannot be involved in PGST. Hence, we ask, which types of state transfer can a sedentary vertex exhibit? Here, we show that there are sedentary families where each member graph exhibits proper fractional revival and local uniform mixing at a sedentary vertex.

Proper $(\alpha,\beta)$-\textit{fractional revival} (FR) occurs between $u$ and $v$ at time $t_1$ if $\alpha^2+\beta^2=1$, where $\alpha=\lvert U(t_1)_{u,u}\rvert$ and $\beta=\lvert U(t_1)_{u,v}\rvert\neq 0$. In \cite[Theorem 11]{Chan2020}, Chan et al.\ showed that proper Laplacian FR occurs between the apexes of $O_2\vee X$, where $X$ is a simple unweighted graph on $n$ vertices. Meanwhile, in \cite[Example 6.3]{Chan2019}, Chan et al.\ showed that $O_2\vee X$ exhibits proper Laplacian FR between its apexes, where $X$ is a simple unweighted $d$-regular graph on $n$ vertices. If $X$ is a simple positively weighted graph, it is shown in \cite{Monterde2023} that the apexes of $K_2\vee X$ do not admit proper Laplacian FR. Combining these facts with Corollaries \ref{corsedLLL}, \ref{corsedL} and \ref{corsedA}, we obtain families where each member graph exhibits (resp., does not exhibit) proper FR involving a sedentary vertex. This tells us that, unlike PGST, FR and sedentariness can occur together, although they do not always happen together.

\begin{example}
The following hold.
\begin{enumerate}
\item Each graph in the quasi-sedentary family of disconnected double cones in Corollary~\ref{corsedL} exhibits proper Laplacian FR between apexes. Moreover, each graph in the $C$-sedentary families of disconnected double cones in Corollary~\ref{corsedA} (1a-c) exhibits proper adjacency FR between apexes.
\item Each graph in the sedentary family of complete graphs on $n\geq 3$ vertices does not exhibit proper FR between any two vertices with respect to $A$ and $\mathcal{A}$. Moreover, each graph in the sedentary family of connected double cones in Corollary~\ref{corsedLLL} does not exhibit proper Laplacian FR between apexes.
\end{enumerate}
\end{example}

We say that $u$ admits \textit{(instantaneous) local uniform mixing} in $X$ at time $t_1$ if $\lvert U(t_1)_{u,v}\rvert=1/\sqrt{\lvert V(X)\rvert}$ for each vertex $v$ in $X$. We say that $X$ admits \textit{(instantaneous) uniform mixing} in $X$ at time $t_1$ if each vertex in $X$ admits local uniform mixing at time $t_1$.

\begin{proposition}
\label{prop1}
Let $0< C\leq 1$ and $\mathscr{F}$ be a $C$-sedentary family of graphs.
\begin{enumerate}
\item Almost all graphs in $\mathscr{F}$ do not exhibit local uniform mixing.
\item If the function $f$ in Definition \ref{defs} satisfies $f(\lvert V(X)\rvert)>\frac{1}{\sqrt{\lvert V(X)\rvert}}$ for all $X\in\mathscr{F}$, then each $X\in\mathscr{F}$ does not exhibit local uniform mixing at a sedentary vertex. 
\end{enumerate}
\end{proposition}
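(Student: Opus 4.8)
The plan is to reduce both parts to a single observation about the diagonal entry $U(t)_{u,u}$ at a putative local-uniform-mixing time. Suppose $u$ is a vertex of a graph $X$ on $n=\lvert V(X)\rvert$ vertices and that $u$ admits local uniform mixing at some time $t_1$. Taking $v=u$ in the defining condition $\lvert U(t_1)_{u,v}\rvert=1/\sqrt{n}$ forces $\lvert U(t_1)_{u,u}\rvert=1/\sqrt{n}$. Consequently, if $u$ is $c$-sedentary, so that $\lvert U(t)_{u,u}\rvert\geq c$ for all $t$, then evaluating at $t=t_1$ gives $c\leq 1/\sqrt{n}$. Equivalently, a $c$-sedentary vertex with $c>1/\sqrt{n}$ cannot admit local uniform mixing. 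This contrapositive is the engine for both parts.

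For part (2), I would simply apply this observation with $c=f(\lvert V(X)\rvert)$ to the designated sedentary vertex $u$ of each $X\in\mathscr{F}$. The hypothesis $f(\lvert V(X)\rvert)>1/\sqrt{\lvert V(X)\rvert}$ is exactly the condition $c>1/\sqrt{n}$, so $u$ cannot admit local uniform mixing; hence $X$ does not exhibit local uniform mixing at its sedentary vertex.

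For part (1), I would upgrade the hypothesis of part (2) from ``all $X$'' to ``all $X$ of sufficiently large order'' using the limiting behaviour of $f$. Since $f(s)\to C$ with $C>0$ while $1/\sqrt{s}\to 0$ as $s\to\infty$, there is a threshold $N$ with $f(s)>1/\sqrt{s}$ for every $s\geq N$. By the key observation, the sedentary vertex of any $X\in\mathscr{F}$ with $\lvert V(X)\rvert\geq N$ fails to admit local uniform mixing; thus the only members of $\mathscr{F}$ that could do so are those whose order lies below the fixed threshold $N$, which is what ``almost all'' means in this asymptotic setting. The positivity $C>0$ is used in an essential way: it is precisely what separates $f$ from $1/\sqrt{s}$ asymptotically, and it also explains why the conclusion excludes quasi-sedentary families (where $C=0$ and $f(s)$ may coincide with $1/\sqrt{s}$, so that local uniform mixing at the sedentary vertex can persist).

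I do not expect a genuine obstacle, since the argument is a direct unwinding of the two definitions through the $v=u$ diagonal condition. The only subtleties are bookkeeping: reading ``almost all'' as ``all of sufficiently large order,'' and reading ``exhibit local uniform mixing'' in part (1) as referring to the sedentary vertex singled out by Definition \ref{defs}. I would flag that sedentariness of $u$ bounds off-diagonal entries only through $\sup_{t>0}\lvert U_M(t)_{u,v}\rvert\leq\sqrt{1-c^2}$ (Proposition \ref{prop}(1d)), which does not and need not rule out local uniform mixing centred at a different, non-sedentary vertex; the clean contradiction arises solely from the diagonal entry at $u$ itself.
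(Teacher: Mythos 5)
Your proposal is correct and takes essentially the same approach as the paper: both hinge on the single observation that local uniform mixing at the sedentary vertex $u$ forces $\lvert U_M(t_1)_{u,u}\rvert = 1/\sqrt{\lvert V(X)\rvert}$, which contradicts the sedentary lower bound $f(\lvert V(X)\rvert)$ once $f(s)\to C>0$ outpaces $1/\sqrt{s}\to 0$ (part (1)) or once $f(\lvert V(X)\rvert)>1/\sqrt{\lvert V(X)\rvert}$ is assumed outright (part (2)). The paper simply presents the same argument in the reverse order, proving (1) first and dismissing (2) as straightforward, and it shares your reading that the mixing in question is centred at the designated sedentary vertex.
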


\begin{proof}
By assumption, for each $X\in \mathscr{F}$ and some vertex $u$ of $X$, we have $\lvert U_M(t)_{u,u}\rvert\geq f(\lvert V(X)\rvert)$ for all $t$, where $f(s)\rightarrow C>0$ as $s$ increases. Now, if $X\in \mathscr{F}$ admits local uniform mixing, then $\lvert U_M(t_1)_{u,u}\rvert=1/\sqrt{\lvert V(X)\rvert}$ for some time $t_1$. But since $C>0$ and $1/\sqrt{s}\rightarrow0$ as $s$ increases, only finitely many graphs in $\mathscr{F}$ can exhibit local uniform mixing. This proves (1), and (2) is straightforward.
\end{proof}

If $\mathscr{K}'$ is the family of complete graphs on $n\geq 5$ vertices, then from (\ref{K}), we may take $f$ such that $f(n)=1-\frac{2}{n}$. Since $f(n)>\frac{1}{\sqrt{n}}$ for all $n\geq 5$, no member of $\mathscr{K}'$ exhibits local uniform mixing by Proposition \ref{prop1}(2). Proposition \ref{prop1}(1), on the other hand, implies that only quasi-sedentary families exhibit local uniform mixing at a sedentary vertex as illustrated by our next examples.

\begin{example}
Let $\mathscr{F}$ be a family of cones on weighted $d$-regular graphs, where $0<d\leq 2$. Combining Proposition \ref{Godsil}(2) and \cite[Lemma 7.1]{SedQW}, we conclude that $\mathscr{F}$ is quasi-sedentary at the apex and each $X\in \mathscr{F}$ admits local uniform mixing at the apex with respect to $A$.
\end{example}

\begin{example}
Consider $Z_{k,3}$ in Example~\ref{stars}, which is a Cartesian power of $K_{1,3}$. This graph admits uniform mixing at $t_1=\frac{\pi}{3\sqrt{3}}$ \cite[Section 11]{Godsil2015UniformMO}, and so Example~\ref{stars}(2) implies that each graph in the quasi-sedentary family $\mathscr{F}=\{Z_{k,3}:k\geq 1\}$ admits adjacency uniform mixing.
\end{example}

Since Cartesian powers of $K_{3}$ admit uniform mixing at time $t_1=\frac{\pi}{9}$, the same result holds if we replace $Z_{k,3}$ in the previous example by $H(k,3)$. Moreover, this result applies to any $M$ because $H(k,3)$ is regular.

\section*{Acknowledgements}
I thank the University of Manitoba Faculty of Science and Faculty of Graduate Studies for the support. I thank Steve Kirkland, Sarah Plosker, Chris Godsil and Cristino Tamon for the helpful comments and useful discussions. I am also grateful to the referees for their suggestions that helped improve this paper.

\bibliographystyle{alpha}
\bibliography{mybibfile}
\end{document}